\def\coloneq{:=}
\renewcommand{\listofalgorithms}{
  \@ifundefined{ext@algorithm}{\float@error{algorithm}}{%
    \@namedef{l@algorithm}{\@dottedtocline{1}{1.5em}{2.3em}}%
    \float@listhead{\listalgorithmname}%
    \begingroup\setlength{\parskip}{\z@}%
    \@starttoc{\@nameuse{ext@algorithm}}%
    \endgroup}}
\newcommand{\pbox}[2][l]{%
  \begin{tabular}[c]{@{}#1@{}}#2\end{tabular}%
}
\newlist{equivlist}{enumerate}{2}
\setlist[equivlist]{label=(\arabic*),ref=(\arabic*)}
\newcommand{\N}{\mathbb{N}}
\newcommand{\K}{\mathbb{K}}
\newcommand{\KK}{\mathbb{K}}
\newcommand{\x}{\times} 
\newcommand{\ifnonempty}[3]{%
  \def\tempa{}%
  \def\tempb{#1}%
  \ifx\tempa\tempb 
  #3 
  \else            
  #2
  \fi}
\newcommand{\divides}{\mid}
\DeclareMathOperator*{\toAux}{\longrightarrow}
\renewcommand{\to}{\toAux\limits}
\renewcommand{\mapsto}{\longmapsto}
\renewcommand{\bar}[1]{\overline{#1}}
\renewcommand{\epsilon}{\varepsilon}
\renewcommand{\phi}{\varphi}
\renewcommand{\listalgorithmname}{Liste des algorithmes}
\theoremstyle{plain} 
\newtheorem{theorem}{Theorem}
\newtheorem*{theorem*}{Theorem}
\newtheorem{lemma}[theorem]{Lemma}
\theoremstyle{definition}
\newtheorem{definition}[theorem]{Definition}
\newtheorem{conjecture}[theorem]{Conjecture}
\theoremstyle{remark}
\newtheorem*{rmk}{Remark}
\newtheorem*{remark}{Remark}
\newcommand{\for}{\text{ for }}
\newcommand{\textin}{\text{ in }}
\newcommand{\textif}{\text{ if }} 
\newcommand{\otherwise}{\text{ otherwise }}
\newcommand{\WOne}{\mathds{1}}
\renewcommand{\hom}[2][]{
\hspace{0.1em}%
\mathrm{hom}_{#2}^{\ifnonempty{#1}{#1}{\phantom{-1}}}
\ifnonempty{#1}{\hspace{-0.1em}}{\hspace{-0.2em}}%
}
\newcommand{\F}[1]{
\ifmmode%
        \mathsf{F_{#1}}%
\else%
        \textsf{F}$_{\mathsf#1}$%
\fi\xspace}
\newcommand{\grevlex}{\textsc{GRevLex} }
\newcommand{\lex}{\textsc{Lex} }
\newcommand{\Wgrevlex}{$W$-\grevlex}
\newcommand{\ltgrevlex}{<_\text{grevlex}}
\newcommand{\ltWgrevlex}{<_\text{$W$-grevlex}}
\newcommand{\LT}{\mathsf{LT}}
\newcommand{\Spol}{S\text{-}\mathrm{pol}}
\newcommand{\HS}{\mathsf{HS}}
\newcommand{\HI}{i_{\mathrm{reg}}}
\newcommand{\dreg}[1][]{d_{\mathrm{reg}\ifnonempty{#1}{,#1}{}}}
\newcommand{\dmax}{d_\mathrm{max}}
\renewcommand{\mod}[1]{\,[#1]}
\newcommand{\ceil}[1]{\lceil#1\rceil}
\newcommand{\floor}[1]{\lfloor#1\rfloor}
\newcommand{\superscript}[1]{\ensuremath{^{\textrm{#1}}}}
\newenvironment{eqlist}[1][]{\enumerate[itemindent=*,%
  labelindent=\parindent,label=\textup{\textsf{(#1\arabic*)}},
  ref=#1\arabic*]}%
{\endenumerate}
\renewcommand{\mod}{\,\mathrm{mod}\,}
\newcommand{\Fext}{F_{\mathrm{ext}}}
\newcommand{\order}[1]{<_{#1}}
\newcommand{\Ocomp}[1]{\ensuremath{\operatorname{O}\left(#1\right)}}
\newcommand{\ocomp}[1]{\ensuremath{\operatorname{o}\left(#1\right)}}
\newcommand{\FGLM}{\textsf{FGLM}\xspace}
\tikzset{dot/.style={draw, circle, inner sep=1pt, fill}}
\tikzset{seriesLine/.style={draw}}
\tikzset{axis/.style={->,thick}}
\newcommand{\drawSeries}[5]
{%
  \xdef\xold{#3} \xdef\yold{#4} \foreach \x / \y
  [
  evaluate=\x as \xmin using {(\yold <= \y)*(\x) + (\yold >
    \y)*(\xold)}, evaluate=\y as \ymin using {min(\yold,\y)}, count=\i
  from #5] in #1 {%
    \node[dot] (#2\i ) at (\x ,\y ) {} ; \draw[seriesLine] (\xold
    ,\yold ) -- (\xmin,\ymin) -- (\x ,\y ); \xdef\xold{\x}
    \xdef\yold{\y} } }
\begin{document}

\begin{frontmatter}
  \title{On the complexity of computing Gröbner bases for weighted
    homogeneous systems}

  \def\labo{\superscript{a,b,c}}
  \def\iuf{\superscript{d}}
  \def\laboiuf{\superscript{a,b,c,d}}

  \author[upmc,cnrs,inria]{Jean-Charles Faugère}
  \ead{Jean-Charles.Faugere@inria.fr},
  \author[upmc,cnrs,inria,iuf]{Mohab Safey El Din}
  \ead{Mohab.Safey@lip6.fr},
  \author[upmc,cnrs,inria]{Thibaut Verron}
  \ead{Thibaut.Verron@lip6.fr}

  \address[upmc]{Sorbonne Universités, UPMC Univ Paris 06, 7606, LIP6,
    F-75005, Paris, France}
  \address[cnrs]{CNRS, UMR 7606, LIP6, F-75005, Paris, France}
  \address[inria]{Inria, Paris-Rocquencourt Center, PolSys Project}
  \address[iuf]{Institut Universitaire de France}


  \begin{abstract}
    Solving polynomial systems arising from applications is frequently
    made easier by the structure of the systems. Weighted homogeneity
    (or quasi-homogeneity) is one example of such a structure: given a
    system of weights $W=(w_{1},\dots,w_{n})$, $W$-homogeneous
    polynomials are polynomials which are homogeneous w.r.t the weighted
    degree
    $\deg_{W}(X_{1}^{\alpha_{1}}\dots X_{n}^{\alpha_{n}}) = \sum
    w_{i}\alpha_{i}$.
    
    Gröbner bases for weighted homogeneous systems can be computed by
    adapting existing algorithms for homogeneous systems to the weighted
    homogeneous case.  We show that in this case, the complexity
    estimate for Algorithm~\F5
    $\left(\binom{n+\dmax-1}{\dmax}^{\omega}\right)$ can be divided by a
    factor $\left( \prod w_{i} \right)^{\omega}$. For zero-dimensional
    systems, the complexity of Algorithm~\FGLM $nD^{\omega}$ (where $D$
    is the number of solutions of the system) can be divided by the same
    factor $\left( \prod w_{i} \right)^{\omega}$.
    Under genericity
    assumptions, for zero-dimensional weighted homogeneous systems of
    $W$-degree $(d_{1},\dots,d_{n})$, these complexity estimates are polynomial in
    the weighted Bézout bound
    $\prod_{i=1}^{n}d_{i} / \prod_{i=1}^{n}w_{i}$.

    Furthermore, the maximum degree reached in a run of Algorithm \F5 is
    bounded by the weighted Macaulay bound $\sum (d_{i}-w_{i}) + w_{n}$,
    and this bound is sharp if we can order the weights so that
    $w_{n}=1$. For overdetermined semi-regular systems, estimates from
    the homogeneous case can be adapted to the weighted case.

    We provide some experimental results based on systems arising from a
    cryptography problem and from polynomial inversion problems. They
    show that taking advantage of the weighted homogeneous structure
    can yield substantial speed-ups, and allows us to solve systems which
    were otherwise out of reach.
  \end{abstract}
\end{frontmatter}
\endNoHyper

\section{Introduction}
\label{sec:Introduction}
Algorithms for solving polynomial systems have become increasingly
important over the past years, because of the many situations where
such algebraic systems appear, including both theoretical problems
(algorithmic geometry, polynomial inversion\ldots), and real-life
applications (cryptography, robotics\ldots).
Examples of such algorithms include eigenvalues methods for systems with a finite number of solutions, or resultant calculations for polynomial elimination (see~\citep{DickensteinEmiris2010} for a survey).

The theory of Gröbner bases is another tool which has proved useful for this purpose, and many algorithms for computing Gröbner bases have been described since their introduction.
They include direct algorithms, computing the Gröbner basis of any system: to name only a few, the historical Buchberger algorithm~\citep{Buch76}, and later the Faugère \F4~\citep{F99a} and \F5~\citep{Fau02a} algorithms; as well as change of order algorithms, computing a Gröbner basis of an ideal from another Gröbner basis: the main examples are the \FGLM\ algorithm \citep{FGLM} for systems with a finite number of solutions, and the Gröbner walk \citep{Collart97} for the general case.

Systems arising from applications usually have some structure, which
makes the resolution easier than for generic systems.  In this paper, we
consider one such structure, namely \emph{weighted homogeneous}
polynomials: a polynomial $f(X_{1},\dots,X_{n})$ is weighted
homogeneous with respect to a system of weights
$W=(w_{1},\dots,w_{n})$ (or $W$-homogeneous) if and only if
$f(X_{1}^{w_{1}},\dots,X_{n}^{w_{n}})$ is homogeneous in the usual
sense.

Moreover, in order to obtain precise results, we will assume that
the systems satisfy some \emph{generic} properties, which are
satisfied by almost any system drawn at random.  This is a usual
assumption for Gröbner basis complexity estimates.  More generally, we
will also consider affine systems with a \emph{weighted homogeneous
structure}, that is systems whose component of maximal weighted degree
will satisfy these generic properties.

The complexity estimates given in this paper can be applied to a wide
range of Gröbner basis algorithms. However, we mainly focus on two
algorithms: Matrix-\F5, which is a matrix variant of \F5 described
in~\cite{BFS14}, allowing for complexity analyses, and
\FGLM.

\paragraph*{Prior work}
\label{sec:Previous-work}
The special case $W=(1,\dots,1)$ is the usual homogeneous case.  In
this case, all the results from this paper specialize to known results.
Furthermore, some hypotheses are always satisfied, making the
properties and definitions simpler. In particular, the description of
the Hilbert series of a homogeneous complete intersection is adapted
from~\cite{MS03}, and the asymptotics of the degree of regularity of a
semi-regular sequence were studied in~\cite{BFSY05}.

Weighted homogeneous systems have been studied before, from the angle
of singularity theory and commutative algebra.  In particular,
some results about the Hilbert series and the Hilbert function of weighted
homogeneous ideals, including the weighted Bézout
bound~\eqref{eq:intro-bezout}, can be found in most commutative
algebra textbooks. 

The computational strategy for systems with a weighted structure is not new either, for example it is already implemented (partially: only for weighted homogeneous systems with a degree order) in the computer algebra system Magma \citep{Magma}.
Additionally, the authors of~\cite{Tra1996} proposed another way of taking into account the weighted structure, by way of the Hilbert series of the ideal.
The authors of~\cite{CDR96} generalized this algorithm to systems homogeneous with respect to a multigraduation.
Their definition of a system of weights is more general than the one we use in the present paper.

To the best of our knowledge, nobody presented a formal description of
a computational strategy for systems with a weighted homogeneous
structure (not necessarily weighted homogeneous),
together with complexity estimates. 

Some of the results presented in this paper about regular sequences
previously appeared in a shorter conference paper \citep{FSV13}, of
which this paper is an extended version: these results are the weak form of the
weighted Macaulay bound~\eqref{eq:intro-weak-macaulay} and the formal
description of the algorithmic strategy for weighted homogeneous
systems, with the complexity estimates~\eqref{eq:intro-comp-F5}
and~\eqref{eq:intro-comp-FGLM}. This conference paper lacked a hypothesis
(reverse chain-divisible systems of weights), and as such lacked the
precise description of Hilbert series required to obtain results for
semi-regular sequences. The sharp variant of the weighted Macaulay
bound~\eqref{eq:intro-macaulay}, under the assumption of simultaneous
Noether position, was also added in the present paper. Finally, the
benchmarks section of the current paper contains additional systems,
arising in polynomial inversion problems.

The conference paper was using \emph{quasi-homogeneous} to
describe the studied structure, instead of \emph{weighted
  homogeneous}.  While both names exist in the literature,
\emph{weighted homogeneous} seems to be more common, and to better
convey the notion that this structure is a generalization of
homogenity, instead of an approximation.  The same notion is sometimes
also named simply \emph{homogeneous} (in which case the weights are
determined by the degree of the generators; see for
example~\cite{eisenbud95}), or homogeneous for a \emph{nonstandard
  graduation} \citep{dalzotto2006}.

\paragraph*{Main results}
\label{sec:Main-results}
By definition, weighted homogeneous polynomials can be made
homogeneous by raising all variables to their weight.  The resulting
system can then be solved using algorithms for homogeneous systems.
However, experimentally, it appears that solving such systems is much
faster than generic homogeneous systems.  In this paper, we show that
the complexity estimates for homogeneous systems, in case the system
was originally $W$-homogeneous, can be divided by
${(\prod w_{i})}^{\omega}$, where $\omega$ is the complexity exponent
of linear algebra operations ($\omega = 3$ for naive algorithms, such
as the Gauss algorithm).

These complexity estimates depend on two parameters of the system: its
\emph{degree of regularity} $\dreg$ and its \emph{degree}
$\deg(I)$. These parameters can be obtained from the \emph{Hilbert
  series} of the ideal, which can be precisely described under generic
assumptions.  To be more specific, we will consider systems defined by a
\emph{regular sequence} (Def.~\ref{def:regular}) and systems which are
in \emph{simultaneous Noether position} (Def.~\ref{def:snp}).

\begin{theorem*}
  \label{thm:intro-comp}
  Let $W=(w_{1},\dots,w_{n})$ be a system of weights, and $F=(f_{1},\dots,f_{m})$ a zero-dimensional $W$-homogeneous system of polynomials in $\KK[X_{1},\dots,X_{n}]$, with respective $W$-degree $d_{1},\dots,d_{m}$.
  The complexity (in terms of arithmetic operations in $\KK$) of Algorithm~\F5 to compute a \Wgrevlex Gröbner basis of $I \coloneq \langle F \rangle$ is bounded by
  \begin{equation}
    \label{eq:intro-comp-F5}
    C_{\F5} = \Ocomp{
      \frac{1}{{(\prod w_{i})}^{\omega}}
      \cdot \binom{n+\dreg-1}{\dreg}^{\omega}
    }.
  \end{equation}
  If $F$ is a regular sequence (and in particular $m=n$), then
  $\dreg$ can be bounded by the \emph{weighted Macaulay bound}:
  \begin{equation}
    \label{eq:intro-weak-macaulay}
    \dreg \leq \sum_{i=1}^{n} (d_{i} - w_{i}) + \max\{w_{j}\}.
  \end{equation}
  If additionally $F$ is in simultaneous Noether position w.r.t
  the order $X_{1} > \dots > X_{n}$, then the weighted Macaulay bound
  can be refined:
  \begin{equation}
    \label{eq:intro-macaulay}
    \dreg \leq \sum_{i=1}^{n} (d_{i} - w_{i}) + w_{n}.
  \end{equation}

  The complexity of Algorithm~\FGLM to perform a change of ordering is
  bounded by
  \begin{equation}
    \label{eq:intro-comp-FGLM}
    C_{\FGLM} = \Ocomp{n {\left(\mathrm{deg(I)}\right)}^{\omega}}.
  \end{equation}
  If $F$ forms a regular sequence, then $\deg(I)$ is given by the
  \emph{weighted Bézout bound}
  \begin{equation}
    \label{eq:intro-bezout}
    \deg(I) = \frac{\prod_{i=1}^{n}d_{i}}{\prod_{i=1}^{n}w_{i}}.
  \end{equation}
\end{theorem*}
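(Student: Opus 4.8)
The plan is to separate the three ingredients: the arithmetic cost of \F5, the degree data extracted from the Hilbert series, and the cost of \FGLM. Throughout I use the substitution $X_i\mapsto X_i^{w_i}$, which identifies a $W$-homogeneous polynomial of $W$-degree $d$ with an ordinary homogeneous polynomial of degree $d$, so that the known homogeneous results can be transported to the $W$-graded setting. For~\eqref{eq:intro-comp-F5} I would recall that Matrix-\F5 proceeds degree by degree and that its dominant operation at degree $d$ is a Gaussian elimination on a Macaulay matrix whose columns are indexed by the monomials of $W$-degree $d$; the governing cost is that of the top degree $d=\dreg$, namely $\Ocomp{N_{\dreg}^{\omega}}$ with $N_d$ the number of columns. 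The only place the weighted structure intervenes is that every polynomial manipulated is $W$-homogeneous, so $N_d = [t^d]\prod_{i=1}^{n}(1-t^{w_i})^{-1}$ instead of the coefficient $\binom{n+d-1}{d}=[t^d](1-t)^{-n}$ of the homogeneous case. The one genuine computation is the lattice-point count $\#\{\alpha\in\N^{n}:\sum_i w_i\alpha_i=d\}$, asymptotically equivalent to $\frac{1}{\prod w_i}\binom{n+d-1}{d}$ as $d\rightarrow\infty$; raising to the power $\omega$ and keeping the factor $(\prod w_i)^{-\omega}$ explicit rather than absorbing it into the $\Ocomp{\cdot}$ constant yields~\eqref{eq:intro-comp-F5}.

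The remaining bounds all flow from the Hilbert series of a $W$-homogeneous complete intersection. Since $\KK[X_1,\dots,X_n]$ graded by $W$-degree has Hilbert series $\prod_{i=1}^{n}(1-t^{w_i})^{-1}$, and multiplication by a $W$-homogeneous non-zero-divisor of $W$-degree $d_i$ contributes a factor $(1-t^{d_i})$ through its degree-shifting short exact sequence, an induction on the length of the regular sequence (Def.~\ref{def:regular}) gives
\begin{equation*}
  \HS_{\KK[X]/I}(t) = \prod_{i=1}^{n}\frac{1-t^{d_i}}{1-t^{w_i}}.
\end{equation*}
For a zero-dimensional ideal this series is a polynomial, so $\deg(I)=\dim_{\KK}\KK[X]/I$ equals its value at $t=1$; since $\lim_{t\rightarrow 1}(1-t^{d_i})/(1-t^{w_i}) = d_i/w_i$, this gives the weighted Bézout bound~\eqref{eq:intro-bezout}. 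The estimate~\eqref{eq:intro-comp-FGLM} is then simply the standard linear-algebra cost $\Ocomp{n\,\deg(I)^{\omega}}$ of \FGLM, the entire weighted speed-up being encapsulated in the smaller value of $\deg(I)$.

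For the weak Macaulay bound~\eqref{eq:intro-weak-macaulay} I would read off from the same series that $\HS_{\KK[X]/I}(t)$ is a polynomial of degree $\sum_{i=1}^{n}(d_i-w_i)$, whose leading coefficient is $1$ upon comparing the leading terms of the numerator $\prod(1-t^{d_i})$ and the denominator $\prod(1-t^{w_i})$. Hence the standard monomials of the \Wgrevlex staircase have $W$-degree at most $\sum(d_i-w_i)$. The concluding step is the elementary fact that every minimal generator of the ideal of leading monomials can be written $X_j\cdot m$ with $m$ a standard monomial, so its $W$-degree is at most $\sum(d_i-w_i)+\max_j w_j$; since $\dreg$ is the highest $W$-degree of such a minimal generator, this is~\eqref{eq:intro-weak-macaulay}.

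The main obstacle is the refinement~\eqref{eq:intro-macaulay}, which replaces $\max_j w_j$ by $w_n$ under simultaneous Noether position (Def.~\ref{def:snp}). Here I would use the flag of subsystems $\langle f_1,\dots,f_k\rangle$ provided by that hypothesis to run an induction on $n$: Noether position makes $\KK[X]/\langle f_1,\dots,f_{n-1}\rangle$ a finite free module over $\KK[X_n]$, so that its leading-monomial staircase is controlled by the $n-1$ remaining variables, and adjoining $f_n$ contributes a relation that is monic in $X_n$. The point to prove is that the minimal generator of highest $W$-degree is then obtained by multiplying a top standard monomial by the smallest variable $X_n$, raising the degree by exactly $w_n$ rather than by $\max_j w_j$; sharpness when $w_n=1$ follows by exhibiting such a top standard monomial whose product with $X_n$ is indeed a leading monomial. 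Guaranteeing that no heavier variable is forced to appear at the very top degree is the delicate part, and is precisely where simultaneous Noether position, rather than mere regularity, is needed.
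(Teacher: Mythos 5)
Four of the five assertions are handled essentially as the paper does: the \F5 cost via the Sylvester denumerant asymptotics $M_{W,d}\simeq\frac{1}{\prod w_i}\binom{n+d-1}{d}$ raised to the power $\omega$, the \FGLM cost as $\Ocomp{n\deg(I)^{\omega}}$, the weighted Bézout bound by evaluating the Hilbert series $\prod(1-T^{d_i})/\prod(1-T^{w_i})$ at $T=1$, and the weak Macaulay bound by combining the degree $\delta=\sum(d_i-w_i)$ of that series with the observation that a minimal generator of $\LT(I)$ is a variable times a standard monomial. These match the paper's arguments in sections~\ref{sec:Cons-comp} and~\ref{sec:Degr-regul-quasi}.

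The genuine gap is the refined bound~\eqref{eq:intro-macaulay}. You correctly identify that this is the hard part and that simultaneous Noether position must enter, but you stop at announcing ``the point to prove'' and calling it ``the delicate part'': no argument is actually given for why the top minimal generator gains only $w_n$ rather than $\max_j w_j$, and the free-module-over-$\KK[X_n]$ framing you propose is not developed into a proof. The paper's proof (Theorem~\ref{thm:dreg}) is a concrete double induction that you would need to reproduce in some form: by characterization~\ref{item:NP3} of Noether position, the truncated system $F^{\ast}=F(X_1,\dots,X_{n-1},0)$ (first $n-1$ polynomials) is again in SNP, giving by induction on $n$ a bound $\dreg^{\ast}\leq\delta^{\ast}+w_{n-1}$; then for any monomial $\mu$ of $W$-degree $>\delta+w_n$ one writes $\mu=X_n^{\alpha}\mu'$ and inducts on $\alpha$: if $\alpha>0$, the strict divisor $X_n^{\alpha-1}\mu'$ has $W$-degree $>\delta$ and hence already lies in $\LT(I)$ because the Hilbert series vanishes there; if $\alpha=0$, then $\mu\in\KK[X_1,\dots,X_{n-1}]$ has $W$-degree above $\dreg^{\ast}$ (this uses the auxiliary hypothesis $d_n\geq w_{n-1}$, which the paper assumes in Theorem~\ref{thm:dreg} and which your sketch ignores) and the induction on $n$ applies. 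Either case shows $\mu$ is \emph{strictly} divisible by an element of $\LT(I)$, so it cannot be a minimal generator. Without this (or an equivalent) argument the refinement from $\max_j w_j$ to $w_n$ — which is the one new contribution of the statement relative to the conference version — is unproved.
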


In particular, the bound~\eqref{eq:intro-macaulay} indicates that in
order to compute a Gröbner basis faster for a generic enough system,
one should order the variables by decreasing weights whenever
possible.

The hypotheses of the theorem are not too restrictive. In the
homogeneous case, regularity and simultaneous Noether position are
generic properties.  However, in the weighted homogeneous case, there
are systems of weights and systems of weighted degrees for which they
are not generic.  In this paper, we identify large families of systems
of weights and systems of weighted degrees for which they are
(Prop.~\ref{thm:gen-2}).

All sequences in simultaneous Noether position are regular.  In the
homogeneous case, conversely, all regular sequences are in
simultaneous Noether position up to a generic linear change of
coordinates.  In the weighted homogeneous case, it is no longer true.
Worse still, there are systems of weights for which there exists no
non-trivial change of coordinates.

In order to work around this limitation, we consider \emph{reverse
  chain-divisible} systems of weights, that is systems of weights such
that $w_{n} \divides w_{n-1} \divides \dots \divides w_{1}$.  This
property ensures that there are non-trivial change of coordinates of
the form $X_{i} \leftarrow X_{i} + P_{i}(X_{i+1},\dots,X_{n})$ for all
$i$, with $P_{i}$ a $W$-homogeneous polynomial with $W$-degree $w_{i}$.
Under this assumption, many properties from the homogeneous case
remain valid in a weighted setting, and in particular, any regular
sequence is, up to a $W$-homogeneous change of coordinates, in
simultaneous Noether position (Th.~\ref{thm:gen-linchg}).

For many systems from practical applications, the weights can be chosen to be reverse chain-divisible.
We give a few examples in the last section of this paper.


If $m>n$, there is no regular sequence.
Instead, we will consider systems defined by a \emph{semi-regular} sequence, that is systems for which no reduction to zero appear in a run of Algorithm~\F5.
This property has several equivalent definitions in the homogeneous case.
While these definitions can be easily extended to the weighted case, their equivalence is not necessarily true.
However, we prove that these definitions are equivalent in the special case where the weights form a reverse chain-divisible sequence.

In the homogeneous case, the property of being semi-regular is only
conjectured to be generic, but this conjecture is proved in a handful
of cases~(\citet[Thm.~1.5]{MS96}.  In this paper, we adapt the proof of
one of these cases, namely the case $m=n+1$ in a base field of
characteristic~$0$.

For semi-regular systems with $m=n+1$, we obtain a bound on the degree
of regularity of the system. More generally, in the homogeneous case,
one can compute asymptotic estimates on the degree of regularity of a
semi-regular sequence \citep{BFSY05,Bar04}.  These estimates can be
adapted to the weighted homogeneous case. As an example, we give an
asymptotic bound on the degree of regularity for semi-regular systems
with $m=n+k$ for a given integer $k$:
\begin{theorem*}
  \label{thm:intro-dreg}
  Let $n$ and $k$ be two positive integers, and let $m=n+k$.  Let
  $w_{0}$ and $d_{0}$ be two positive integers such that
  $w_{0} \divides d_{0}$.  Consider the system of $n$ weights
  $W = (w_{0},\dots,w_{0},1)$.  Let $F$ be a semi-regular sequence in
  $\KK[X_{1},\dots,X_{n}]$, made of $W$-homogeneous polynomials with
  $W$-degree $d_{0}$.  Then the highest degree reached in the
  computation of a \Wgrevlex Gröbner basis of $\langle F \rangle$ is
  asymptotically bounded by
  \begin{equation}
    \label{eq:intro-dreg-asymp}
    \dreg
    = n \,\frac{d_{0}-w_{0}}{2}
    - \alpha_{k}\sqrt{n\, \frac{d_{0}^{2}-w_{0}^{2}}{6}}
    + \Ocomp{n^{1/4}}.
  \end{equation}
  where $\alpha_{k}$ is the largest root of the $k$'th Hermite's
  polynomial.
\end{theorem*}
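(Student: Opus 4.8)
The plan is to reduce the whole statement to the \emph{homogeneous} asymptotics of \citep{BFSY05,Bar04} by a substitution tailored to the divisibility hypothesis $w_{0}\divides d_{0}$. First, since the system of weights $W=(w_{0},\dots,w_{0},1)$ is reverse chain-divisible, the characterization of semi-regular sequences available in that setting applies, and the highest degree $\dreg$ reached by Algorithm~\F5 equals the index of the first non-positive coefficient of the power series
\[
  S(t) = \frac{(1-t^{d_{0}})^{n+k}}{(1-t^{w_{0}})^{n-1}(1-t)} .
\]

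Next I would use $w_{0}\divides d_{0}$. Setting $e=d_{0}/w_{0}$ and multiplying through by $(1-t^{w_{0}})$, factor
\[
  S(t) = \frac{(1-t^{d_{0}})^{n+k}}{(1-t^{w_{0}})^{n}}\cdot\frac{1-t^{w_{0}}}{1-t}
       = \widehat{S}(t^{w_{0}})\cdot\bigl(1+t+\dots+t^{w_{0}-1}\bigr),
  \qquad
  \widehat{S}(s) = \frac{(1-s^{e})^{n+k}}{(1-s)^{n}} ,
\]
so that $\widehat{S}$ is precisely the Hilbert series of a homogeneous semi-regular sequence of $n+k$ forms of degree $e$ in $n$ variables. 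Writing $\widehat{S}(t^{w_{0}})=\sum_{j}\widehat{a}_{j}\,t^{w_{0}j}$, multiplication by $1+t+\dots+t^{w_{0}-1}$ spreads each coefficient $\widehat{a}_{j}$ over the block of degrees $w_{0}j,\dots,w_{0}j+w_{0}-1$, which yields the coefficient identity $[t^{N}]S=\widehat{a}_{\lfloor N/w_{0}\rfloor}$ for all $N$. Hence the signs of the coefficients of $S$ and of $\widehat{S}$ agree along corresponding blocks, and if $\delta$ denotes the first non-positive index of $\widehat{S}$ (the homogeneous degree of regularity), then the first non-positive coefficient of $S$ occurs at the least $N$ with $\lfloor N/w_{0}\rfloor=\delta$, namely $N=w_{0}\delta$. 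Therefore $\dreg=w_{0}\,\delta$.

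Finally, I would import the homogeneous estimate for $\delta$: for $n+k$ generic forms of degree $e$ in $n$ variables,
\[
  \delta = n\,\frac{e-1}{2} - \alpha_{k}\sqrt{n\,\frac{e^{2}-1}{6}} + \Ocomp{n^{1/4}} ,
\]
$\alpha_{k}$ being the largest root of the $k$-th Hermite polynomial. Substituting into $\dreg=w_{0}\delta$ and using $w_{0}e=d_{0}$ and $w_{0}^{2}(e^{2}-1)=d_{0}^{2}-w_{0}^{2}$ turns this term by term into
\[
  \dreg = n\,\frac{d_{0}-w_{0}}{2} - \alpha_{k}\sqrt{n\,\frac{d_{0}^{2}-w_{0}^{2}}{6}} + \Ocomp{n^{1/4}} ,
\]
the error $w_{0}\cdot\Ocomp{n^{1/4}}$ staying $\Ocomp{n^{1/4}}$ because $w_{0}$ is a fixed constant.

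The reduction itself is elementary, so the real work is pushed onto two points, which I expect to be the main obstacles. The first is ensuring the homogeneous degree-$e$ asymptotic is available in exactly the $m=n+k$ form above; if only lower-level results are at hand, one must redo the analytic argument directly on $S(t)$. That argument reads a Gaussian bulk of mean $n(d_{0}-w_{0})/2$ and variance $\sim n(d_{0}^{2}-w_{0}^{2})/12$ off the $n-1$ identical factors $\frac{1-t^{d_{0}}}{1-t^{w_{0}}}$ (each the generating function of $w_{0}$ times a uniform variable on $\{0,\dots,e-1\}$), and treats the remaining factor $(1-t^{d_{0}})^{k}$ as a $k$-th finite difference whose action on that profile produces the $k$-th Hermite polynomial through $\phi^{(k)}=(-1)^{k}He_{k}\,\phi$; the first sign change then lands at the largest root, at distance $\sqrt{2}\,\alpha_{k}\sigma$ from the mean, which is where the denominator $6=2\cdot 12$ originates. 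The second, genuinely delicate, obstacle is the uniform error control: obtaining the $\Ocomp{n^{1/4}}$ precision requires a local limit theorem (or saddle-point estimate) with explicit remainder, exactly as in \citep{BFSY05}, and one must check that the modified denominator $(1-t^{w_{0}})^{n-1}(1-t)$ does not degrade those estimates.
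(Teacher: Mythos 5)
Your proposal is correct and follows the same overall strategy as the paper — exploit $w_{0}\divides d_{0}$ to substitute $T\mapsto T^{w_{0}}$, reduce to the homogeneous asymptotics of \citep{BFSY05,Bar04}, and rescale by $w_{0}$ — but the decomposition you choose is genuinely different and buys something. The paper peels off one numerator factor, writing $H(T)=H^{\ast}(T^{w_{0}})\cdot\frac{1-T^{d_{0}}}{1-T}$ with $H^{\ast}(T)=\frac{(1-T^{d_{0}/w_{0}})^{m-1}}{(1-T)^{n-1}}$, i.e.\ it reduces to $(n-1)+k$ forms of degree $e=d_{0}/w_{0}$ in $n-1$ variables; each coefficient $a_{d}$ then becomes a \emph{sum} of about $e$ consecutive coefficients of $H^{\ast}$, and to locate the first non-positive coefficient the paper must invoke statement~\eqref{eq:92} of Theorem~\ref{thm:shape-HS-sr} (non-positivity over a whole window of length $d_{0}/w_{0}$ past $\delta^{\ast}$), which only sandwiches $\dreg$ in an interval of width $d_{0}$. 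You instead peel off the denominator factor, writing $S(t)=\widehat{S}(t^{w_{0}})\cdot(1+t+\dots+t^{w_{0}-1})$ with $\widehat{S}(s)=\frac{(1-s^{e})^{n+k}}{(1-s)^{n}}$, i.e.\ $n+k$ forms of degree $e$ in $n$ variables; then $[t^{N}]S=\widehat{a}_{\lfloor N/w_{0}\rfloor}$ is a \emph{single} coefficient, the first sign change is pinned down exactly at $N=w_{0}\delta$ with no appeal to the window statement, and the two reductions agree at the $\Ocomp{n^{1/4}}$ level since shifting $n$ by $1$ only perturbs the main terms by $\Ocomp{\sqrt{n}}$-free constants. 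Two minor caveats: your opening identification of $\dreg$ with the exact index of the first non-positive coefficient of $S$ is slightly stronger than what the paper establishes in the weighted setting (it only proves $w_{0}\delta^{\ast}<\dreg\leq w_{0}\delta^{\ast}+d_{0}$), but the discrepancy is $\Ocomp{1}$ and harmless here; and the homogeneous estimate you import is available in exactly the $m=n+k$ form you need, so the analytic reproof you sketch as a fallback is not required.
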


Experimentally, if we lift the assumption that the system of weights is reverse chain-divisible, the degree of regularity does not appear to rise too far beyond the bound.
Future work on the topic could include characterizing the Hilbert series of $W$-homogeneous semi-regular sequences in full generality, in order to obtain bounds on the $W$-degree of regularity.

In practice, taking advantage of the weighted structure when applicable yields significant speed-ups.
Some instance of a weighted structure has already been successfully exploited for an application in cryptography \citep{FGHR13}.
We also present timings obtained with several polynomial inversion problems, with speed-ups ranging from 1--2 to almost 100.
In particular, we use these techniques in order to compute the relations between fundamental invariants of several groups (see \citep{Sturmfels2008}).
For some groups such as the Cyclic-5 group or the dihedral group $D_{5}$, computing these relations is intractable without considering the weighted structure of the system, while it takes only a few seconds or minutes when exploiting the weighted structure.
All these systems are examples of applications where the weights giving the appropriate $W$-homogeneous structure are naturally reverse chain-divisible.
These experimentations have been carried using \F5 and \FGLM with the Gröbner basis library FGb~\citep{F10c} and \F4 with the computer algebra system Magma~\citep{Magma}.

There are other applications where Gröbner bases are computed for polynomial systems with a weighted-homogeneous structure, for example in coding theory, both for generating codes (\citet[sec.~5]{deBoerPellikaan1999}, \citep{Leonard2009}) and for decoding through Guruswami-Sudan's algorithm (see \citep{GuerriniRimoldi2009} for an overview).

\paragraph*{Organisation of the paper}
\label{sec:Organisation-paper}

In section~\ref{sec:Notat-prev-results}, we define weighted graded
algebras and some generic properties of weighted homogeneous systems. In
section~\ref{sec:Regular-systems}, we focus on regular systems and
complete intersections. We describe the Hilbert series of a weighted
homogeneous complete intersection and give the sharp variant of the
weighted Macaulay bound. In section~\ref{sec:Semi-regular-systems}, we
consider semi-regular systems. We give some equivalent definitions of
this property, and we show how asymptotic estimates of the degree of
regularity can be adapted from the homogeneous case to the weighted
case. Additionally, we prove that Fröberg's conjecture in the case
$m=n+1$ is true in the weighted case, as in the homogeneous case,
provided that the base field is large enough. In
section~\ref{sec:Cons-comp}, we describe strategies for computing
Gröbner bases for weighted homogeneous systems, and we give complexity
estimates for these strategies. Finally, in
section~\ref{sec:Applications}, we show how weighted structures can
appear in applications, and we give some benchmarks for each example.
 
\section{Definitions and genericity statements}
\label{sec:Notat-prev-results}

\subsection{Definitions}
\label{sec:Definitions}

Let $\K$ be a field.  We consider the algebra
$\K[X_{1},\dots,X_{n}] = \K[\mathbf{X}]$.  This algebra can be graded
with respect to a system of weights, as seen for example
in~\cite[sec.~10.2]{becker1993grobner}.

\begin{definition}
  Let $W = (w_{1},\dots,w_{n})$ be a vector of positive integers.
  Let $\alpha = (\alpha_{1}, \dots, \alpha_{n})$ be a vector of nonnegative integers.
  Let the integer $ \deg_{W}(\mathbf{X^{\alpha}}) = \sum_{i=1}^{n} w_{i}\alpha_{i}$ be the \hbox{\emph{$W$-degree}}, or \emph{weighted degree} of the monomial $\mathbf{X}^{\alpha} = X_{1}^{\alpha_{1}}\cdots X_{n}^{\alpha_{n}}$.
  We say that the vector $W$ is a \emph{system of weights}.
  We denote by $\mathbf{1}$ the system of weights defined by $(1,\dots,1)$, associated with the usual grading (in total degree) on $\KK[\mathbf{X}]$.
\end{definition}

Any grading on $\K[\mathbf{X}]$ comes from such a system of weights
\citep[sec.~10.2]{becker1993grobner}.  When working with a
$W$-graduation, to clear up any ambiguity, we use the adjective
\emph{$W$-homogeneous} for elements or ideals, or \emph{weighted
  homogeneous} if $W$ is clear in the context.  The word
\emph{homogeneous} will be reserved for $\mathbf{1}$-homogeneous
items.  The following property is an easy consequence of the
definition.

\begin{prop}
  Let $(\K[X_{1},\dots,X_{n}],W)$ be a graded polynomial algebra.
  Then the application
  \begin{equation}
    \label{eq:3}
    \begin{matrix}
      \hom{W} : & (\K[X_{1},\dots,X_{n}],W) & \to & (\K[t_{1},\dots,t_{n}],\mathbf{1}) \\
      & f & \mapsto & f(t_{1}^{w_{1}},\dots,t_{n}^{w_{n}})
    \end{matrix}
  \end{equation}
  is an injective graded morphism, and in particular the image of a
  weighted homogeneous polynomial is a homogeneous polynomial.
\end{prop}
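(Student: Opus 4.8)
The plan is to treat $\homW$ as the $\K$-algebra morphism determined by the substitution $X_i \mapsto t_i^{w_i}$, and then verify the three asserted properties (algebra morphism, graded, injective) in turn, deducing the final statement as a corollary of the graded property.

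First I would observe that $\homW$ is a morphism of $\K$-algebras essentially by construction. By the universal property of the polynomial ring $\K[X_1,\dots,X_n]$, any choice of images for the generators $X_1,\dots,X_n$ inside a commutative $\K$-algebra extends uniquely to a $\K$-algebra homomorphism; here we send $X_i$ to $t_i^{w_i}\in\K[t_1,\dots,t_n]$, so no compatibility with the ring operations needs to be checked by hand, and $\homW$ is automatically $\K$-linear and multiplicative.

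Next I would check that $\homW$ respects the gradings, and it suffices to argue on monomials by $\K$-linearity. The monomial $\mathbf{X}^{\alpha}$ has $W$-degree $\sum_{i=1}^n w_i\alpha_i$, while its image $t_1^{w_1\alpha_1}\cdots t_n^{w_n\alpha_n}$ has ordinary total degree exactly $\sum_{i=1}^n w_i\alpha_i$. Hence $\homW$ maps the $W$-degree-$d$ component of $\K[\mathbf{X}]$ into the degree-$d$ component of $\K[t_1,\dots,t_n]$, which is precisely the statement that $\homW$ is a graded morphism. The final assertion follows immediately: a $W$-homogeneous polynomial of $W$-degree $d$ is a $\K$-linear combination of monomials all of $W$-degree $d$, so its image is a combination of monomials all of ordinary degree $d$, i.e. a homogeneous polynomial of degree $d$.

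Finally I would treat injectivity, which I expect to be the only step requiring genuine attention, and the one where the hypothesis that the weights are \emph{positive} is used. The exponent reindexing $\alpha \mapsto (w_1\alpha_1,\dots,w_n\alpha_n)$ is injective on $\N^n$: if $\alpha \neq \beta$ then $\alpha_i \neq \beta_i$ for some $i$, and positivity of $w_i$ forces $w_i\alpha_i \neq w_i\beta_i$, so the image exponent vectors differ in coordinate $i$. Thus distinct monomials $\mathbf{X}^{\alpha}$ are sent to distinct monomials $t_1^{w_1\alpha_1}\cdots t_n^{w_n\alpha_n}$, which form a $\K$-linearly independent family. Writing $f=\sum_{\alpha} c_{\alpha}\mathbf{X}^{\alpha}$, its image $\sum_{\alpha} c_{\alpha}\, t_1^{w_1\alpha_1}\cdots t_n^{w_n\alpha_n}$ can vanish only if every $c_{\alpha}=0$, that is $f=0$. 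This is exactly the point where positivity is indispensable: a zero weight would collapse the corresponding variable and break injectivity, so the argument genuinely depends on $W$ being a system of positive integers.
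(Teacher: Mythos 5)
Your proof is correct and complete. The paper offers no proof of this proposition at all — it is introduced with the remark that it is \enquote{an easy consequence of the definition} — and your argument (universal property for the algebra morphism, degree bookkeeping on monomials for gradedness, injectivity of the exponent map $\alpha \mapsto (w_{1}\alpha_{1},\dots,w_{n}\alpha_{n})$ for injectivity, with the correct observation that positivity of the weights is exactly what is needed there) is the natural way to fill in those details.
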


The above morphism also provides a weighted variant of the \grevlex
ordering (as found for example in~\cite[10.2]{becker1993grobner}), called
the \Wgrevlex\ ordering:
\begin{equation}
  \label{eq:102}
  u \ltWgrevlex v
  \iff \hom{W}(u) \ltgrevlex \hom{W}(v).
\end{equation}
Given a $W$-homogeneous system $F$, one can build the homogeneous
system $\hom{W}(F)$, and then apply classical algorithms
\citep{Fau02a,FGLM} to that system to compute a \grevlex\
(resp. \lex{}) Gröbner basis of the ideal generated by $\hom{W}(F)$.

\begin{definition}
  The \emph{$W$-degree of regularity} of the system $F$ is the highest
  degree $\dreg[W](F)$ reached in a run of \F5 to compute a \grevlex{}
  Gröbner basis of $\hom{W}(F)$. When the graduation is clear in the
  context, we may call it degree of regularity, and denote it $\dreg$.
\end{definition}
\begin{rmk}
  Unlike what we could observe in the homogeneous case, this definition depends on the
  order of the variables (we shall give an example in Table~\ref{tab:dreg} in section~\ref{sec:Degr-regul-quasi}, and another, with timings, in Table~\ref{tab:dreg-timings} in section~\ref{sec:Generic-systems}).%
\end{rmk}

\begin{definition}
  Let $I$ be a zero-dimensional (not necessarily weighted homogeneous)
  ideal in $A=\KK[X_{1},\dots,X_{n}]$.  In that case, we define the
  degree $D$ of the ideal $I$ as the (finite) dimension of $A/I$, seen
  as a $\KK$-vector space:
  \begin{equation}
    \label{eq:101}
    D = \dim_{\KK}\left( A/I \right).
  \end{equation}
  Equivalently, if $\HS_{A/I}(T)$ is the Hilbert series (with respect
  to the $W$-graduation) of $I$, this
  series is a polynomial in $T$ and
  \begin{equation}
    \label{eq:75}
    D=\HS_{A/I}(1).
  \end{equation}
\end{definition}
\begin{rmk}
  This definition with the Hilbert series can be extended to ideals with positive dimension.
  However, in a weighted setup, varieties can end up having rational (not-necessarily integer) degrees.
  This is the definition used by the software Macaulay2 \citep[function \texttt{degree(Module)}]{Macaulay2}.
\end{rmk}

We will only consider the \emph{affine} varieties associated with the
ideals we consider.  In particular, the dimension of $V(0)$ is $n$,
and a zero-dimensional variety is defined by at least $n$ polynomials
if the base field is algebraically closed.

\begin{definition}[Regular sequence]
  \label{def:regular}
  Let $W=(w_{1},\dots,w_{n})$ be a system of weights, let \hbox{$D=(d_{1},\dots,d_{m})$} be a system of \hbox{$W$-degrees} and let $F=(f_{1},\dots,f_{m})$ be a sequence of $W$-homogeneous polynomials in $\KK[X_{1},\dots,X_{n}]$, with $W$-degree $D$.
  The system $F$ is called \emph{regular} if it satisfies one of the following equivalent properties \citep{eisenbud95}:
  \begin{enumerate}
    \item
    $\forall\, i \in \{1, \dots, m\},\, f_{i} \text{ is not a
      zero-divisor in } \KK[X_{1},\dots,X_{n}]/\langle f_{1},\dots,
    f_{i-1} \rangle$;
    \item the Hilbert series of $\langle F \rangle$ is given by
    \begin{equation}
      \label{eq:HSreg}
      \HS_{A/I}(T)=\frac{\prod_{i=1}^{m}(1-T^{d_{i}})}
                        {\prod_{i=1}^{n}(1-T^{w_{i}})}.
    \end{equation}
  \end{enumerate}
\end{definition}

\begin{definition}[Simultaneous Noether position]
  \label{def:snp}
  Let $W$ be a system of weights.  Let $m \leq n$ and
  $F = (f_{1},\dots,f_{m})$ be a sequence of $W$-homogeneous polynomials
  in $\KK[X_{1},\dots,X_{n}]$ The system $F$ is said to be \emph{in
    Noether position w.r.t the variables $X_{1},\dots,X_{m}$} if it
  satisfies the two following properties:
  \begin{itemize}
    \item for $i \leq m$, the canonical image of $X_{i}$ in
    $\K[\mathbf{X}]/I$ is an algebraic integer over
    $\K[X_{m+1},\dots,X_{n}]$;
    \item $\K[X_{m+1},\dots,X_{n}] \cap I= 0$.
  \end{itemize}
  The system $F$ is said to be \emph{in simultaneous Noether position}
  (or in SNP) if for any $1 \leq i \leq m$, the system
  $(f_{1},\dots,f_{i})$ is in Noether position w.r.t the variables
  $X_{1},\dots,X_{i}$.
\end{definition}

The following proposition enumerates useful characterizations of the
Noether position.  They are mostly folklore, but we give a proof for
completeness.
\begin{prop}
  \label{prop:carac-NP}
  Let $m \leq n$, $W$ be a system of weights and $D$ be a system of
  $W$-degrees.  Let $F=(f_{1},\dots,f_{m})$ be a sequence of
  $W$-homogeneous polynomials, with $W$-degree $D$.  The following
  statements are equivalent:
  \begin{eqlist}[NP]
    \item\label{item:NP1} the sequence $F$ is in Noether position
    w.r.t.\ the variables $X_{1},\dots,X_{m}$;
    \item\label{item:NP2} the sequence
    $\Fext \coloneq (f_{1},\dots,f_{m},X_{m+1},\dots,X_{n})$ is
    regular;
    \item\label{item:NP3} the sequence
    $F' \coloneq F(X_{1},\dots,X_{m},0,\dots,0)$ is in Noether
    position w.r.t.\ the variables $X_{1},\dots,X_{m}$;
    \item\label{item:NP4} the sequence $F'$ is regular.
  \end{eqlist}
\end{prop}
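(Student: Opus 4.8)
The plan is to use the standard dictionary between Noether position, Krull dimension, and regular sequences in the Cohen--Macaulay ring $A = \KK[\mathbf{X}]$ equipped with its positive $W$-grading. The single fact I would invoke repeatedly is that, since $A$ is Cohen--Macaulay, a sequence $g_{1},\dots,g_{r}$ of $W$-homogeneous polynomials of positive degree is regular if and only if $\codim \langle g_{1},\dots,g_{r}\rangle = r$, equivalently $\dim A/\langle g_{1},\dots,g_{r}\rangle = n - r$; in particular a sequence of $n$ such polynomials in $n$ variables is regular iff $A/\langle g_{1},\dots,g_{n}\rangle$ is Artinian, i.e.\ finite-dimensional over $\KK$. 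I would also use the graded Nakayama lemma and the fact that module-finite extensions preserve Krull dimension. Rather than a cycle of implications, I would show that each of \ref{item:NP1}, \ref{item:NP2}, \ref{item:NP3} is equivalent to \ref{item:NP4}, using \ref{item:NP4} as the hub.

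First, \ref{item:NP2} $\iff$ \ref{item:NP4}. The key observation is the graded $\KK$-algebra isomorphism $A/\langle \Fext\rangle \cong \KK[X_{1},\dots,X_{m}]/\langle F'\rangle$, obtained by noting that $\langle \Fext\rangle = I + \langle X_{m+1},\dots,X_{n}\rangle$ and that quotienting by the latter variables substitutes $0$ for them, turning $F$ into $F'$. Now $\Fext$ is a sequence of $n$ polynomials in the $n$ variables $\mathbf{X}$, so by the criterion above it is regular iff $A/\langle \Fext\rangle$ is finite-dimensional over $\KK$; likewise $F'$ is a sequence of $m$ polynomials in the $m$ variables $X_{1},\dots,X_{m}$, hence regular iff $\KK[X_{1},\dots,X_{m}]/\langle F'\rangle$ is finite-dimensional. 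Since the two quotients are isomorphic, the two regularity conditions coincide. Next, \ref{item:NP3} $\iff$ \ref{item:NP4}: here $F'$ lives in $\KK[X_{1},\dots,X_{m}]$ and being in Noether position w.r.t.\ $X_{1},\dots,X_{m}$ uses all the variables, so the distinguished subring is $\KK$ itself. The integrality condition then says exactly that each image of $X_{i}$ in $\KK[X_{1},\dots,X_{m}]/\langle F'\rangle$ is integral over $\KK$, i.e.\ that this quotient is a finite-dimensional $\KK$-vector space; the condition $\KK \cap \langle F'\rangle = 0$ is automatic because the $f_{i}$, being $W$-homogeneous of positive degree, have no constant term, so $\langle F'\rangle$ is proper. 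Thus \ref{item:NP3} is again equivalent to $\KK[X_{1},\dots,X_{m}]/\langle F'\rangle$ being Artinian, hence to \ref{item:NP4}.

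Finally, \ref{item:NP1} $\iff$ \ref{item:NP2}, which I expect to be the main obstacle. Writing $B = \KK[X_{m+1},\dots,X_{n}]$, the Noether position of $F$ says precisely that $B \hookrightarrow A/I$ is an injective, module-finite extension. For \ref{item:NP1} $\Rightarrow$ \ref{item:NP2}: if $A/I$ is finite over $B$, then reducing modulo the irrelevant ideal $B_{+} = \langle X_{m+1},\dots,X_{n}\rangle$ shows that $(A/I)/B_{+}(A/I) \cong A/\langle \Fext\rangle$ is finite-dimensional over $B/B_{+} = \KK$, so $\Fext$ is regular. For the converse, \ref{item:NP2} gives that $A/\langle \Fext\rangle \cong (A/I)/B_{+}(A/I)$ is finite-dimensional over $\KK$; since $A/I$ is a bounded-below graded $B$-module, the graded Nakayama lemma upgrades this to $A/I$ being a finitely generated $B$-module, which yields the integrality of each $X_{i}$ ($i \le m$). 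It remains to obtain injectivity $B \cap I = 0$: regularity of $\Fext$ forces the Krull dimension $\dim A/I = n - m = \dim B$, and since $A/I$ is finite over $B/(B \cap I)$ these have equal dimension; as $B$ is a domain, a nonzero $B \cap I$ would strictly drop the dimension, a contradiction, so $B \cap I = 0$. The delicate points are the graded Nakayama step (needing the module to be bounded below, which holds for the nonnegative $W$-grading) and the dimension bookkeeping establishing injectivity.
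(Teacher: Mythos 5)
Your proposal is correct, but it takes a genuinely different route from the paper's. The paper proves \ref{item:NP1}$\iff$\ref{item:NP2} by elementary means: the forward direction via the geometric characterization of Noether position (the projection $\pi\colon V(I)\to\KK^{n-m}$ is finite and surjective, so $V(\langle\Fext\rangle)=\pi^{-1}(0)$ is zero-dimensional), and the converse via a \textsc{grevlex} leading-term argument that manufactures a monic integral relation $X_i^{n_i}+\dots\in I$ for each $i\le m$; it then proves \ref{item:NP2}$\iff$\ref{item:NP4} by explicit syzygy manipulation (permuting the regular sequence $\Fext$ in one direction, and in the other lifting a zero-divisor relation for $F'$ modulo $\langle X_{m+1},\dots,X_n\rangle$ back to one for $\Fext$), with \ref{item:NP3}$\iff$\ref{item:NP4} as a mirror. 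You instead use \ref{item:NP4} as the hub and invoke heavier but standard machinery: the Cohen--Macaulay criterion (a homogeneous sequence of positive degrees is regular iff its codimension equals its length), the isomorphism $A/\langle\Fext\rangle\cong\KK[X_1,\dots,X_m]/\langle F'\rangle$, graded Nakayama for module-finiteness, and dimension bookkeeping for $B\cap I=0$. Your argument is more conceptual and makes transparent that all four conditions assert finiteness of the fibre of $V(I)$ over the origin; the price is reliance on Cohen--Macaulayness and graded Nakayama (both legitimately available here because the $W$-grading is positive and the $f_i$ have positive $W$-degree --- a hypothesis you should state explicitly, since the whole equivalence degenerates if some $d_i=0$), whereas the paper's proof stays at the level of polynomial arithmetic and Gröbner bases. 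A sentence is also warranted in your \ref{item:NP2}$\iff$\ref{item:NP4} step to note that if some $f'_i=0$ the quotient cannot be Artinian, so the Artinian criterion correctly reports non-regularity in that degenerate case; this is subsumed by the codimension criterion but is worth flagging.
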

\begin{proof}
  (\ref{item:NP1}~$\implies$~\ref{item:NP2}).  \footnote{The proof of~\ref{item:NP1}~$\iff$~\ref{item:NP2} can be found in~\cite{FSV13}, we give it again here for completeness.\ifdraft{{\color{gray}[Footnote pour la version finale]}}{}}
  Let $I$ be the ideal generated by $F$.
  The geometric characterization of Noether position (see e.g.~\cite{MilneAG}) shows that the canonical projection onto the $m$ first coordinates
  \begin{equation}
    \pi : V(I) \to V(\langle X_{1}, \dots ,X_{m} \rangle)\label{eq:6}
  \end{equation}
  is a surjective morphism with finite fibers.  This implies that the
  variety
  \hbox{$V(\langle \Fext \rangle) = \pi^{-1}(0)$} is
  zero-dimensional, and so the sequence is regular.

(\ref{item:NP2}~$\implies$~\ref{item:NP1}).
Let $i \leq m$, we want to show that $X_i$ is integral over the ring $\K[X_{m+1},\dots,X_{n}]$.
Since $\Fext$ defines a zero-dimensional ideal, there exists $n_i \in \N$ such that $X_{i}^{n_{i}} = \LT(f)$ with $f \in \langle \Fext \rangle$ for the \grevlex{} ordering with $X_1 > \dots > X_n$.
By definition of the \grevlex{} ordering, we can assume that $f$ simply belongs to $I$.
This shows that every $X_{i}$ is integral over $\K[X_{i+1},\dots,X_{n}]/I$.
We get the requested result by induction on $i$: first, this is clear if $i=m$.
Now assume that we know that $\K[X_{i},\dots,X_{n}]/I$ is an integral extension of $\K[X_{m+1},\dots,X_{n}]$.
From the above, we also know that $X_{i-1}$ is integral over $\K[X_{i},\dots,X_{n}]$, and so, since the composition of integral homomorphisms is integral, we get the requested result.

Finally, we want to check the second part of the definition of Noether position.
Assume that there is a non-zero polynomial in $\K[X_{m+1},\dots,X_{n}] \cap I$.
Since the ideal is weighted homogeneous, we can assume this polynomial to be weighted homogeneous.
Either this polynomial has degree 0, or it is a non-trivial syzygy between $X_{m+1},\dots,X_{n}$.
So in any case, it contradicts the regularity hypothesis.
  
  (\ref{item:NP2}~$\implies$~\ref{item:NP4}).  For any
  $i \in \{1,\dots,m\}$, write
  $f'_{i} = f_{i}(X_{1},\dots,X_{m},0,\dots,0)$.  Since any
  permutation of a regular sequence is a regular sequence,
  $(X_{m+1},\dots,X_{n},f_{1},\dots,f_{m})$ is a regular sequence,
  that is, for any $1 \leq i \leq m$, $f_{i}$ is not a zero divisor in
  \begin{equation}
    \label{eq:4}
    \KK[X_{1},\dots,X_{n}]/\langle X_{m+1},\dots,X_{n},f_{1},\dots,f_{i-1} \rangle
  \end{equation}
  As a consequence, factoring in the quotient by
  $\langle X_{m+1},\dots,X_{m}\rangle$, $f'_{i}$ is no zero-divisor in
  \begin{equation}
    \label{eq:62}
    \KK[X_{1},\dots,X_{m}]/\langle f'_{1},\dots,f'_{i-1} \rangle.
  \end{equation}

(\ref{item:NP4}~$\implies$~\ref{item:NP2}).
For any $i$, write $f_{i} = f'_{i} + r_{i}$ with $f'_{i} \in \KK[X_{1},\dots,X_{m}]$, and $r_{i} \in \langle X_{m+1}, \dots, X_{n}\rangle$.
Let $1 \leq i \leq n$.
Assume that $gf_{i} \in \langle X_{m+1},\dots,X_{n},f_{1},\dots,f_{i-1} \rangle$:
  \begin{align}
    \label{eq:63}
    gf_{i} = gf'_{i} + gr_{i} &= \sum_{j=1}^{i-1} g_{j}f_{j} + \sum_{j=m+1}^{n} g_{j}X_{j} \\
    \label{eq:80}
    &= \sum_{j=1}^{i-1} g_{j}f'_{j} + R & \text{with
      $R \in \langle X_{m+1},\dots,X_{n}\rangle$}.
  \end{align}
  As a consequence, considering only the monomials in
  $\KK[X_{1},\dots,X_{m}]$
  \begin{equation}
    \label{eq:81}
    g'f'_{i} = \sum_{j=1}^{i-1}g_{j}f'_{j} \text{ where } g'=g(X_{1},\dots,X_{m},0,\dots,0).
  \end{equation}
  Since $F'$ is regular,
  $g' \in \langle f'_{1},\dots,f'_{i-1}\rangle$:
  \begin{equation}
    \label{eq:82}
    g = g'+r \in \langle f'_{1},\dots,f'_{i-1} \rangle  + \langle X_{m+1},\dots,X_{m}\rangle = \langle f_{1},\dots,f_{i-1} \rangle + \langle X_{m+1},\dots,X_{m}\rangle.
  \end{equation}
  And indeed, $f_{i}$ is no zero-divisor in
  $\KK[X_{1},\dots,X_{n}]/\langle
  X_{m+1},\dots,X_{n},f_{1},\dots,f_{i-1}\rangle$.
  It means that $(X_{m+1},\dots,X_{n},f_{1},\dots,f_{m})$ is a
  regular sequence.  By permutation, we conclude that
  $(f_{1},\dots,f_{m},X_{m+1},\dots,X_{n})$ is a regular sequence.

  (\ref{item:NP4}~$\iff$~\ref{item:NP3}).  The sequence
  $F'=(f'_{1},\dots,f'_{m}) \in \KK{[X_{1},\dots,X_{m}]}^{m}$ is
  regular if and only if the sequence
  $(f'_{1},\dots,f'_{m},X_{m+1},\dots,X_{n})$ is regular.  The
  equivalence between~\ref{item:NP3} and~\ref{item:NP4} is then a
  mirror of the equivalence between~\ref{item:NP1} and~\ref{item:NP2}.
\end{proof}

\subsection{Reverse chain-divisible systems of weights and their
  properties}
\label{sec:rcd-weights}

Let $W$ be a system of weights.  Several properties from the
homogeneous case turn out to be no longer true in the weighted case.
For example, properties such as the Noether normalization lemma are
no longer available, since in general, we cannot write any non trivial
weighted homogeneous change of coordinates.  However, if we add some
constraints on the system of weights, some of these properties can be
proved in a weighted setting.  More precisely, we will consider
\emph{reverse chain-divisible} systems of weights, defined as follows.
\begin{definition}
  We say that $W$ is \emph{reverse chain-divisible}
  if we have
  \begin{equation}
    \label{eq:44}
    w_{n} \divides w_{n-1} \divides \dots \divides w_{1}
  \end{equation}
  In this situation, the weights are coprime if and only if $w_{n}=1$.
\end{definition}
\begin{rmk}
  The name ``chain-divisible'' can be found in~\cite{alfonsin2005}, referring to a notion introduced in~\cite{alfonsin1998}.
\end{rmk}

In this setting, many results from the homogeneous case can now be
adapted to the weighted homogeneous case.  For example, the Noether
normalization lemma states that for homogeneous polynomials with an
infinite base field, all regular sequences are in Noether position up
to a generic linear change of coordinates.  In the weighted
homogeneous case with reverse chain-divisible weights, all regular
sequences 
are in Noether position, up to
a weighted homogeneous change of coordinates, with $W$-degree $W$.
More precisely, in the weighted homogeneous case, we can prove the
following version of the Noether normalization lemma
(see~\cite[lem.~13.2.c]{eisenbud95} for the homogeneous version of
this lemma):

\begin{lemma}[Noether normalization lemma, weighted case]
  Let $\KK$ be an infinite field, $W$ be a reverse chain-divisible system of weights and $f \in R=\KK[X_{1},\dots,X_{r}]$ be a non-constant polynomial, $W$-homogeneous with $W$-degree $d$.
  Then there are elements $X'_{1},\dots,X'_{r-1}\in R$ such that $R$ is a finitely generated module over $\KK[X'_{1},\dots,X'_{r-1},f]$.
  Furthermore, if the field has characteristic $0$ or large enough, there exists a dense Zariski-open subset $U \subset \KK^{r-1}$ such that for all $(a_{i}) \in U$, one can choose $X'_{i} = X_{i} - a_{i}X_{r}^{w_{i}/w_{r}}$.
\end{lemma}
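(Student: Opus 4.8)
The plan is to adapt the standard graded Noether normalization argument (as in \cite[lem.~13.2.c]{eisenbud95}) to the weighted grading, the key point being that reverse chain-divisibility makes a weighted analogue of a generic linear change of coordinates available. Since $w_{r} \divides w_{i}$ for every $i$ ($w_{r}$ is the smallest weight of a reverse chain-divisible system), the integers $e_{i} \coloneq w_{i}/w_{r}$ are well defined for $i<r$; moreover every monomial $\mathbf{X}^{\alpha}$ occurring in $f$ satisfies $\sum_{i} w_{i}\alpha_{i}=d$, so $w_{r} \divides d$ and $\delta \coloneq d/w_{r}$ is a positive integer too. For a tuple $a=(a_{1},\dots,a_{r-1}) \in \KK^{r-1}$ I set $X'_{i}=X_{i}-a_{i}X_{r}^{e_{i}}$ for $i<r$. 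Each $X_{r}^{e_{i}}$ has $W$-degree $w_{r}e_{i}=w_{i}$, so this is a $W$-homogeneous change of coordinates of $W$-degree $W$, and the substitution $X_{i} \mapsto X'_{i}+a_{i}X_{r}^{e_{i}}$ (with $X_{r}$ fixed) is a $\KK$-algebra automorphism of $R$. It therefore suffices to show that, for a suitable $a$, the element $X_{r}$ is integral over $S \coloneq \KK[X'_{1},\dots,X'_{r-1},f]$: then each $X_{i}=X'_{i}+a_{i}X_{r}^{e_{i}}$ is integral over $S$ as well, and $R$, being integral over $S$ and generated by finitely many elements as an $S$-algebra, is a finitely generated $S$-module.

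The crux is the leading coefficient of $f$ viewed as a polynomial in $X_{r}$ over $\KK[X'_{1},\dots,X'_{r-1}]$. Substituting $X_{i}=X'_{i}+a_{i}X_{r}^{e_{i}}$, a monomial $\mathbf{X}^{\alpha}$ of $f$ contributes a highest $X_{r}$-power $X_{r}^{\sum_{i<r}e_{i}\alpha_{i}+\alpha_{r}}$, and the $W$-homogeneity of $f$ forces this exponent to equal $\tfrac{1}{w_{r}}\sum_{i}w_{i}\alpha_{i}=\delta$ for every monomial. Hence $f$ has $X_{r}$-degree exactly $\delta$, and the coefficient of $X_{r}^{\delta}$ is obtained by selecting the top term $a_{i}X_{r}^{e_{i}}$ from each factor (choosing any $X'_{i}$ strictly lowers the $X_{r}$-degree); this coefficient is the constant $\sum_{\alpha}c_{\alpha}\prod_{i<r}a_{i}^{\alpha_{i}}=f(a_{1},\dots,a_{r-1},1)$, where $c_{\alpha}$ denotes the coefficient of $\mathbf{X}^{\alpha}$ in $f$. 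Writing $c=f(a_{1},\dots,a_{r-1},1)$, we get $f=c\,X_{r}^{\delta}+\sum_{k<\delta}g_{k}X_{r}^{k}$ with $g_{k}\in\KK[X'_{1},\dots,X'_{r-1}]\subseteq S$; whenever $c\neq 0$ this is, after dividing by $c$, a monic relation for $X_{r}$ over $S$, proving integrality.

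It remains to produce $a$ with $c\neq 0$. Here I use that $f(X_{1},\dots,X_{r-1},1)$ is a nonzero polynomial: two monomials $\mathbf{X}^{\alpha},\mathbf{X}^{\beta}$ of $f$ map to the same monomial after setting $X_{r}=1$ only if $\alpha_{i}=\beta_{i}$ for $i<r$, and then $w_{r}\alpha_{r}=d-\sum_{i<r}w_{i}\alpha_{i}=w_{r}\beta_{r}$ forces $\alpha=\beta$, so no cancellation occurs and $f(X_{1},\dots,X_{r-1},1)\neq 0$ since $f$ is non-constant. Consequently $U=\{a\in\KK^{r-1} : f(a_{1},\dots,a_{r-1},1)\neq 0\}$ is a nonempty Zariski-open subset, hence dense because $\KK$ is infinite, and any $a\in U$ yields the \enquote{furthermore} statement (and in particular the existence of suitable $X'_{i}$). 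The main obstacle is the leading-coefficient bookkeeping of the previous paragraph: one must verify that reverse chain-divisibility is exactly what makes $e_{i}=w_{i}/w_{r}$ and $\delta=d/w_{r}$ integers and forces every monomial of the $W$-homogeneous $f$ to reach the same top power $X_{r}^{\delta}$, so that the $X_{r}^{\delta}$-coefficient collapses to the single constant $f(a,1)$; the infiniteness of the field, under the stated characteristic assumption, then secures the generic choice of $a$.
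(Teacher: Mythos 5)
Your proof is correct and follows essentially the same route as the paper's: perform the substitution $X_{i}=X'_{i}+a_{i}X_{r}^{w_{i}/w_{r}}$, observe that the top $X_{r}$-coefficient is $f(a_{1},\dots,a_{r-1},1)$, and choose $a$ in the nonempty (hence dense) open set where this does not vanish. You in fact supply details the paper elides, correctly identifying the top exponent as $d/w_{r}$ (the paper writes $X_{r}^{d}$) and justifying that $f(X_{1},\dots,X_{r-1},1)$ is a nonzero polynomial via the no-cancellation argument.
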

\begin{proof}
  We follow the proof of~\cite[lem.~13.2.c]{eisenbud95}.  For any
  $1 \leq i \leq r-1$, let $a_{i} \in \KK$, and let
  $X'_{i} = X_{i} - a_{i}X_{r}^{w_{i}/w_{r}}$.  We need to show that
  for generic $a_{i}$, under this change of variables, $f$ is monic in
  $X_{r}$:
  \begin{align}
    f(X_{1},\dots,X_{r}) &= f(X'_{1} + a_{1}X_{r}^{w_{1}/w_{r}},
    X'_{2} + a_{2}X_{r}^{w_{2}/w_{r}}, \dots, X_{r-1} + a_{r-1}X_{r}^{w_{r-1}/w_{r}}) \\
    &= f(a_{1},\dots,a_{r-1},1)X_{r}^{d} + \dots
  \end{align}
  So the set of all $a_{i}$'s such that $f$ is monic in $X_{r}$ is
  exactly the set of all $a_{i}$'s such that
  $f(a_{1},\dots,a_{r-1},1) \neq 0$, and since $f$ is $W$-homogeneous
  non-constant, this is a non-empty open subset of $\KK^{r-1}$.
\end{proof}

Then, as in the homogeneous case \citep[th.~13.3]{eisenbud95}, a
consequence of this lemma is Noether's normalization theorem, which we
restate in a weighted setting:
\begin{theorem}
  \label{thm:gen-linchg}
  Let $W$ be a reverse chain-divisible system of weights, and let $F$ be a
  \hbox{$W$-homogeneous} zero-dimensional regular sequence in
  $\KK[X_{1},\dots,X_{n}]$.  Then, for a generic choice of
  $W$-homogeneous polynomials $P_{i}$ with $W$-degree $w_{i}$, the
  change of variable
  \begin{equation}
    \label{eq:90}
    X_{i} = X'_{i} + P_{i}(X_{i+1},\dots,X_{n}),
  \end{equation}
  is such that $F(X_{1}(\mathbf{X'}),\dots,X_{n}(\mathbf{X'}))$ is in
  simultaneous Noether position with respect to the order
  $X'_{1} > X'_{2} > \dots > X'_{n}$.
\end{theorem}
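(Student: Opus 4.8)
The plan is to imitate the classical proof of Noether normalization \citep[th.~13.3]{eisenbud95}, replacing the classical normalization lemma by its weighted counterpart proved just above and iterating it to realise the whole flag of Noether positions at once. The admissible changes are exactly those of the prescribed triangular shape $X_i = X'_i + P_i(X_{i+1},\dots,X_n)$ with $P_i$ being $W$-homogeneous of $W$-degree $w_i$; reverse chain-divisibility is what makes these changes available, since for $j > i$ one has $w_j \divides w_i$, so that the monomials $X_j^{w_i/w_j}$ produced by the weighted lemma genuinely lie in $\KK[X_{i+1},\dots,X_n]$.

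First I would turn simultaneous Noether position into a finite list of conditions. By Proposition \ref{prop:carac-NP}, the transformed system $F(X_1(\mathbf{X'}),\dots,X_n(\mathbf{X'}))$ is in simultaneous Noether position with respect to $X'_1 > \dots > X'_n$ if and only if, for every $1 \le i \le n$, the $i$-th initial subsystem is in Noether position w.r.t.\ the corresponding variables; by the equivalence \ref{item:NP1}~$\iff$~\ref{item:NP4} this amounts to the regularity of the associated truncated systems obtained after the change of variables. Each of these depends polynomially on the coefficients $\mathbf a$ of the $P_i$, so the set $U_i$ of admissible $\mathbf a$ realising the $i$-th condition is a subset of the affine parameter space $\A^N$ of all such coefficients.

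Second, I would show that each $U_i$ is a non-empty Zariski-open set, hence dense since $\A^N$ is irreducible. Openness is the usual fact that regularity of a $W$-homogeneous sequence is the non-vanishing of an elimination (resultant-type) condition. Non-emptiness is the substantive point and is where the weighted normalization lemma is used: applying it successively — at each stage making the relevant polynomial monic in the appropriate variable through an elementary change $X_k \mapsto X_k - a_k X_r^{w_k/w_r}$, exactly as in \citep[th.~13.3]{eisenbud95} — exhibits at least one $\mathbf a \in U_i$. A finite intersection $\bigcap_{i=1}^n U_i$ of dense open subsets of the irreducible space $\A^N$ is again dense, and any $\mathbf a$ in this intersection gives a change of coordinates of the required form putting the system in simultaneous Noether position.

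The main obstacle is the coordination across all levels of the flag simultaneously: the inductive use of the weighted lemma naturally fixes the Noether position one stage at a time, and one must check that a single admissible change can be chosen for all $i$ at once. This is exactly what the density argument delivers, provided one verifies that the successive elementary changes assemble into one triangular change of the prescribed shape (which is where reverse chain-divisibility is indispensable) and that the openness of each $U_i$ is uniform enough for the finite intersection to remain dense. The remaining steps — the explicit monic-ness computations and the bookkeeping of the iterated change of variables — are routine.
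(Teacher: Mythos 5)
Your proposal is correct and follows essentially the route the paper intends: the paper gives no separate proof of Theorem~\ref{thm:gen-linchg}, stating only that it follows from the weighted Noether normalization lemma exactly as \citep[th.~13.3]{eisenbud95} follows from the classical one, which is precisely the iteration-plus-genericity argument you describe (with the correct observation that reverse chain-divisibility is what makes the triangular $W$-homogeneous changes $X_{i} = X'_{i} + P_{i}(X_{i+1},\dots,X_{n})$ available, and that Proposition~\ref{prop:carac-NP} together with the openness of regularity lets a single generic choice work for all levels of the flag at once).
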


Another property of reverse chain-divisible weights is the following proposition.
In the homogeneous case, if $d_{1} \leq d_{2}$ are two non-negative integers, then any monomial with degree $d_{2}$ is divisible by a monomial with degree $d_{1}$.
When the system of weights is reverse chain-divisible, the following proposition states a similar result for the weighted case.


\begin{prop}
  Assume that $W=(w_{1},\dots,w_{n})$ is a system of weights, such that \hbox{$w_{1} \geq w_{2} \geq \dots \geq w_{n}$}.
  The following statements are equivalent:
  \begin{enumerate}
    \item \label{item:rcd1} The system of weights $W$ is reverse
    chain-divisible;
    \item \label{item:rcd2} Let $d_{1} \leq d_{2}$ positive integers,
    $i \in \{1,\dots,n\}$, and $m_{2}$ a monomial of $W$-degree
    $d_{2}$.  Assume that $w_{i}$ divides $d_{1}$, and that $m_{2}$ is
    not divisible by any of the variables $X_{1},\dots,X_{i-1}$.  Then
    there exists a monomial $m_{1}$ with $W$-degree $d_{1}$, such that
    \hbox{$m_{1}\divides m_{2}$}.
  \end{enumerate}
\end{prop}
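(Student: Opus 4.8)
The plan is to prove the two implications separately; \ref{item:rcd1}~$\implies$~\ref{item:rcd2} is the substantial direction and I would handle it by an explicit greedy construction, while \ref{item:rcd2}~$\implies$~\ref{item:rcd1} follows from a one-line counterexample via the contrapositive.

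For the forward direction, assume $W$ is reverse chain-divisible, so that $w_{j} \divides w_{k}$ whenever $j \geq k$; in particular every weight $w_{j}$ with $j \geq i$ divides $w_{i}$, and since $w_{i} \divides d_{1}$ this gives context for the target being reachable. As $m_{2}$ is not divisible by $X_{1},\dots,X_{i-1}$, I may write $m_{2} = X_{i}^{\alpha_{i}}\cdots X_{n}^{\alpha_{n}}$. I would build a divisor $m_{1} = X_{i}^{\beta_{i}}\cdots X_{n}^{\beta_{n}}$ greedily, processing the variables in decreasing order of weight (from $X_{i}$ to $X_{n}$) and taking $\beta_{j} = \min\{\alpha_{j},\, r_{j}/w_{j}\}$, where $r_{j}$ denotes the residual target remaining after the first $j-i$ steps, with $r_{i} = d_{1}$.

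The key is an induction maintaining, at each stage $j$, the invariant that $w_{j} \divides r_{j}$ and that $r_{j}$ does not exceed the $W$-degree $\sum_{k \geq j} \alpha_{k} w_{k}$ of the remaining tail. If at some stage $r_{j} \leq \alpha_{j} w_{j}$, then $\beta_{j} = r_{j}/w_{j}$ is a legitimate integer exponent (using $w_{j} \divides r_{j}$) and the target is met exactly; otherwise I set $\beta_{j} = \alpha_{j}$, and the new residual $r_{j+1} = r_{j} - \alpha_{j} w_{j}$ is still nonnegative, still bounded by the $W$-degree of the shorter tail, and still divisible by $w_{j+1}$ — this last point is exactly where reverse chain-divisibility enters, since $w_{j+1} \divides w_{j} \divides r_{j}$. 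The process terminates at the latest when $j = n$, where $w_{n} \divides r_{n}$ together with $r_{n} \leq \alpha_{n} w_{n}$ force $\beta_{n} = r_{n}/w_{n} \in \{0,\dots,\alpha_{n}\}$. Checking that the divisibility $w_{j} \divides r_{j}$ is preserved from one step to the next is the main (and essentially the only delicate) point of the whole argument.

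For the converse I argue by contraposition. If $W$ is not reverse chain-divisible then, since $w_{1} \geq \dots \geq w_{n}$, there is an index $i < n$ with $w_{i+1} \ndivides w_{i}$. Taking $d_{1} = w_{i}$, $m_{2} = X_{i+1}^{w_{i}}$ and $d_{2} = w_{i}w_{i+1}$, all hypotheses of \ref{item:rcd2} hold: indeed $w_{i} \divides d_{1}$, $d_{1} \leq d_{2}$ as $w_{i+1} \geq 1$, and $m_{2}$ involves only $X_{i+1}$ hence is not divisible by $X_{1},\dots,X_{i-1}$. But every divisor of $m_{2}$ has the form $X_{i+1}^{\beta}$ with $W$-degree $\beta w_{i+1}$, and $w_{i+1} \ndivides w_{i}$ means $w_{i}$ is never of this form, so no divisor of $m_{2}$ has $W$-degree $d_{1}$. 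This contradicts \ref{item:rcd2} and completes the equivalence.
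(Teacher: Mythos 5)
Your proposal is correct, and it diverges from the paper's argument in both directions, more substantially in the converse. For the forward implication the paper also gives a constructive argument, but organized as an induction on $d_{2}$: it peels off the \emph{highest-index} (lowest-weight) variable, writing $m_{2}=X_{j}^{\alpha}m_{2}'$ with $j$ maximal, handles the cases $d_{2}'\geq d_{1}$ and $i=j$ separately, and otherwise tops up $m_{2}'$ with the power $X_{j}^{(d_{1}-d_{2}')/w_{j}}$, the integrality of that exponent resting on the same observation as your invariant, namely that reverse chain-divisibility propagates divisibility of the residual degree down the chain of weights. Your greedy construction fills exponents from the \emph{largest}-weight variable instead and maintains the divisibility $w_{j}\divides r_{j}$ explicitly as a loop invariant; the two are equivalent in substance, and yours has the mild advantage of producing $m_{1}$ in a single pass with the key divisibility step isolated cleanly. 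For the converse your route is genuinely simpler: the paper takes $d_{1}=w_{i}w_{i+1}$, $d_{2}=d_{1}+\gcd(w_{i},w_{i+1})$, invokes Paoli's lemma to get a unique representation $aw_{i}+bw_{i+1}=d_{2}$, and checks that both maximal divisors of $X_{i}^{a}X_{i+1}^{b}$ drop below $W$-degree $d_{1}$, whereas your single-variable witness $m_{2}=X_{i+1}^{w_{i}}$ with $d_{1}=w_{i}$ makes the failure immediate (every divisor has $W$-degree a multiple of $w_{i+1}$, and $w_{i+1}\ndivides w_{i}$) and needs no counting lemma at all. Both arguments are valid; yours is the more economical for this direction.
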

\begin{proof}
  (\ref{item:rcd1}~$\implies$~\ref{item:rcd2}).
  Fix $d_{1}$.
  We shall prove by induction over $d_{2}$ that for any monomial $m_{2}$ with $W$-degree $d_{2}$ satisfying the hypotheses of~(\ref{item:rcd2}), there exists a monomial $m_{1}$ with $W$-degree $d_{1}$ dividing $m_{2}$.
  The case $d_{2} = d_{1}$ is trivial.

  Assume that $d_{2} > d_{1}$, and let $m_{2}$ be a monomial of
  $W$-degree $d_{2}$.  Let $j$ be the greatest index of a variable
  dividing $m_{2}$, write $m_{2} = X_{j}^{\alpha} m'_{2}$, where
  $m'_{2}$ is a monomial in $\K[X_{i},\dots,X_{j-1}]$, with $W$-degree
  $d'_{2} = d_{2}-w_{j}\alpha$.  If $d'_{2} \geq d_{1}$, the result
  follows by induction.  If $i=j$, then
  $m_{2}= X_{i}^{\alpha}$, and $m_{1} \coloneq X_{i}^{d_{1}/w_{i}}$ has
  $W$-degree $d_{1}$ and divides $m_{2}$.  So we can assume that
  $d'_{2} < d_{1}$ and that $i < j$.

Since $W$ is a reverse chain-divisible system of weights, $w_{j-1}$ divides $w_{k}$ for any $k$ in $\{1, \dots, j-1\}$.
Hence, since $m_{2} \in \KK[X_{i},\dots,X_{j}]$ and $m'_{2} \in \KK[X_{i},\dots,X_{j-1}]$, $d_{2} \equiv 0 \mod{w_{j}}$ and $d'_{2} \equiv 0 \mod{w_{j-1}}$.
By hypothesis, $d_{1}$ is divisible by $w_{i}$, and in particular it is divisible by $w_{j-1}$.
All in all, this shows that $d_{1}-d'_{2}$ is divisible by $w_{j-1}$, and so it is divisible by $w_{j}$.
Let
  \begin{equation}
    m_{1} = m'_{2}\cdot X_{j}^{(d_{1}-d'_{2})/w_{j}}.
    \label{eq:niceW-mon}
  \end{equation}
  The monomial $m_{1}$ has $W$-degree $d_{1}$ and divides $m_{2}$.

  (\ref{item:rcd2}~$\implies$~\ref{item:rcd1}).  Assume that $W$ is a system
  of weights which is not reverse chain-divisible, we shall find
  integers $d_{1} \leq d_{2}$ and a monomial $m_{2}$ with $W$-degree
  $d_{2}$ which is not divisible by any monomial of $W$-degree
  $d_{1}$.

Since $W$ is not reverse chain-divisible, there exists $i$ such that $w_{i+1}$ does not divide $w_{i}$.
In particular, $\gcd(w_{i},w_{i+1}) < w_{i}$ and $\gcd(w_{i},w_{i+1}) < w_{i+1}$.
Without loss of generality, we may consider only the variables $X_{i},X_{i+1}$.
Let $d_{1} = w_{i}w_{i+1}$, \hbox{$d_{2} = d_{1} + \gcd(w_{i},w_{i+1})$}.
By Paoli's lemma (see for example~\cite[chap.~264]{Lucas1891} or the discussion after~\cite[th.~5.1]{Niven1991}), there exists exactly
  \begin{equation}
    \label{eq:84}
    \left\lfloor \frac{d_{2}}{w_{i}w_{i+1}} \right\rfloor
    = \left\lfloor 1 + \frac{\gcd(w_{i},w_{i+1})}{w_{i}w_{i+1}}  \right\rfloor
    = 1
  \end{equation}
  couple of non-negative integers $a,b$ such that $aw_{i} + bw_{i+1}=d_{2}$.
  Let $m_{2}$ be the monomial $X_{i}^{a}X_{i+1}^{b}$.
  The $W$-degree $d_{1}$ is divisible by $w_{i}$, and $m_{2}$ is not divisible by $X_{1},\dots,X_{i-1}$.
  The maximal divisors of $m_{2}$ are
  \begin{gather}
    \label{eq:niceW-mon1}
    \frac{m_{2}}{X_{i}} = X_{i}^{a-1}X_{i+1}^{b} \text{ with
      $W$-degree
      $d_{2}-w_{i} = d_{1} + \gcd(w_{i},w_{i+1}) - w_{i} < d_{1}$;} \\
    \label{eq:niceW-mon2}
    \frac{m_{2}}{X_{i+1}} = X_{i}^{a}X_{i+1}^{b-1} \text{ with
      $W$-degree
      $d_{2}-w_{i+1} = d_{1} + \gcd(w_{i},w_{i+1}) - w_{i+1} <
      d_{1}$.}
  \end{gather}
  As a consequence, $m_{2}$ is not divisible by any monomial of
  $W$-degree $d_{1}$.
\end{proof}


This proposition essentially states that the staircase of a
$W$-homogeneous ideal is reasonably shaped when $W$ is a reverse
chain-divisible system of weights. 
For example, let $W$ be a reverse chain-divisible system of weights,
and let $I$ be the ideal generated by all monomials of $W$-degree
$w_{1}$ (that is, the least common multiple of the weights).  Then
the proposition proves that $I$ contains all monomials of $W$-degree
greater than $w_{1}$.

If on the other hand the system of weights is not reverse
chain-divisible, this property needs not hold.  For example, consider
the algebra $\K[X_{1},X_{2},X_{3}]$ graded w.r.t. the system of
weights $W=(3,2,1)$, the least common multiple of the weights being
$6$, and let $I$ be the ideal generated by all monomials of $W$-degree
$6$.  Consider the monomial $X_{1}X_{2}^{2}$: it has $W$-degree $7$,
yet it is not divisible by any monomial with $W$-degree $6$, and so it
does not belong to the ideal $I$.

\subsection{Genericity}
\label{sec:Genericity}

We shall give some results about the genericity of regularity and
Noether position for weighted homogeneous sequences.  The fact that
they define Zariski-open subsets of the sets of sequences of a given
weighted degree is classical.  For regular sequences, see for
example~\cite[sec.~2]{Pardue2010}.  The proof for sequences in
(simultaneous) Noether position is a simple extension of the statement
for regular sequences.  However, we provide here a sketch of these
proofs for completeness.

\begin{prop}
  \label{thm:gen-1}
  Let $m \leq n$ be two integers, $W=(w_{1},\dots,w_{n})$ a system of
  weights, and $D=(d_{1},\dots,d_{m})$ a system of $W$-degrees.
  Then
  \begin{itemize}
    \item the set of regular sequences,
    \item the set of sequences in Noether position with respect to the
    variables $X_{1},\dots,X_{m}\,$, and
    \item the set of sequences in simultaneous Noether position
    w.r.t. the order $X_{1} > \dots > X_{m}$
  \end{itemize}
  are Zariski-open subsets of the affine space of $W$-homogeneous
  polynomials with $W$-degree $D$.
\end{prop}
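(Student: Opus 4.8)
The plan is to deduce all three statements from the openness of the regular locus, handling regular sequences first and then transporting the result to the two Noether-position statements through Proposition~\ref{prop:carac-NP}.

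First I would treat regular sequences by semicontinuity of the Hilbert function. A point of the ambient affine space is a sequence $(f_1,\dots,f_m)$ with $f_i$ ranging over the finite-dimensional space $A_{d_i}$ of $W$-homogeneous polynomials of $W$-degree $d_i$, where $A=\KK[X_1,\dots,X_n]$. For a fixed $W$-degree $\delta$, the degree-$\delta$ part of $I=\langle F\rangle$ is the image of the Macaulay-type map
\[
  M_\delta : \bigoplus_{i=1}^{m} A_{\delta-d_i} \to A_\delta, \qquad (g_1,\dots,g_m) \mapsto \sum_{i=1}^{m} g_i f_i,
\]
whose matrix in the monomial bases has entries that are linear, hence polynomial, in the coordinates of the point. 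Since $\dim_\KK (A/I)_\delta = \dim_\KK A_\delta - \rk M_\delta$ and the condition $\rk M_\delta \geq r$ is Zariski-open (nonvanishing of some minor), the Hilbert function $\delta \mapsto \dim_\KK(A/I)_\delta$ is upper semicontinuous. A generic sequence is regular, so the generic value of $\rk M_\delta$ is the one attained by a regular sequence; by Definition~\ref{def:regular} this corresponds to the Hilbert series~\eqref{eq:HSreg}, which is then pointwise minimal, and conversely any sequence attaining it is regular. Thus $F$ is regular exactly when $\rk M_\delta$ is maximal in every degree $\delta$.

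The one real difficulty --- and the step I expect to be the main obstacle --- is that this is a priori an intersection over infinitely many degrees when $m<n$, since~\eqref{eq:HSreg} is then not a polynomial. I would remove it by invoking a uniform bound $N$, depending only on $W$ and $D$, on the Castelnuovo--Mumford regularity of the members of the family: once the Hilbert function matches the regular value in all degrees $\delta \leq N$, it matches in every degree (a weighted analogue of the Macaulay bound). Regularity then becomes a finite intersection of open conditions and is open; in the zero-dimensional case $m=n$ used below, \eqref{eq:HSreg} is a polynomial and this is automatic.

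For the remaining two statements I would argue by pullback. By Proposition~\ref{prop:carac-NP} (\ref{item:NP1}~$\iff$~\ref{item:NP2}), $F$ is in Noether position with respect to $X_1,\dots,X_m$ if and only if $\Fext=(f_1,\dots,f_m,X_{m+1},\dots,X_n)$ is regular; the map $F\mapsto\Fext$ is an affine-linear embedding into the space of $W$-homogeneous sequences of $W$-degree $(d_1,\dots,d_m,w_{m+1},\dots,w_n)$, so the Noether-position locus is the preimage of a regular locus and hence open (here $\Fext$ has $n$ polynomials in $n$ variables, so the zero-dimensional case applies). Finally, by Definition~\ref{def:snp}, $F$ is in simultaneous Noether position with respect to $X_1>\dots>X_m$ if and only if each truncation $(f_1,\dots,f_i)$ is in Noether position with respect to $X_1,\dots,X_i$ for $1\leq i\leq m$; each truncation map $F\mapsto(f_1,\dots,f_i)$ is a coordinate projection of affine spaces and each Noether-position locus is open, so the simultaneous Noether position locus is the finite intersection of their preimages, and therefore open.
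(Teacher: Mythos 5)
Your overall architecture coincides with the paper's: openness is proved for the regular locus first, Noether position is pulled back through the equivalence \ref{item:NP1}~$\iff$~\ref{item:NP2} of Proposition~\ref{prop:carac-NP} applied to $\Fext$ (an $n$-term sequence in $n$ variables), and simultaneous Noether position is a finite intersection of preimages of such loci; those last two steps are exactly the paper's argument and are fine. The gaps are in the first step. Your pivot \enquote{a generic sequence is regular, so the generic value of $\rk M_\delta$ is the one attained by a regular sequence} assumes precisely the density statement that the paper defers to Proposition~\ref{thm:gen-2}, and which is \emph{false} for some $(W,D)$: for $W=(2,5)$, $D=(4,8)$ the paper notes that no regular sequence exists, yet your \enquote{maximal rank in every degree} locus is nonempty, so the claimed equivalence \enquote{regular $\iff$ $\rk M_\delta$ maximal for all $\delta$} fails there (the conclusion survives only because the empty set is open). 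To make the equivalence correct you need the one-sided, Fröberg-type inequality $\dim(A/I)_\delta\geq [T^{\delta}]$ of~\eqref{eq:HSreg} (valid when those coefficients are nonnegative), or a separate treatment of the empty case. The paper sidesteps Hilbert-function minimality altogether: it encodes regularity as the finite, manifestly open condition that $I$ contain every monomial of $W$-degree in the window from $\sum(d_i-w_i)+1$ to $\sum(d_i-w_i)+\max\{w_j\}$, which is then directly a nonvanishing-of-determinants condition.

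The step you flag yourself as the main obstacle is also not resolved by what you invoke. The weighted Macaulay bound $\sum(d_i-w_i)+\max\{w_j\}$ bounds the degree of regularity of \emph{regular} sequences; your truncation argument needs a degree $N$, uniform over \emph{all} (arbitrarily degenerate) members of the parameter space, beyond which agreement of the Hilbert function with~\eqref{eq:HSreg} up to $N$ forces agreement everywhere. Uniform bounds of this kind for arbitrary ideals with fixed generator degrees do exist (Hermann/Galligo--Giusti-type bounds, doubly exponential in $n$), but they are a far heavier fact than the Macaulay bound and would have to be cited or proved. A cleaner route for $m<n$ that avoids regularity bounds entirely is to use that $\KK[\mathbf{X}]$ is Cohen--Macaulay, so that $F$ is regular iff $\codim\langle F\rangle=m$, i.e.\ iff $\dim V(F)\leq n-m$, which is open by upper semicontinuity of fiber dimension applied to the incidence variety. (To be fair, the paper's own monomial-window criterion is also only valid for $m=n$, so its sketch shares this blind spot; but your write-up makes the needed claim explicit and attributes it to the wrong bound.)
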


\begin{proof}
  We shall prove that regular sequences form a Zariski-open subset of
  the affine space of $W$-homogeneous polynomials of $W$-degree $D$.
  The openness of Noether position will then be a corollary, since by
  Proposition~\ref{prop:carac-NP}, $(f_{1},\dots,f_{m})$ is in Noether
  position w.r.t the variables $X_{1},\dots,X_{m}$ if and only if
  $(f_{1},\dots,f_{m},X_{m+1},\dots,X_{n})$ is regular.  As for 
  sequences in simultaneous Noether position, they will be given by
  the intersection of $m$ open subsets, stating that the sequences
  $(f_{1},\dots,f_{i})$, $i \in \{1,\dots,m\}$, are in Noether
  position w.r.t.  the variables $X_{1},\dots,X_{i}$.

  Let $F=(f_{1},\dots,f_{m})$ be a family of $m$ generic quasi-homoge\-neous polynomials, that is polynomials in $\KK[\mathbf{a}][\mathbf{X}]$, whose coefficients are algebraically independent parameters $a_{k}$.
  We want to prove that regular sequences are characterized by some polynomial in these coefficients $a_{k}$ being non-zero, which implies that they belong to a Zariski-open set.
  Write $I = \langle F \rangle$.
Since the Hilbert series~\eqref{eq:HSreg} characterizes regular sequences, $F$ is regular if and only if $I$ contains all monomials of $W$-degree between $\HI(I)+1$ and $\HI(I)+ \max\{w_{i}\}$, where $\HI(I)$ is given by $\sum (d_{i} - w_{i})$.
This expresses that a given set of linear equations has solutions, and so it can be coded as some determinants being non-zero, as polynomials in the coefficients $a_{k}$.
\end{proof}

This states that the set of regular sequences, sequences in Noether
position and sequences in simultaneous Noether position are
Zariski-dense subsets if and only if they are not empty.
Unfortunately, depending on the weights and the weighted degrees,
there may exist no regular sequence, and thus no sequences in
(simultaneous) Noether position either.  For example, let $W=(2,5)$
and $D=(4,8)$, the only $W$-homogeneous sequence with $W$-degree $D$
in $\KK[X,Y]$ is (up to scalar multiplication) $(X^{2}, X^{4})$, and
it is not regular.  However, this is only the case for very specific
systems of $W$-degrees, for which there does not exist enough
monomials to build non-trivial sequences.

\begin{definition}
  Let $m \leq n$ be two integers, $W=(w_{1},\dots,w_{n})$ a system of
  weights, and $D=(d_{1},\dots,d_{m})$ a system of $W$-degrees.  We
  say that $D$ is \emph{$W$-compatible} if there exists a regular
  $W$-homogeneous sequence in $\KK[X_{1},\dots,X_{n}]$ with $W$-degree
  $D$.  We say that $D$ is \emph{strongly $W$-compatible} if for any
  $1 \leq i \leq m$, $d_{i}$ is divisible by $w_{i}$.%
\end{definition}

Using these definitions, we can identify the cases where the
properties of being regular, in Noether position or in simultaneous
Noether position are generic.

\begin{prop}
  \label{thm:gen-2}
  Let $m \leq n$ be two integers, $W=(w_{1},\dots,w_{n})$ a system of
  weights, and $D=(d_{1},\dots,d_{m})$ a system of $W$-degrees.  For any
  $1 \leq i \leq m$, write $W_{i} \coloneq (w_{1},\dots,w_{i})$ and
  $D_{i} \coloneq (d_{1},\dots,d_{i})$.  Write $A_{W,D}$ the affine
  space of $W$-homogeneous sequences of $W$-degree $D$.  Then the
  following statements are true:
  \begin{enumerate}
    \item\label{item:gen-1} if $D$ is $W$-compatible, then regular
    sequences form a Zariski-dense subset of $A_{W,D}$;
    \item\label{item:gen-2} if $D$ is $W_{m}$-compatible, then
    sequences in Noether position with respect to the variables
    $X_{1},\dots,X_{m}$ form a Zariski-dense subset of $A_{W,D}$;
    \item\label{item:gen-4} if $D$ is strongly $W$-compatible, then
    $D$ is $W$-compatible, $W_{m}$-compatible, and for any $i$,
    $D_{i}$ is $W_{i}$-compatible;
    \item\label{item:gen-5} if $m=n$, $D$ is $W$-compatible and $W$ is
    reverse chain-divisible, then, up to some reordering of the
    degrees, $D$ is strongly $W$-compatible.
  \end{enumerate}
\end{prop}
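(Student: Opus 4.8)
The plan is to dispatch the first three items quickly and to concentrate the real work on~\ref{item:gen-5}. For~\ref{item:gen-1} and~\ref{item:gen-2} I would reduce \emph{density} to \emph{non-emptiness}. Proposition~\ref{thm:gen-1} already gives that the sets of regular sequences, of sequences in Noether position, and of sequences in simultaneous Noether position are Zariski-open in the affine space $A_{W,D}$; since $A_{W,D}$ is irreducible, any non-empty open subset is dense, so it suffices to exhibit one witness in each case. For~\ref{item:gen-1} the hypothesis that $D$ is $W$-compatible is, by definition, precisely the assertion that the open set of regular sequences is non-empty. For~\ref{item:gen-2}, $W_{m}$-compatibility supplies a regular $W_{m}$-homogeneous sequence $G=(g_{1},\dots,g_{m})$ in $\KK[X_{1},\dots,X_{m}]$ of $W_{m}$-degree $D$; viewing $G$ inside $\KK[X_{1},\dots,X_{n}]$, it does not involve $X_{m+1},\dots,X_{n}$, so $G'=G$ is regular, and the equivalence \ref{item:NP4}~$\iff$~\ref{item:NP1} of Proposition~\ref{prop:carac-NP} shows $G$ is in Noether position with respect to $X_{1},\dots,X_{m}$.

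For~\ref{item:gen-4}, strong $W$-compatibility hands us an explicit witness: setting $e_{i}=d_{i}/w_{i}\in\N$, the sequence $(X_{1}^{e_{1}},\dots,X_{m}^{e_{m}})$ is regular (powers of distinct variables always form a regular sequence) and $X_{i}^{e_{i}}$ is $W$-homogeneous of $W$-degree $d_{i}$, so $D$ is $W$-compatible. The same sequence, read in $\KK[X_{1},\dots,X_{m}]$, proves $W_{m}$-compatibility, and its truncations $(X_{1}^{e_{1}},\dots,X_{i}^{e_{i}})\subseteq\KK[X_{1},\dots,X_{i}]$ prove that each $D_{i}$ is $W_{i}$-compatible.

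The heart of the statement is~\ref{item:gen-5}, and this is where I expect the genuine difficulty. Since $m=n$, a $W$-compatible $D$ is witnessed by a regular sequence of length $n$ in $n$ variables; the quotient is then Artinian, so its Hilbert series~\eqref{eq:HSreg} is a genuine polynomial, i.e.\ the rational function $\prod_{i}(1-T^{d_{i}})\big/\prod_{i}(1-T^{w_{i}})$ has no poles. Computing orders of vanishing at a primitive $e$-th root of unity---where $1-T^{k}$ has a simple zero exactly when $e\divides k$---the absence of a pole forces, for every integer $e\geq 1$, the inequality $\#\{i:e\divides w_{i}\}\leq\#\{i:e\divides d_{i}\}$.

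The remaining task is to upgrade this family of counting inequalities to a perfect matching, namely a permutation $\sigma$ with $w_{i}\divides d_{\sigma(i)}$ for all $i$; reordering $D$ along $\sigma$ then yields strong $W$-compatibility. I would obtain the matching from Hall's marriage theorem applied to the bipartite divisibility graph joining each weight $w_{i}$ to every degree it divides. This is exactly where reverse chain-divisibility enters: because $w_{n}\divides\dots\divides w_{1}$, the divisibility classes $\{d_{j}:w_{i}\divides d_{j}\}$ are nested, so for any set $S$ of weights the neighbourhood $N(S)$ coincides with the class of the smallest weight $w_{k}\in S$, and Hall's condition $|N(S)|\geq|S|$ collapses to the already-established inequality at $e=w_{k}$. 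The main obstacle is thus the careful bookkeeping linking the root-of-unity computation, the nestedness forced by chain-divisibility, and Hall's criterion; without the chain-divisibility hypothesis the nesting breaks and the counting inequalities alone no longer guarantee a matching.
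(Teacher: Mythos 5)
Your items~\ref{item:gen-1}--\ref{item:gen-4} coincide with the paper's proof: openness from Proposition~\ref{thm:gen-1} plus irreducibility of $A_{W,D}$ reduces density to non-emptiness, the witness for~\ref{item:gen-2} is a regular sequence living in $\KK[X_{1},\dots,X_{m}]$ combined with Proposition~\ref{prop:carac-NP}, and the witness for~\ref{item:gen-4} is the monomial sequence $(X_{1}^{d_{1}/w_{1}},\dots,X_{i}^{d_{i}/w_{i}})$. For item~\ref{item:gen-5}, however, you take a genuinely different and, as far as I can tell, correct route. The paper sorts the degrees decreasingly, invokes the weighted Noether normalization theorem (Theorem~\ref{thm:gen-linchg}) to put a regular sequence witnessing $W$-compatibility into simultaneous Noether position after a $W$-homogeneous change of coordinates, and then reads off directly that each $d_{i}$ is a sum of multiples of $w_{i}$, hence a multiple of $w_{i}$; this is short but implicitly relies on the infinite-field genericity underlying Theorem~\ref{thm:gen-linchg}. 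Your argument is purely combinatorial: the Hilbert series \eqref{eq:HSreg} of a length-$n$ regular sequence in $n$ variables is a polynomial, so comparing orders of vanishing at each primitive $e$-th root of unity yields $\#\{i : e \divides w_{i}\} \leq \#\{i : e \divides d_{i}\}$ for all $e$, and reverse chain-divisibility makes the neighbourhoods in the divisibility bipartite graph nested, so Hall's condition for a set $S$ reduces to the inequality at $e=w_{k}$ with $k=\max S$ (using $\#\{i : w_{k}\divides w_{i}\}\geq k\geq \#S$). This buys independence from any change of coordinates and hence from any hypothesis on the field, at the cost of invoking Hall's theorem and slightly more bookkeeping; the paper's route additionally tells you that the strongly compatible reordering can be taken to be the decreasing one, which your matching does not provide by itself, but the statement only asks for some reordering, so your proof is complete.
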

\begin{proof}
  The proofs of statements~\ref{item:gen-1} and~\ref{item:gen-2}
  follow the same technique: by Theorem~\ref{thm:gen-1}, we know that
  the sets we consider are Zariski-open in $A_{W,D}$.  So in order to
  prove the density, we only need to prove that they are non empty.
  Statement~\ref{item:gen-1} is exactly the definition of the
  $W$-compatibility.
  
  For statement~\ref{item:gen-2}, by $W_{m}$-compatibility, we know
  that there exists a $W$-homogeneous sequence $F=(f_{1},\dots,f_{m})$
  with $W$-degree $D$ in $\KK[X_{1},\dots,X_{m}]$, which is regular.
  As a consequence, the sequence
  $(f_{1},\dots,f_{m},X_{m+1},\dots,X_{n})$ is regular, and from the
  characterization~\ref{item:NP4} of Noether position
  (prop.~\ref{prop:carac-NP}), this means that $F$ is in Noether
  position with respect to the variables $X_{1},\dots,X_{m}$.
    
  In order to prove statement~\ref{item:gen-4}, we need to exhibit
  regular sequences of length $i$ in $\KK[X_{1},\dots,X_{i}]$ for any
  $1 \leq i \leq m$.  For $1 \leq i \leq m$, write
  $F_{i} = (X_{1}^{d_{1}/w_{1}},\dots,X_{i}^{d_{i}/w_{i}})$, it is
  regular and each polynomial lies in $\KK[X_{1},\dots,X_{i}]$.

  Finally, statement~\ref{item:gen-5} is a consequence of
  Theorem~\ref{thm:gen-linchg}.  Let $W$ be a reverse chain-divisible
  system of weights, and $D$ a $W$-compatible system of $W$-degrees.
  Up to reordering, we can assume that the polynomials are ordered so that
  $d_{1} \geq d_{2} \geq \dots \geq d_{n}$; this does not cancel the
  $W$-compatibility.  Let $F=(f_{1},\dots,f_{n})$ be a regular
  sequence, $W$-homogeneous with $W$-degree $D$.  By
  Theorem~\ref{thm:gen-linchg}, there exist polynomials
  $P_{i}(X_{i+1},\dots,X_{n})$ which are $W$-homogeneous with
  $W$-degree $w_{i}$, and such that $F$, under the change of variables
  $X_{i}=X'_{i}+P_{i}(X_{i+1},\dots,X_{n})$, is in simultaneous
  Noether position with respect to the order
  $X'_{1} > X'_{2} > \dots > X'_{n}$.
  From the characterization~\ref{item:NP4} of Noether position, that
  means in particular that for any $i \in \{1,\dots,n\}$,
  $f_{i}(X_{1}(X'_{1},\dots,X'_{i}),\dots,X_{n}(X'_{1},\dots,X'_{i}))$
  belongs to a regular sequence, and thus is not zero.  And by
  definition of reverse chain-divisible weights, its $W$-degree
  $d_{i}$ is a sum of multiples of $w_{i}$, and so it is itself a
  multiple of $w_{i}$.
\end{proof}

\begin{remark}
  The statement~\ref{item:gen-5} is a converse of~\ref{item:gen-4} in
  the reverse chain-divisible case.  In the non-reverse
  chain-divisible case, that converse is false: let $W=(3,2)$,
  $D=(6,5)$ and consider $F=(X^{2}+Y^{3},XY)$ in $\KK[X,Y]$.  The
  sequence $F$ is in simultaneous Noether position w.r.t.\ the order
  $X > Y$, yet $5$ is neither divisible by $3$ nor by $2$.

  The weaker converse that if $D$ is $W$-compatible, then $D$ is
  $W_{m}$-compatible is also false: with the same weights and algebra,
  let $D=(5)$, the only polynomial with $W$-degree $5$ is (up to
  scalar multiplication) $f=XY$.  It is non-zero, so $(f)$ is a
  regular sequence, but $(f,Y)$ is not regular, so $(f)$ is not in
  Noether position w.r.t $X$.
\end{remark}

\begin{remark}
  These examples lead to the following attempt at writing a general
  characterization of
  $W$-compatibility.

  Let $n$ be a positive integer, $W=(w_{1},\dots,w_{n})$ a system of
  weights, and $D=(d_{1},\dots,d_{n})$ a system of $W$-degrees.
  Further assume that
  \begin{itemize}
    \item for all $i \in \{1,\dots,n\}$,
    $\KK{[\mathbf{X}]}_{d_{i}} \neq 0$
    \item the formal series
    \begin{equation}
      \label{eq:88}
      S_{D,W}(T) = \frac{\prod_{i=1}^{n}(1-T^{d_{i}})}{\prod_{i=1}^{n}(1-T^{w_{i}})}
    \end{equation}
    is a polynomial.
  \end{itemize}
  Is $D$ necessarily $W$-compatible?

The answer is \emph{no}: take the system of weights $W=(3,5,11)$, and
the system of $W$-degrees $D=(165,19,19)$.  Note that $165$ is the
product of the weights, and $19$ the sum of the weights.  The series
\begin{equation}
  S_{D,W}(T) = \frac{(1-T^{165})\cdot(1-T^{19})\cdot(1-T^{19})}{(1-T^{3})\cdot(1-T^{5})\cdot(1-T^{11})} = 1 + T^{3} + \dots + T^{184}
  \label{eq:46}
\end{equation}
is a polynomial.
But at $W$-degree $19$, there are only $2$ monomials, namely $X_{1}X_{2}X_{3}$ and $X_{1}^{3}X_{2}^{2}$, and they are not coprime, so we cannot form a regular sequence of $W$-degrees $(165,19,19)$.
\end{remark}

\section{Regular systems}
\label{sec:Regular-systems}

\subsection{Shape of the Hilbert series of a weighted homogeneous
  complete intersection}
\label{sec:Shape-Hilbert-series}

Let $W=(w_{1},\dots,w_{n})$ be a reverse chain-divisible system of
weights such that $w_{n}=1$, and let $D=(d_{1},\dots,d_{n})$ be a system of $W$-degrees,
such that for any $i \in \{1,\dots,n\}$, $d_{i}$ is divisible by all
of the $w_{j}$'s.  Let $R = \K[X_{1},\dots,X_{n}]$ be a polynomial
algebra graded with respect to $W$.

We use the following notations, as found in~\citep{MS96}:
\begin{itemize}
  \item $\delta_{j} = \sum_{i=1}^{j}(d_{i}-w_{i})$;
  \item $\delta = \delta_{n}$, $\delta^{\ast} = \delta_{n-1}$;
  \item
  $\sigma = \min\left( \delta^{\ast}, \left\lfloor \frac{\delta}{2}
   \right\rfloor \right)$,
  $\sigma^{\ast} = \min\left( \delta_{n-2}, \left\lfloor
    \frac{\delta^{\ast}}{2} \right\rfloor \right)$;
  \item $\mu = \delta - 2\sigma$,
  $\mu^{\ast} = \delta^{\ast} - 2\sigma^{\ast}$.
\end{itemize}
Given a formal series $S(T)=\sum_{d=0}^{\infty} a_{d}T^{d}$, we also define
\begin{equation}
  \begin{aligned}
    \Delta S(T) &= \sum_{d=0}^{\infty} (a_{d}-a_{d-1})T^{d} &&
    \text{(with the convention $a_{-1}=0$)} \\
    &= (1-T) \cdot S(T)
  \end{aligned}
  \label{eq:76}
\end{equation}
and
\begin{equation}
  \label{eq:71}
  \int S = \sum_{d=0}^{\infty} (a_{0} + \dots + a_{d}) T^{d} = \frac{S(T)}{1-T}.
\end{equation}

\begin{lemma}
  \label{lem:lemMS}
  Under the above notations and assumptions, the following properties
  hold.
  \begin{gather}
    \left\{
     \begin{aligned}
       \delta^{\ast} & > \left\lfloor \frac{\delta}{2} \right\rfloor &
       & \iff &  &         & d_{n} - \delta^{\ast} & \leq 0 \\
       \delta^{\ast} & = \left\lfloor \frac{\delta}{2} \right\rfloor &
       & \iff &  & 1  \leq & d_{n} - \delta^{\ast} & \leq 2 \\
       \delta^{\ast} & < \left\lfloor \frac{\delta}{2} \right\rfloor &
       & \iff & & 3 \leq & d_{n} - \delta^{\ast} &
     \end{aligned}
    \right.
    \label{eq:lemMS1} \\
    \sigma = \left\lfloor \frac{\delta}{2} \right\rfloor \implies \mu
    = \delta \mod 2 \in \{0,1\}
    \label{eq:lemMS1-2} \\
    \label{eq:lemMS2}
    0 \leq \mu < d_{n} \\
    \label{eq:lemMS3}
    d_{n-1} \leq d_{n} \implies \sigma^{\ast} + \mu^{\ast} \leq \sigma
  \end{gather}
\end{lemma}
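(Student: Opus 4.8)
The plan is to reduce the whole lemma to the single arithmetic identity
\[
\delta = \delta^{\ast} + d_{n} - w_{n} = \delta^{\ast} + d_{n} - 1,
\]
valid because $\delta=\delta_{n}=\delta_{n-1}+(d_{n}-w_{n})$ and $w_{n}=1$; this is the only essential use of the hypothesis $w_{n}=1$. At the outset I also record two easy facts: every $\delta_{j}$ is nonnegative (each $d_{i}$ is a positive multiple of $w_{1}=\max_{k}w_{k}\ge w_{i}$, so $d_{i}-w_{i}\ge 0$), and $d_{n}\ge 2$ (the only excluded case $d_{n}=1$ forces $W=\mathbf{1}$ with a degree-one generator, which is degenerate). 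Everything that follows is then elementary floor/ceiling bookkeeping built on these three facts, and the main (mild) obstacle is keeping parities consistent across the various branches of the two $\min$'s defining $\sigma$ and $\sigma^{\ast}$.

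For \eqref{eq:lemMS1} I substitute the identity and set $k=d_{n}-\delta^{\ast}$, so that $\delta=2d_{n}-k-1$ and $\lfloor\delta/2\rfloor=d_{n}-\lceil(k+1)/2\rceil$, whence $\delta^{\ast}-\lfloor\delta/2\rfloor=\lceil(k+1)/2\rceil-k$. Inspecting the sign of this expression for $k\le 0$, for $k\in\{1,2\}$, and for $k\ge 3$ yields the three equivalences at once. Claim \eqref{eq:lemMS1-2} is then immediate: when $\sigma=\lfloor\delta/2\rfloor$, the quantity $\mu=\delta-2\lfloor\delta/2\rfloor$ is by definition the residue $\delta\bmod 2\in\{0,1\}$.

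For \eqref{eq:lemMS2} I split on the branch of $\sigma=\min(\delta^{\ast},\lfloor\delta/2\rfloor)$. If $\sigma=\lfloor\delta/2\rfloor$, then \eqref{eq:lemMS1-2} gives $0\le\mu\le 1<d_{n}$, using $d_{n}\ge 2$. If instead $\sigma=\delta^{\ast}<\lfloor\delta/2\rfloor$, the identity gives $\mu=\delta-2\delta^{\ast}=d_{n}-1-\delta^{\ast}$; here the strict inequality forces $d_{n}-\delta^{\ast}\ge 3$ by \eqref{eq:lemMS1}, so $2\le\mu\le d_{n}-1$, and in particular $0\le\mu<d_{n}$.

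The last claim \eqref{eq:lemMS3} is where the extra hypothesis $d_{n-1}\le d_{n}$ enters. The clean reformulation is $\sigma^{\ast}+\mu^{\ast}=\sigma^{\ast}+(\delta^{\ast}-2\sigma^{\ast})=\delta^{\ast}-\sigma^{\ast}$, and since $\sigma^{\ast}=\min(\delta_{n-2},\lfloor\delta^{\ast}/2\rfloor)$ this equals $\max\bigl(d_{n-1}-w_{n-1},\ \lceil\delta^{\ast}/2\rceil\bigr)$. Because $\sigma^{\ast}\ge 0$ we automatically have $\delta^{\ast}-\sigma^{\ast}\le\delta^{\ast}$, so it only remains to bound this maximum by $\lfloor\delta/2\rfloor$, the other half of $\sigma=\min(\delta^{\ast},\lfloor\delta/2\rfloor)$. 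For the term $\lceil\delta^{\ast}/2\rceil$ I use $\delta\ge\delta^{\ast}+1$ (from $d_{n}\ge 2$) to get $\lfloor\delta/2\rfloor\ge\lceil\delta^{\ast}/2\rceil$; for the term $d_{n-1}-w_{n-1}$ I use $\delta=\delta_{n-2}+(d_{n-1}-w_{n-1})+(d_{n}-1)$ together with $\delta_{n-2}\ge 0$ and the hypothesis $d_{n}\ge d_{n-1}$ to obtain $\delta\ge 2(d_{n-1}-w_{n-1})$, hence $\lfloor\delta/2\rfloor\ge d_{n-1}-w_{n-1}$. Taking the maximum of these two bounds gives $\delta^{\ast}-\sigma^{\ast}\le\lfloor\delta/2\rfloor$, and the lemma follows.
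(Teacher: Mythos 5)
Your proof is correct, and for the substantive part of the lemma it takes a genuinely different route from the paper's. For \eqref{eq:lemMS1}, \eqref{eq:lemMS1-2} and \eqref{eq:lemMS2} the paper gives no argument at all: it imports them from \cite[Lemma~2.1]{MS96}, merely observing that the cited proof depends only on $w_{n}=1$; your substitution $k=d_{n}-\delta^{\ast}$ together with the explicit floor/ceiling computation makes this part self-contained, which is a real gain. For \eqref{eq:lemMS3} the paper runs a two-level case disjunction (first on which branch of the $\min$ realizes $\sigma$, then on which realizes $\sigma^{\ast}$) and, in the hard subcases, closes with a parity argument on $d_{n}-1-\mu+\mu^{\ast}$. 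You avoid all of that by rewriting $\sigma^{\ast}+\mu^{\ast}=\delta^{\ast}-\sigma^{\ast}=\max\bigl(d_{n-1}-w_{n-1},\,\lceil\delta^{\ast}/2\rceil\bigr)$ and bounding each argument of this $\max$ by each argument of the $\min$ defining $\sigma$; this is shorter, needs only $\delta_{j}\geq 0$, $d_{n}\geq 2$ and $d_{n}\geq d_{n-1}$, and dispenses with the parity bookkeeping entirely. One point worth making explicit: your appeal to $d_{n}\geq 2$ is not a cosmetic convenience — statement \eqref{eq:lemMS2} genuinely fails when $d_{n}=1$ (for instance $W=(1,1)$, $D=(2,1)$ gives $\sigma=0$ and $\mu=1=d_{n}$), so the lemma tacitly inherits that nondegeneracy assumption from its homogeneous source, and your parenthetical is the honest way to record it.
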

\begin{proof}
  The proof of statements~\eqref{eq:lemMS1} and~\eqref{eq:lemMS2} can
  be found in~\cite[Lemma~2.1]{MS96}.  This proof depends only
  on the value of $w_{n}$, and since we assume it to be $1$, it is
  also valid in our setting.  It also proves~\eqref{eq:lemMS1-2} as a
  side-result.

  For the statement~\eqref{eq:lemMS3}, we proceed by case disjunction
  on the values of $\sigma$.
  \begin{itemize}
    \item If $\sigma = \delta^{\ast}$:
    \begin{align}
      \label{eq:38}
      \sigma^{\ast} + \mu^{\ast} = \delta^{\ast} - \sigma^{\ast} \leq
      \delta^{\ast} = \sigma.
    \end{align}
    \item If $\sigma = \lfloor \delta/2 \rfloor$, then
    $\sigma = \left\lfloor (\delta^{\ast} + d_{n} -1)/2 \right\rfloor$
    which implies $2\sigma = \delta^{\ast} + d_{n} - 1 - \mu$ and
    $\mu= \delta \mod 2 \in \{0,1\}$ (from
    statement~\eqref{eq:lemMS1-2}).  Now consider the possible values
    of $\sigma^{\ast}$:
    \begin{itemize}
      \item if $\sigma^{\ast} = \lfloor \delta^{\ast}/2 \rfloor$, then
      $\mu^{\ast}=\delta^{\ast} \mod 2$, and thus
      $2\sigma = 2\sigma^{\ast} + \mu^{\ast} + d_{n} -1 - \mu$.  It
      implies that $d_{n} -1 - \mu + \mu^{\ast}$ is even, we shall
      prove that it is greater than or equal to $0$.

      From statement~\eqref{eq:lemMS2}, $d_{n}-1 - \mu \geq 0$, so if
      $\mu^{\ast}=0$, we are done.  If $\mu^{\ast} = 1$, by parity
      $d_{n}-1-\mu$ is odd, and thus
      $d_{n}-1-\mu \geq 1 = \mu^{\ast}$.

      It implies that:
      \begin{equation}
        \label{eq:51}
        2\sigma = 2\sigma^{\ast} + \mu^{\ast} + d_{n} - 1 - \mu \geq 2\sigma^{\ast} + 2\mu^{\ast};
      \end{equation}
      \item otherwise, $\sigma^{\ast} = \delta^{\ast\ast}$, and in
      that case
      \begin{equation}
        \sigma^{\ast}+\mu^{\ast} = \delta^{\ast} - \sigma^{\ast}  = \delta^{\ast} - \delta^{\ast\ast} = d_{n-1}-w_{n-1}
        \label{eq:7}
      \end{equation}
      which implies that:
      \begin{equation}
        \label{eq:53}
        d_{n}-1 \geq \sigma^{\ast} + \mu^{\ast} \text{ (since $w_{n-1} \geq w_{n}$ and $d_{n-1} \leq d_{n}$)}
      \end{equation}
      and
      \begin{equation}
        \label{eq:54}
        \delta^{\ast} = \delta^{\ast\ast} + d_{n-1}-w_{n-1} \geq \sigma^{\ast} + \mu^{\ast}.
      \end{equation}
      So we have:
      \begin{align}
        \label{eq:49}
        2\sigma &= \delta^{\ast} + d_{n} - 1 - \mu \\
        \label{eq:52}
        &\geq \sigma^{\ast} + \mu^{\ast} + \sigma^{\ast} + \mu^{\ast}
        - \mu.
      \end{align}
      Recall that $\mu \in \{0,1\}$, so by parity,
      $2\sigma \geq 2\sigma^{\ast} + 2\mu^{\ast}$, hence
      $\sigma \geq \sigma^{\ast} + \mu^{\ast}$.\qedhere
    \end{itemize}
  \end{itemize}
\end{proof}

The following theorem is a description of the shape of the Hilbert
series of a zero-dimensional complete intersection.  It states that it
is a self-reciprocal (or palindromic) polynomial, that is a polynomial
with symmetrical coefficients, and that these coefficients increase at
small degrees, then station, then decrease again.  Furthermore,
between every strict increase, they reach a step, which has width
$w_{n-1}$. For an example, see figure~\ref{fig:shape-HS-reg}, where
the width of the steps is 3, and the width of the central plateau is
5.

This is a generalization of a known result in the
homogeneous case, which has been proved for example
in~\cite[prop.~2.2]{MS96} (we will follow that proof for the weighted
case).  In the homogeneous case, there is no such step in the growth
of the coefficients, and they are strictly increasing, then
stationary, then strictly decreasing. 

\begin{theorem}
  \label{thm:HS-regseq}
  Let $W=(w_{1},\dots,w_{n})$ be a reverse chain-divisible system of
  weights, and $D=(d_{1},\dots,d_{n})$ a system of degrees such that
  for any $i \in \{1,\dots,n\}$, $d_{i}$ is divisible by $w_{1}$.
  Consider the formal series
  \begin{equation}
    \label{eq:26}
    S_{W,D}(T) = \frac{\prod_{i=1}^{n}(1-T^{d_{i}})}{\prod_{i=1}^{n}(1-T^{w_{i}})}
    = \sum_{d=0}^{\delta}a_{d}T^{d}
  \end{equation}
  The series $S_{W,D}$ is a self-reciprocal polynomial in $T$ (i.e.\
  for any $d \leq \delta$, $a_{d} = a_{\delta - d}$) and its
  coefficients satisfy the inequalities:
  \begin{equation}
    \begin{aligned}
      \label{eq:sym}
      \forall d \in \{0, \dots, \sigma-1\}, && a_{d} \leq a_{d+1} \\
      \forall d \in \{\sigma, \dots, \sigma + \mu-1\}, && a_{d} = a_{d+1} \\
      \forall d \in \{\sigma + \mu, \dots, \delta\}, && a_{d} \geq
      a_{d+1}
    \end{aligned}
  \end{equation}
  Furthermore, if $d<\sigma$ (resp. $d>\sigma+\mu$), the coefficients
  increase (resp.\ decrease) with steps, and these steps have width
  $w_{n-1}$:
  \begin{equation}
    \label{eq:55}
    \forall d \in \{0, \dots, \sigma-1\},\, a_{d}-a_{d-1}
    \begin{cases}
      >0 & \text{if $w_{n-1}$ divides $d$} \\
      =0 & \text{otherwise.}
    \end{cases}
  \end{equation}
\end{theorem}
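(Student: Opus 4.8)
The plan is to prove the two assertions separately, starting with the (easier) self-reciprocity. First I would record that $S_{W,D}$ is genuinely a polynomial: since $W$ is reverse chain-divisible, every $w_i$ divides $w_1$ and $w_1$ divides each $d_i$, so $w_i \mid d_i$ and each factor $(1-T^{d_i})/(1-T^{w_i}) = 1 + T^{w_i} + \dots + T^{d_i-w_i}$ is a polynomial; their product $S_{W,D}$ is therefore a polynomial of degree $\sum_i(d_i-w_i)=\delta$. Self-reciprocity then comes from the functional equation obtained by substituting $T\mapsto 1/T$ and clearing negative powers: the factors $(-1)^n$ arising from numerator and denominator cancel and the leftover exponent is exactly $\sum d_i - \sum w_i = \delta$, giving $S_{W,D}(1/T)=T^{-\delta}S_{W,D}(T)$. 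Comparing coefficients yields $a_d=a_{\delta-d}$, so it will suffice to establish the monotonicity statement for $d\le\delta/2$ and transport the rest by this symmetry.

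For the shape of the coefficients I would argue by induction on $n$, following the homogeneous proof of~\cite[prop.~2.2]{MS96} but inserting one extra reduction to absorb the weights. Using $w_n=1$, factor $S_{W,D}(T)=S^{\ast}(T)\cdot(1+T+\dots+T^{d_n-1})$, where $S^{\ast}=\prod_{i<n}(1-T^{d_i})/(1-T^{w_i})$ is the complete-intersection series of $(w_1,\dots,w_{n-1})$ and $(d_1,\dots,d_{n-1})$. Multiplication by $1+\dots+T^{d_n-1}$ is a sliding window of length $d_n$, so if $s^{\ast}_e$ are the coefficients of $S^{\ast}$ then $a_d-a_{d-1}=s^{\ast}_d-s^{\ast}_{d-d_n}$. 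The key point is that $w_{n-1}$ divides every $w_i$ and every $d_i$ for $i<n$, whence $S^{\ast}(T)=P(T^{w_{n-1}})$ with $P(U)=\prod_{i<n}(1-U^{d_i/w_{n-1}})/(1-U^{w_i/w_{n-1}})$; the weights of $P$ are again reverse chain-divisible with smallest weight $1$, so the induction hypothesis applies to $P$. Hence $s^{\ast}$ is supported on multiples of $w_{n-1}$, is symmetric about $\delta^{\ast}/2$, and its nonzero entries increase, then plateau, then decrease. Because $d_n$ is itself a multiple of $w_{n-1}$, the difference $s^{\ast}_d-s^{\ast}_{d-d_n}$ vanishes whenever $w_{n-1}\nmid d$, which is precisely the mechanism producing steps of width $w_{n-1}$ in~\eqref{eq:55}.

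It then remains to locate the sign changes of $s^{\ast}_d-s^{\ast}_{d-d_n}$ on the lattice $w_{n-1}\mid d$. By the symmetry of $s^{\ast}$ about $\delta^{\ast}/2$ and the unimodality just obtained, $s^{\ast}_d\ge s^{\ast}_{d-d_n}$ holds exactly when $d$ lies at least as close to $\delta^{\ast}/2$ as $d-d_n$ does, i.e.\ when $2d\le\delta^{\ast}+d_n$; combined with the inner plateau $[\sigma^{\ast},\delta^{\ast}-\sigma^{\ast}]$ of $s^{\ast}$ this produces the three regimes. The delicate part — and what I expect to be the main obstacle — is matching the transition abscissae to $\sigma=\min(\delta^{\ast},\lfloor\delta/2\rfloor)$ and the plateau width $\mu=\delta-2\sigma$, since the outcome depends on whether the window length $d_n$ is small or large relative to $\delta^{\ast}$, and on the parity of $\delta$. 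This is exactly the trichotomy recorded in Lemma~\ref{lem:lemMS}: the comparison of $d_n-\delta^{\ast}$ with $0,2,3$ in~\eqref{eq:lemMS1}, the bound $0\le\mu<d_n$ from~\eqref{eq:lemMS2}, and the inequality $\sigma^{\ast}+\mu^{\ast}\le\sigma$ from~\eqref{eq:lemMS3}, which together control which of $d$ and $d-d_n$ can land on the inner plateau. I would run the same case split as in the homogeneous argument, invoking these relations to verify that the increase stops precisely at $\sigma$ and that the coefficients are constant on $\{\sigma,\dots,\sigma+\mu\}$; the final decreasing regime $d>\sigma+\mu$ is then forced by the symmetry $a_d=a_{\delta-d}$ proved at the outset.
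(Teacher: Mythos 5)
Your proposal is correct and follows essentially the same route as the paper's proof: induction on $n$, the factorization $S_{W,D}(T)=\frac{1-T^{d_n}}{1-T}\,\bar S^{\ast}(T^{w_{n-1}})$ with the rescaled series on which the induction hypothesis applies, the resulting difference formula $a_d-a_{d-1}=s^{\ast}_d-s^{\ast}_{d-d_n}$ supported on multiples of $w_{n-1}$, and the case split governed by Lemma~\ref{lem:lemMS}. The only (harmless) divergence is that you get self-reciprocity from the functional equation $S_{W,D}(1/T)=T^{-\delta}S_{W,D}(T)$ rather than from the induction as the paper does.
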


\begin{figure}
  \centering
  \begin{tikzpicture}[x=0.7cm,y=0.7cm]
  \tikzset{dot/.style={draw, circle, inner sep=1pt, fill, color=black}}
  
  \draw [color=black!10,step=1] (-0.1,-0.1) grid (18.8,3.8);

  \drawSeries{{
    0 / 1, 1 / 1, 2 / 1,%
    3 / 2, 4 / 2, 5 / 2,%
    6 / 3, 7 / 3, 8 / 3, 9 / 3, 10 / 3, 11 / 3,%
    12 / 2, 13 / 2, 14 / 2,%
    15 / 1, 16 / 1, 17 / 1%
  }}{p}{0}{0}{0}

  \draw (p17) |- (18,0) node [dot] {};
  \draw [axis] (0,0) -- (19,0) node [anchor=north] {$d$};
  \draw [axis] (0,0) -- (0,4)  node [anchor=east]  {$a_{d}$};;

  \node [anchor=north] (s)   at (6,0) {$\sigma = 6$};
  \node [anchor=north] (spm) at (11,0) {$\sigma+\mu = 11$};
  \node [anchor=north] (d)   at (17,0) {$\delta = 17$};

  \draw [dashed] (s)   -- (p6);
  \draw [dashed] (spm) -- (p11);
  \draw [dashed] (d)   -- (p17);

  \draw [<->,thick] (6,1.5)  -- (11,1.5) node [pos=0.5,anchor=north] {$\mu=5$};
  \draw [<->,thick] (14,2.5) -- (17,2.5) node [pos=0.5,anchor=south] {$w_{n-1}=3$};
  
\end{tikzpicture}


  \caption{Shape of the Hilbert series of a $W$-homogeneous complete
    intersection for \hbox{$W=(3,3,1)$} and \hbox{$D=(9,6,3)$}}
  \label{fig:shape-HS-reg}
\end{figure}
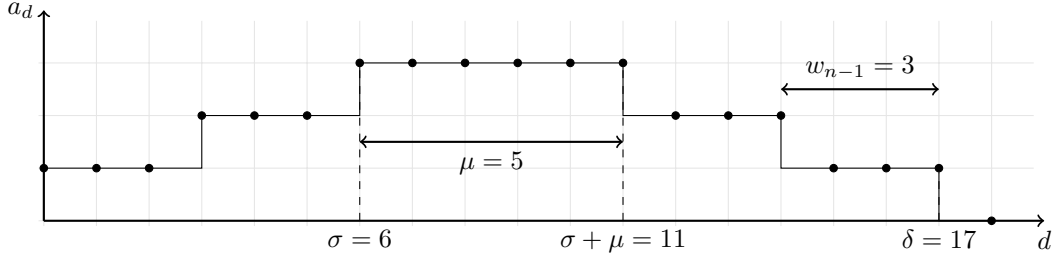

\begin{proof}
  We adapt the proof from~\cite[Prop.~2.2]{MS96} for the homogeneous case to the weighted case.
  Up to permutation of the $d_{i}$'s, we can assume that for any $i$, $d_{i} \geq d_{i-1}$.
  We proceed by induction on $n$.
  The result for the case $n=1$ is a consequence of the homogeneous case, since $w_{n}=1$.

  Let $n > 1$.  Let
  $\bar{W}^{\ast} = (w_{1}/w_{n-1},\dots,w_{n-1}/w_{n-1})$ and
  $\bar{D}^{\ast} = (d_{1}/w_{n-1},\dots,d_{n-1}/w_{n-1})$, and
  consider the series
  \begin{equation}
    \bar{S}^{\ast} \coloneq S_{W^{\ast},D^{\ast}}
    = \frac{\prod_{i=1}^{n-1}(1-T^{d_{i}/w_{n-1}})}{\prod_{i=1}^{n-1}(1-T^{w_{i}/w_{n-1}})}
    = \sum_{d=0}^{\delta}\bar{a}^{\ast}_{d}T^{d}.
    \label{eq:64}
  \end{equation}

  The Hilbert series $S$ can be computed from $\bar{S}^{\ast}$ with
  \begin{equation}
    \label{eq:40}
    S(T) = \frac{1-T^{d_{n}}}{1-T}\bar{S}^{\ast}(T^{w_{n-1}}) = (1-T^{d_{n}})\cdot \int \bar{S}^{\ast}(T^{w_{n-1}}),
  \end{equation}
  and so for any $d$, we have:
  \begin{gather}
    \label{eq:41}
    a_{d} = a^{\ast}_{d-d_{n}+1} + \dots + a^{\ast}_{d} \\
    \label{eq:42}
    a'_{d} \coloneq a_{d} - a_{d-1} = a^{\ast}_{d} -
    a^{\ast}_{d-d_{n}}
  \end{gather}
  where
  \begin{equation}
    \label{eq:43}
    a^{\ast}_{d} =
    \begin{cases}
      \bar{a}^{\ast}_{\bar{d}} & \text{if $d = \bar{d}w_{n-1}$} \\
      0 & \text{otherwise.}
    \end{cases}
  \end{equation}
  This proves that the polynomial is self-reciprocal:
  \begin{align}
    \label{eq:85}
    a_{\delta-d} &= a^{\ast}_{\delta - d-d_{n}+1} + \dots + a^{\ast}_{\delta - d} \\
    \label{eq:86}
    &= a^{\ast}_{d-d_{n}+1} + \dots + a^{\ast}_{d} \text{ since, by induction hypothesis, $\bar{S}^{\ast}$ is self-reciprocal} \\
    \label{eq:87}
    &= a_{d}
  \end{align}

  To prove the properties regarding the sign of
  $a'_{d} = a_{d}-a_{d-1}$, we shall consider two cases, according to
  the value of $d_{n}$.
  \begin{itemize}
    \item If $d_{n} \geq \delta^{\ast}+1$, then from
    statement~\eqref{eq:lemMS1} in Lemma~\ref{lem:lemMS}, and the
    definition of $\sigma$ and $\mu$, $\sigma = \delta^{\ast}$ and
    $\sigma+\mu = d_{n}-1$.  Let $0 \leq d \leq \sigma$, then
    $d \leq \delta^{\ast} < d_{n}$, and thus:
    \begin{equation}
      \label{eq:56}
      a'_{d} = a^{\ast}_{d} =
      \begin{cases}
        \bar{a}^{\ast}_{d/w_{n-1}}>0 & \text{if $w_{n-1}$ divides $d$;} \\
        0 & \text{otherwise.}
      \end{cases}
    \end{equation}
    Let $d \in \{\sigma+1, \dots, \sigma+\mu\}$, that implies that
    $\delta^{\ast} < d \leq d_{n}-1$, and thus:
    \begin{equation}
      \label{eq:57}
      a'_{d} = a^{\ast}_{d} = 0 \text{ (since $\delta^{\ast}$ is the degree of $S^{\ast}$).}
    \end{equation}
    \item If $d_{n} \leq \delta^{\ast}$, then from
    statement~\eqref{eq:lemMS1} again,
    $\sigma = \lfloor \delta/2 \rfloor$ and $\mu = \delta \mod 2$.
    Let $d \leq \sigma$, we want to prove that $a_{d}-a_{d-1}$ is
    greater or equal to zero, depending on whether $d$ is divisible by
    $w_{n-1}$.  We shall consider two ranges of values for $d$:
    \begin{itemize}
      \item if $d \leq \sigma^{\ast}+\mu^{\ast}$, then
      $d-d_{n} \leq \sigma^{\ast} + \mu^{\ast} - d_{n} <
      \sigma^{\ast}$
      (since $\mu^{\ast} < d_{n}$).  Recall that
      $a'_{d} = a^{\ast}_{d} - a^{\ast}_{d-d_{n}}$.  By hypothesis,
      $d_{n}$ is divisible by $w_{n-1}$, and so, either both $d$ and
      $d-d_{n}$ are divisible by $w_{n-1}$, or both are not.  Thus,
      \begin{equation}
        \label{eq:59}
        a'_{d}\,
        \begin{cases}
          > 0 & \text{if both $d$ and $d-d_{n}$ are divisible by $w_{n-1}$} \\
          = 0 & \text{if neither $d$ nor $d-d_{n}$ is divisible by
            $w_{n-1}$;}
        \end{cases}
      \end{equation}
      \item if $\sigma^{\ast}+\mu^{\ast} < d \leq \sigma$, then
      $2d \leq 2\sigma \leq \delta$; by definition,
      $\delta = \delta^{\ast} + d_{n} -1$, so
      $d-d_{n} < \delta^{\ast} - d$; furthermore,
      $\delta^{\ast} - d < \delta^{\ast} - (\sigma^{\ast} +
      \mu^{\ast}) = \sigma^{\ast}$,
      so in the end:
      \begin{equation}
        \label{eq:60}
        d-d_{n} < \delta^{\ast} - d < \sigma^{\ast}.
      \end{equation}
      Since by construction, $\delta^{\ast}$ is divisible by
      $w_{n-1}$, the same reasoning as before yields that
      \begin{equation}
        \label{eq:61}
        a'_{d} = a^{\ast}_{d} - a^{\ast}_{d-d_{n}} = a^{\ast}_{\delta^{\ast}-d} - a^{\ast}_{d-d_{n}}
      \end{equation}
      and
      \begin{equation}
        a'_{d} \,
        \begin{cases}
          > 0 & \text{if both $\delta^{\ast} - d$ and $d-d_{n}$ are divisible by $w_{n-1}$;} \\
          = 0 & \text{if neither $\delta^{\ast} - d$ nor $d-d_{n}$ is
            divisible by $w_{n-1}$.}
        \end{cases}
      \end{equation}
    \end{itemize}

Still assuming that $d_{n}$, let now $d \in \{\sigma+1, \dots, \sigma + \mu\}$, we want to prove that $a_{d}-a_{d-1}=0$.
If $\mu=0$ there is nothing to prove, so assume that $\mu =1$ and $d= \sigma+1$.
But then $\sigma+1-d_{n} = \delta-\sigma-d_{n} = \delta^{\ast} - \sigma$, and so by symmetry, $a'_{d} = a^{\ast}_{\sigma+1} - a^{\ast}_{\sigma+1-d_{n}} = 0$.\qedhere
  \end{itemize}
\end{proof}

\begin{rmk}
  The hypothesis that the weights are reverse chain-divisible is
  necessary.  As a counter-example, let $W=(3,2,2)$ and $D=(6,6,6)$.
  Then the Hilbert series of a complete intersection of $W$-degree $D$
  is illustrated in Figure~\ref{fig:cex-HS-reg-1}.  It is
  self-reciprocal, but the coefficients do not vary as predicted by
  Theorem~\ref{thm:HS-regseq}.
  
  The hypothesis that each of the $W$-degrees should be divisible by
  $w_{1}$ is also necessary.  As a counter-example, let $W=(4,2,1)$
  and $D=(8,8,2)$.  Then the Hilbert series of a complete intersection
  of $W$-degree $D$ is illustrated in Figure~\ref{fig:cex-HS-reg-2}: the
  width of the steps is greater than $w_{n-1}$.  Furthermore, following the
  proof, the parameters for this series should be defined by
  $\sigma=\floor{\delta/2}$ and $\mu = \delta \mod 2$, where
  $\delta = 11$, so that $\sigma=5$ and $\mu=1$.  However, we cannot
  reorder the degrees such that $d_{3} \geq d_{2} \geq d_{1}$, and we
  cannot deduce from statement~\eqref{eq:lemMS3} in
  Lemma~\ref{lem:lemMS} that $\sigma^{\ast}+\mu^{\ast} \leq \sigma$:
  indeed, we have $\sigma=4$ but $\sigma^{\ast}+\mu^{\ast}=6$.

  However, the fact that the Hilbert series is self-reciprocal for
  complete intersections is true even for general system of weights,
  and is a consequence of the Gorenstein property of complete
  intersections (see~\cite[Chap.~21]{eisenbud95}; this property is
  also central to the proof of Theorem~\ref{prop:froberg-n+1}).
\end{rmk}

\begin{figure}
  \centering
  \begin{tikzpicture}[x=0.7cm,y=0.7cm]

  \tikzset{seriesLine/.style={draw=none}}
  
  \draw [color=black!10,step=1] (-0.1,-0.1) grid (12.8,3.8);

  \drawSeries{{
      0/1, 1/0, 2/2, 3/1,
      4/3, 5/2, 6/2, 7/3,
      8/1, 9/2, 10/0, 11/1, 12/0
    }}{p}{0}{0}{0}

  \draw [axis] (0,0) -- (13,0) node [anchor=north] {$d$};
  \draw [axis] (0,0) -- (0,4)  node [anchor=east]  {$a_{d}$}; 

\end{tikzpicture}


  \caption{Hilbert series of a weighted homogeneous complete
    intersection with \hbox{$W=(3,2,2)$} and \hbox{$D=(6,6,6)$}}
  \label{fig:cex-HS-reg-1}
\end{figure}
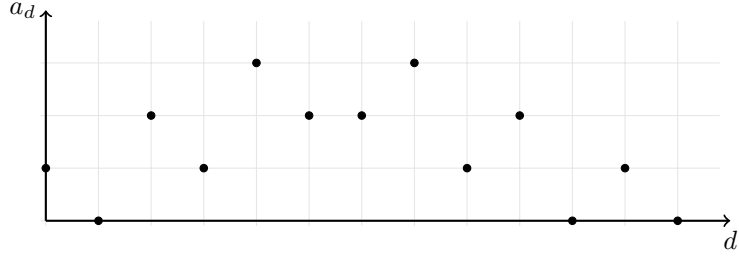

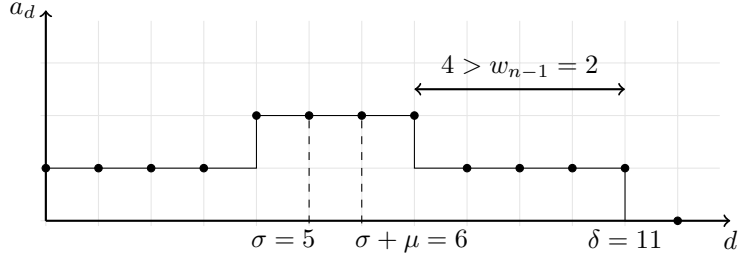
\begin{figure}
  \centering
  \begin{tikzpicture}[x=0.7cm,y=0.7cm]

  \draw [color=black!10,step=1] (-0.1,-0.1) grid (12.8,3.8);

  \drawSeries{{
      0/1, 1/1, 2/1, 3/1,
      4/2, 5/2, 6/2, 7/2,
      8/1, 9/1, 10/1, 11/1, 12/0
    }}{p}{0}{0}{0};

  \draw [axis] (0,0) -- (13,0) node [anchor=north] {$d$};
  \draw [axis] (0,0) -- (0,4)  node [anchor=east]  {$a_{d}$};

  \draw [<->,thick] (7,2.5) -- (11,2.5) node [pos=0.5,anchor=south] {$4 > w_{n-1}=2$};
  
  \node (d) [anchor=north] at (11,0) {$\delta = 11$};
  \node (s)                at (5,0)  {};
  \node (spm)              at (6,0)  {};

  \node [anchor=north east] at ($(s)+(0.3,0)$)   {$\sigma = 5$};
  \node [anchor=north west] at ($(spm)-(0.3,0)$) {$\sigma+\mu = 6$};
  
  \draw [dashed] ($(s)-(0,1pt)$)   -- (p5);
  \draw [dashed] ($(spm)-(0,1pt)$) -- (p6);

\end{tikzpicture}


  \caption{Hilbert series of a weighted homogeneous complete
    intersection with \hbox{$W=(4,2,1)$} and \hbox{$D=(8,8,2)$}}
  \label{fig:cex-HS-reg-2}
\end{figure}

\subsection{Degree of regularity of a weighted homogeneous complete
  intersection}
\label{sec:Degr-regul-quasi}

The degree of regularity of a zero-dimensional homogeneous regular
system is bounded by Macaulay's bound
\begin{equation}
  \label{eq:89}
  \dreg \leq \sum_{i=1}^{n} (d_{i} - 1) + 1,
\end{equation}
and, in practice, that bound is reached for generic systems.
The proof of this result uses the degree of the Hilbert series of the
system.  However, in the weighted case, the best result we can obtain
from the degree of the Hilbert series is \citep{FSV13}:
\begin{equation}
  \label{eq:91}
  \dreg[W] \leq \sum_{i=1}^{n} (d_{i} - w_{i}) + \max\{w_{j}\},
\end{equation}
and this bound is not sharp in general.  In particular, it appears
that this degree of regularity depends on the order we set on the
variables.

The following theorem is an improvement over the previous bound, under
the additional assumption that the system is in simultaneous Noether
position.  Recall that this property is generic, and that for reverse
chain-divisible systems of weights, it is always true for regular
sequences, up to a weighted homogeneous change of coordinates.

\begin{theorem}
  \label{thm:dreg}
  Let $W=(w_{1},\dots,w_{n})$ be a (not necessarily reverse
  chain-divisible) system of weights and $D=(d_{1},\dots,d_{n})$ be a
  strongly $W$-compatible system of $W$-degrees.  Further assume that
  for any $j \in \{2, \dots, n\}$, $d_{j} \geq w_{j-1}$.  Let
  $F=(f_{1},\dots,f_{n})$ be a system of $W$-homogeneous polynomials,
  with $W$-degree $D$, and assume that $F$ is in simultaneous Noether
  position for the variable ordering $X_{1} > X_{2} > \dots > X_{n}$.
  Then the $W$-degree of regularity of $F$ is bounded by
  \begin{equation}
    \label{eq:1}
    \dreg[W](F) \leq \sum_{i=1}^{n}(d_{i}-w_{i}) + w_{n}.
  \end{equation}
\end{theorem}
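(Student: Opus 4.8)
The plan is to rephrase $\dreg[W](F)$ intrinsically inside the $W$-graded algebra $A=\KK[X_1,\dots,X_n]$ and then to induct on $n$ along the Noether tower. By the order-compatibility~\eqref{eq:102}, running \F5 on $\hom{W}(F)$ with the \grevlex order is the same computation as producing a \Wgrevlex Gröbner basis of $I=\langle F\rangle$, and $\hom{W}$ sends $W$-degree to degree; since $\hom{W}(F)$ is a zero-dimensional family of $n$ forms in $n$ variables it is a regular sequence, so \F5 produces no reduction to zero and the highest degree it reaches is exactly the largest $W$-degree of a minimal generator of the monomial ideal $\mathrm{in}_W(I)$. Thus I would reduce the theorem to the claim that every minimal generator $m$ of $\mathrm{in}_W(I)$ satisfies $\deg_W(m)\le\delta+w_n$, where $\delta\coloneq\sum_{i=1}^n(d_i-w_i)$ is the degree of the Hilbert series of $A/I$ (regularity makes this series a polynomial of that exact degree, and strong $W$-compatibility makes its top coefficient nonzero). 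The base case $n=1$ is immediate: the only $W$-homogeneous polynomial of $W$-degree $d_1$ is $c\,X_1^{d_1/w_1}$, whose leading term has $W$-degree $d_1=\delta+w_1$.

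For the inductive step I would sort the minimal generators of $\mathrm{in}_W(I)$ by divisibility by the smallest variable $X_n$. If $X_n\mid m$, minimality makes $m/X_n$ a standard monomial, so $\deg_W(m/X_n)\le\delta$ and $\deg_W(m)\le\delta+w_n$ at once. The work is in the generators lying in $\bar A\coloneq\KK[X_1,\dots,X_{n-1}]$. The key observation is that, for the \grevlex order with $X_n$ smallest, a monomial free of $X_n$ exceeds every monomial of the same $W$-degree that involves $X_n$; hence for $W$-homogeneous $g\in I$ with $\LM_W(g)=m\in\bar A$ one has $\LM_W(g|_{X_n=0})=m$, and conversely a $W$-homogeneous lift recovers the same leading monomial. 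This identifies the minimal generators of $\mathrm{in}_W(I)$ that avoid $X_n$ with the minimal generators of $\mathrm{in}_W(\phi(I))$, where $\phi\colon A\to\bar A$ is the specialization $X_n\mapsto 0$ and $\phi(I)=\langle\bar f_1,\dots,\bar f_n\rangle$ with $\bar f_i=f_i(X_1,\dots,X_{n-1},0)$.

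Here simultaneous Noether position is decisive. By Proposition~\ref{prop:carac-NP}, $\bar F\coloneq(\bar f_1,\dots,\bar f_{n-1})$ is a zero-dimensional regular sequence in $\bar A$; one checks it is again in simultaneous Noether position and that its degrees stay strongly compatible and satisfy $d_j\ge w_{j-1}$, so the induction hypothesis bounds the minimal generators of $\bar I\coloneq\langle\bar F\rangle$ by $\bar\delta+w_{n-1}$, with $\bar\delta=\sum_{i=1}^{n-1}(d_i-w_i)$. To pass from $\bar I$ to the over-determined $\phi(I)=\bar I+\langle\bar f_n\rangle$ I would use that $\bar A/\bar I$ vanishes in $W$-degrees $>\bar\delta$, so every homogeneous $b$ with $\deg_W(b)>\bar\delta$ already lies in $\bar I$: writing $g\in\phi(I)$ as $a+b\,\bar f_n$ with $a\in\bar I$, a minimal generator $m=\LM_W(g)\notin\mathrm{in}_W(\bar I)$ forces $b\notin\bar I$, whence $\deg_W(m)=\deg_W(b)+d_n\le\bar\delta+d_n$. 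The remaining minimal generators lie in $\mathrm{in}_W(\bar I)$ and are controlled by the induction. Since $\bar\delta+w_{n-1}\le\bar\delta+d_n=\delta+w_n$ by the hypothesis $d_n\ge w_{n-1}$, all generators avoiding $X_n$ obey the bound, completing the induction.

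The step I expect to be the main obstacle is precisely this transition under $X_n=0$: the specialization destroys the square, well-determined shape of $F$ and replaces it by the over-determined $(\bar f_1,\dots,\bar f_n)$, so the induction hypothesis cannot be applied to $\phi(I)$ directly; the extra relation $\bar f_n$ has to be isolated, and it is the pairing of the degree estimate $\bar\delta+d_n$ for the \emph{new} generators with the arithmetic inequality $d_n\ge w_{n-1}$ that bridges the gap between the $w_{n-1}$ delivered by the $(n-1)$-variable statement and the sharper $w_n$ demanded here. A second, more bookkeeping-level point to secure is the initial translation, namely that the highest \F5 degree equals the top minimal-generator degree of $\mathrm{in}_W(I)$ and not the coarser generic Macaulay bound $\sum_i(d_i-1)+1$ attached to $\hom{W}(F)$ viewed as an unstructured regular sequence in $\KK[t_1,\dots,t_n]$.
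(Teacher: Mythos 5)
Your proof is correct and follows the paper's strategy in all essentials: induction on $n$ through the specialization $X_{n}\mapsto 0$, reduction of $\dreg[W]$ to the top $W$-degree of a minimal generator of the initial ideal, the vanishing of the Hilbert series beyond $\delta=\sum_{i=1}^{n}(d_{i}-w_{i})$ to dispose of the generators divisible by $X_{n}$, characterization~\ref{item:NP3} to transport simultaneous Noether position to the specialized system, and the hypothesis $d_{n}\geq w_{n-1}$ to close the induction. The one real divergence is in the $X_{n}$-free case. The paper applies the induction hypothesis to $F^{\ast}=(f_{1},\dots,f_{n-1})\vert_{X_{n}=0}$ alone and concludes that every $X_{n}$-free monomial of $W$-degree $>\delta+w_{n}\geq\delta^{\ast}+w_{n-1}$ (where $\delta^{\ast}=\sum_{i=1}^{n-1}(d_{i}-w_{i})$) is strictly divisible by an element of $\mathrm{in}_{W}(\langle F^{\ast}\rangle)$, which sits inside $\mathrm{in}_{W}(I)$ by the grevlex lifting argument that the paper leaves implicit; it therefore never has to confront the over-determined specialized ideal $\langle \bar{f}_{1},\dots,\bar{f}_{n}\rangle$. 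You instead identify the $X_{n}$-free minimal generators of $\mathrm{in}_{W}(I)$ with those of the full specialized ideal and must then isolate $\bar{f}_{n}$ by hand, bounding the resulting new generators by $\delta^{\ast}+d_{n}=\delta+w_{n}$ via the vanishing of $(\bar{A}/\bar{I})_{d}$ for $d>\delta^{\ast}$. Your variant is a little longer, but it makes explicit both the lifting lemma (an $X_{n}$-free $W$-grevlex leading monomial is unchanged when a homogeneous element is modified by a multiple of $X_{n}$) and the exact provenance of the improvement from $w_{n-1}$ to $w_{n}$, which in both arguments comes from the same two facts, $\delta^{\ast}+d_{n}=\delta+w_{n}$ and $d_{n}\geq w_{n-1}$. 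Both proofs are sound.
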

\begin{proof}
  We prove this by induction on $n$.  If $n=1$, we simply have one
  $W$-homogeneous polynomial to consider, and so $\dreg[W] = d_{1}$.

  So assume that $n>1$.  We consider the system $F^{\ast}$ defined by:
  \begin{equation}
    \label{eq:37}
    F^{\ast} = \big(f_{1}(X_{1},\dots,X_{n-1},0),\dots,f_{n-1}(X_{n-1},\dots,X_{n-1},0)\big).
  \end{equation}
  This system is $W^{\ast}$-homogeneous, for $W^{\ast} \coloneq (w_{1},\dots,w_{n-1})$.
  From the characterization~\ref{item:NP3} of Noether position, the
  sequence $F^{\ast}$ is in simultaneous Noether position.  As a
  consequence, the induction hypothesis applies to $F^{\ast}$, and the
  $W^{\ast}$-degree of regularity of $F^{\ast}$ is bounded by
  \begin{equation}
    \label{eq:dregstar}
    \dreg[W^{\ast}](F^{\ast}) \leq \sum_{i=1}^{n-1}(d_{i} - w_{i}) + w_{n-1}.
  \end{equation}

  Denote by $\delta$ the degree of the Hilbert series of $F$, that is
  $\delta = \sum_{i=1}^{n}(d_{i}-w_{i})$.  We want to prove that
  $\dreg \leq \delta +w_{n}$, i.e. that the Gröbner basis of $F$ need
  not contain any polynomial with $W$-degree greater than
  $\delta+w_{n}$.  Equivalently, let $\mu$ be a monomial with
  $W$-degree $d>\delta+w_{n}$, we will prove that $\mu$ is strictly
  divisible by a monomial in the initial ideal generated by $F$.
  
  Write $\mu = X_{n}^{\alpha}\cdot\mu'$, with
  $\mu' \in \K[X_{1},\dots,X_{n-1}]$, and proceed by induction on
  $\alpha$:
  \begin{itemize}
    \item if $\alpha = 0$, then $\mu \in \K[X_{1},\dots,X_{n-1}]$.  By
    assumption, $d_{n} \geq w_{n-1}$, hence:
    \begin{equation}
      \label{eq:dreg-geq-dregstar}
      \delta+w_{n}
      = \delta^{\ast} + w_{n-1} - w_{n-1} + d_{n} - w_{n} + w_{n}
      \geq \dreg^{\ast} + d_{n} - w_{n-1} \geq \dreg^{\ast};
    \end{equation}
    so $\mu$ has $W$-degree greater than $\dreg^{\ast}$, and by
    induction hypothesis, $\mu$ is strictly divisible by a monomial in
    the initial ideal generated by $F^{\ast}$;
    \item if $\alpha > 0$, then consider
    $\mu'' = X_{n}^{\alpha-1}\mu'$, it is a strict divisor of $\mu$.
    Furthermore, since $\deg(\mu) > \delta+w_{n}$,
    $\deg(\mu'') = \deg(\mu)-w_{n} > \delta$.  Recall that $\delta$ is
    by definition the degree of the Hilbert series of the ideal
    generated by $F$, so $\mu$ lies in that ideal.  \qedhere
  \end{itemize}
\end{proof}

\begin{remark}
  The hypothesis stating that for any $i$, $d_{i} \geq w_{i-1}$ is
  necessary.  For example, let $W=(2,1)$, $D=(2,1)$ and the system
  $F=(X,Y)$ in $\K[X,Y]$, it is $W$-homogeneous with $W$-degree $D$
  and in simultaneous Noether position.  This system has Hilbert
  series $1$ (the quotient vector span is generated by $\{1\}$), which
  has degree $\delta=0$.  But the Gröbner basis of the system is given
  by $F$ itself, and contains $X$, with $W$-degree $2$.

  More generally, without that hypothesis, we obtain the following
  bound for $\dreg[W](F)$:
  \begin{equation}
    \label{eq:35}
    \dreg[W](F) \leq \max\left\{ \sum_{i=1}^{k}(d_{i}-w_{i}) + w_{k}
     : k \in \{1,\dots,n\} \right\},
  \end{equation}
  and the proof is the same as that of Theorem~\ref{thm:dreg}, with
  the weaker induction hypothesis that
  $\dreg[W](F) \leq \max\left(\delta+w_{n},\dreg[W^{\ast}](F^{\ast})\right)$, which does not need inequality~\eqref{eq:dreg-geq-dregstar}.
\end{remark}

\begin{remark}
  We give examples of the behavior of both bounds in Table~\ref{tab:dreg}: we give the degree of regularity of a generic $W$-homogeneous system of $W$-degree $D$, and show how this degree of regularity varies if we change the order of the weights $W$.
\end{remark}

\begin{remark}
  Theorem~\ref{thm:dreg} gives an indication as to how to choose the
  order of the variables. Generically, in order to compute a \Wgrevlex
  Gröbner basis of the system, the complexity estimates will be better
  if we set the variables in decreasing weight order.
\end{remark}

While the new bound~\eqref{eq:1} is not sharp in full generality, it
is sharp whenever $w_{n}=1$. We conjecture that the sharp formula is
the following.

\begin{conjecture}
  Let $W=(w_{1},\dots,w_{n})$ be a system of weights, and
  $D=(d_{1},\dots,d_{n})$ a strongly $W$-compatible system of
  $W$-degrees. Let $F \in \K[X_{1},\dots,X_{n}]$ be a generic system
  of $W$-homogeneous polynomials. Let
  $\delta = \sum_{i=1}^{n} (d_{i}-w_{i})$ be the degree of the Hilbert
  series of $\langle F \rangle$, and let $d_{0}$ be defined as
  \begin{equation}
    \label{eq:128}
    d_{0} =
    \begin{cases}
      \delta + 1 & \textif \text{ there exists $i$ such that $w_{i}=1$} \\
      \delta - g & \otherwise,
    \end{cases}
  \end{equation}
  where $g$ is the Frobenius number of $W$ (that is, the greatest $W$-degree at which
  the set of monomials is empty). In other words, $d_{0}$ is the
  degree of the first ``unexpected'' zero coefficient in the Hilbert
  series (by definition of the degree in the first case, and by
  self-reciprocality of the Hilbert series in the second case).

  Then the degree of regularity of $F$ is the first multiple of
  $w_{n}$ greater than $d_{0}$:
  \begin{equation}
    \label{eq:130}
    \dreg = w_{n} \left\lceil \frac{d_{0}}{w_{n}} \right\rceil.
  \end{equation}
\end{conjecture}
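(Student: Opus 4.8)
The plan is to reuse the inductive skeleton of the proof of Theorem~\ref{thm:dreg} --- peeling off the smallest variable $X_{n}$ --- but to replace the crude estimate $\delta+w_{n}$ by the precise shape of the Hilbert series from Theorem~\ref{thm:HS-regseq}. First I would reduce to the case where $F$ is in simultaneous Noether position for the ordering $X_{1}>\dots>X_{n}$. When $W$ is reverse chain-divisible this reduction is legitimate: being in SNP is a generic property (Proposition~\ref{thm:gen-2}) and, by Theorem~\ref{thm:gen-linchg}, a generic $W$-homogeneous change of coordinates of $W$-degree $W$ brings any regular sequence into SNP without altering the generic value of $\dreg$. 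For systems of weights admitting no nontrivial change of coordinates this reduction is unavailable, which is already one reason why the statement is only conjectural in full generality. Once $F$ is in SNP, Proposition~\ref{prop:carac-NP} guarantees that the section $F^{\ast}=F(X_{1},\dots,X_{n-1},0)$ is again in SNP in $\KK[X_{1},\dots,X_{n-1}]$ with weights $W^{\ast}=(w_{1},\dots,w_{n-1})$, so the induction hypothesis applies to $F^{\ast}$.

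The analytic heart of the argument is to identify $d_{0}$ intrinsically with the Hilbert series. Since $F$ is generic, hence regular, $\HS_{A/I}$ is the self-reciprocal polynomial $S_{W,D}$ of Theorem~\ref{thm:HS-regseq}, with $a_{d}=a_{\delta-d}$. For small $d$ the coefficient $a_{d}$ is merely the number of monomials of $W$-degree $d$, so the zeros of $\HS_{A/I}$ lying below the generator degrees are exactly the gaps of the numerical semigroup generated by $W$, the largest being the Frobenius number $g$. By self-reciprocality, the reflected degree $\delta-g$ is the first degree at which $\HS_{A/I}$ vanishes for a reason coming from the ideal rather than from the absence of monomials; the positivity-with-steps statement of Theorem~\ref{thm:HS-regseq} must be invoked to certify that no earlier ``unexpected'' zero occurs on the central range. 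This is precisely the $d_{0}$ of the conjecture, and when some $w_{i}=1$ there are no gaps, so $d_{0}=\delta+1$.

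The combinatorial link between $d_{0}$ and the maximal $W$-degree reached by \F5 is the weighted analogue of the classical principle that, for a zero-dimensional ideal, the top Gröbner-basis degree is governed by the first vanishing of the relevant Hilbert function. Here the new leading monomials produced at the top of the staircase arise by multiplying a boundary standard monomial by the smallest variable $X_{n}$, of $W$-degree $w_{n}$; hence the $W$-degrees at which genuinely new generators of the initial ideal $\mathrm{in}(I)$ for the \Wgrevlex order can appear are spaced by $w_{n}$, and the largest one lands on the first multiple of $w_{n}$ at or above $d_{0}$. The upper bound $\dreg\le w_{n}\lceil d_{0}/w_{n}\rceil$ would then follow as in the proof of Theorem~\ref{thm:dreg}, by induction on the $X_{n}$-exponent of a monomial $\mu$: if $X_{n}\nmid\mu$ one compares $\mu$ with the section $F^{\ast}$, and if $X_{n}\mid\mu$ one divides off one $X_{n}$ and uses that $\deg_{W}(\mu)-w_{n}$ already exceeds $d_{0}$, so that $\mu$ lies in $I$.

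The hard part is \textbf{sharpness}, i.e.\ the matching lower bound, which is what keeps the statement a conjecture. One must exhibit an actual minimal generator of $\mathrm{in}(I)$ at $W$-degree $w_{n}\lceil d_{0}/w_{n}\rceil$: the natural candidate is $X_{n}\cdot m$, where $m$ is a standard monomial of $W$-degree $w_{n}(\lceil d_{0}/w_{n}\rceil-1)<d_{0}$ sitting on the boundary of the staircase, whose product with $X_{n}$ enters the initial ideal for the first time at that degree. Proving that such an $m$ exists and that the corresponding syzygy is not resolved earlier requires genericity control of the entire staircase together with the Gorenstein (socle-in-degree-$\delta$) structure of the complete intersection. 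The main obstacle is that, unlike in the homogeneous case, the Hilbert function of $A/I$ has genuine internal zeros (at the semigroup gaps and their reflections): one must show that the computation ``steps over'' the low gaps yet is forced to terminate precisely at the reflected top gap $d_{0}$, and that no degenerate low-degree configuration --- in which a single generator has $W$-degree exceeding $d_{0}$ --- invalidates the formula. Establishing this uniformly over the generic locus is the crux that the conjecture leaves open.
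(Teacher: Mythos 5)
The statement you were asked to prove is explicitly labelled a \emph{conjecture} in the paper, and the paper offers no proof of it: the only results actually established are the weaker upper bound $\dreg \leq \delta + w_{n}$ of Theorem~\ref{thm:dreg} (under simultaneous Noether position) and the remark following it. So there is no proof of the paper's to compare yours against. Your text is, accordingly, a proof \emph{plan} rather than a proof, and you are right that the sharpness half --- exhibiting a minimal generator of the initial ideal at $W$-degree $w_{n}\lceil d_{0}/w_{n}\rceil$ --- is exactly what is open; on that point your assessment agrees with the paper's own stance.

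However, your upper-bound sketch also contains a concrete error beyond the gap you acknowledge. In the induction of Theorem~\ref{thm:dreg}, the case $X_{n} \mid \mu$ works because $\mu'' = \mu/X_{n}$ has $W$-degree strictly greater than $\delta$, the degree of the Hilbert series, and every monomial of $W$-degree greater than $\delta$ lies in $I$. You propose to run the same step with $d_{0}$ in place of $\delta$; but in the second case of the conjecture $d_{0} = \delta - g < \delta$, and the Hilbert function of $A/I$ is generically nonzero at degrees between $d_{0}$ and $\delta$ --- for instance with $W=(3,2,2)$ and $D=(6,6,6)$ one has $\delta = 11$, $g=1$, $d_{0}=10$, $a_{10}=0$ but $a_{11}=1$ (this is exactly the series plotted in Figure~\ref{fig:cex-HS-reg-1}). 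So ``$\deg_{W}(\mu'') > d_{0}$'' does not imply $\mu'' \in I$, and the induction does not close. Relatedly, Theorem~\ref{thm:HS-regseq} is proved only for reverse chain-divisible weights with $w_{n}=1$ and all $d_{i}$ divisible by $w_{1}$; in that regime the semigroup has no gaps, $d_{0}=\delta+1$, and the conjecture reduces to the already-proved sharp bound of Theorem~\ref{thm:dreg}. Precisely in the regime where the conjecture says something new ($w_{n}>1$, or weights that are not reverse chain-divisible), the shape result you invoke to certify that no earlier ``unexpected'' zero occurs is unavailable, and the paper's own counterexamples show the monotone-with-steps shape genuinely fails there. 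Both the identification of $d_{0}$ with the first internal zero and the ``step over the gaps'' argument therefore rest on structure that is itself unproved, which is why the statement remains a conjecture.
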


\begin{table}
  \centering
  \begin{tabular}{ccccc}
    \toprule
    $W$ & $D$ & $\dreg$ & Bound~\eqref{eq:91} & Bound~\eqref{eq:1} \\
    \midrule
    $(3,2,1)$ & $(6,6,6)$ & 13 & 15 & 13 \\
    $(3,1,2)$ & $(6,6,6)$ & 14 & 15 & 14 \\
    $(1,2,3)$ & $(6,6,6)$ & 15 & 15 & 15 \\
  \end{tabular}
  \caption{Macaulay's bound on the degree of regularity of generic weighted homogeneous systems}
  \label{tab:dreg}
\end{table}

\section{Semi-regular systems}
\label{sec:Semi-regular-systems}

The study of systems with $m$ equations and $n$ unknowns, when
$m \leq n$, is reduced to the (generic) case of regular sequences,
sequences in Noether position or sequences in simultaneous Noether
position.

However, it is frequent in some applications that polynomial systems arise
with more equations than unknowns.  Experimentally, this usually makes
the resolution faster.  In the homogeneous case, this has been studied
through the notion of \emph{semi-regularity}.  This property extends
the regularity to the overdetermined case.  Fröberg's conjecture
\citep{Fro85} states that this property is generic, and as of today,
it is only known in a handful of cases (see~\cite[Thm.~1.5]{MS91} for a survey).

In this section, we give a definition of semi-regularity in the
weighted case, and some consequences on the Hilbert series and the
degree of regularity of the system.  Additionally, we show that
Fröberg's conjecture is true if $m = n+1$, as in the homogeneous case.

\subsection{Definitions and notations}
\label{sec:Definitions-1}

Let $n$ and $m$ be two integers, $m \geq n$, $W=(w_{1},\dots,w_{n})$ a
system of weights, and \hbox{$D=(d_{1},\dots,d_{n})$} a system of $W$-degrees.
Let $F=(f_{1},\dots,f_{m})$ be a system of $W$-homogeneous polynomials with $W$-degree $D$.
For any $i \in \{i, \dots, n\}$, write \hbox{$F_{i} = (f_{1},\dots,f_{i})$}.

\begin{definition}[Semi-regularity]
  We say that $F$ is \emph{semi-regular} if, for any
  $i \in \{1, \dots, m\}$ and for any $d \in \N$, the linear map given
  by the multiplication by $f_{i}$:
  \begin{equation}
    \label{eq:2}
    s_{i,d} : {\big(\K[X_{1},\dots,X_{n}]/\langle F_{i-1} \rangle\big)}_{d}
    \to^{\cdot f_{i}}
    {\big(\K[X_{1},\dots,X_{n}]/\langle F_{i-1} \rangle\big)}_{d+d_{i}}
  \end{equation}
  is full-rank (either injective or surjective).

  Furthermore, let
  \begin{equation}
    \label{eq:5}
    S_{D,W}(T) = \frac{\prod_{i=1}^{m}(1-T^{d_{i}})}{\prod_{i=1}^{n}(1-T^{w_{i}})} = \sum_{d=0}^{\infty}a_{d}T^{d}.
  \end{equation}
  We say that $F$ \emph{has a semi-regular Hilbert series} if the
  Hilbert series of $F$ is equal to $\lfloor S_{D,W}(T) \rfloor$, that
  is the series truncated at the first coefficient less than or equal
  to zero.
\end{definition}

The motivation behind these definitions is given by the following
classical result in the homogeneous case (see for
example~\cite[prop.~1]{Pardue2010}):
\begin{prop}
  \label{prop:HS-semireg-homo}
  If $W=(1,\dots,1)$, the following conditions are equivalent:
  \begin{enumerate}
    \item the system $F$ is semi-regular;
    \item for any $1 \leq i \leq n$, the system $F_{i}$ has a
    semi-regular Hilbert series.
  \end{enumerate}
\end{prop}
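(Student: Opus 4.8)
The plan is to reduce the whole statement to a single \emph{one-step} lemma and then stitch the steps together by induction on $i$. Write $B_{i}\coloneq A/\langle F_{i}\rangle$, with $B_{0}=A$, and set $b^{(i)}_{d}=\dim_{\K}(B_{i})_{d}$. The multiplication map $s_{i,d}$ sits in the right-exact sequence $(B_{i-1})_{d}\to(B_{i-1})_{d+d_{i}}\to(B_{i})_{d+d_{i}}\to 0$, so that $b^{(i)}_{d+d_{i}}=b^{(i-1)}_{d+d_{i}}-\operatorname{rank}(s_{i,d})$. Since $\operatorname{rank}(s_{i,d})\le\min\!\big(b^{(i-1)}_{d},b^{(i-1)}_{d+d_{i}}\big)$ with equality precisely when $s_{i,d}$ is full rank, fullness of all the $s_{i,d}$ in step $i$ is equivalent to $b^{(i)}_{e}=\max\!\big(0,\,b^{(i-1)}_{e}-b^{(i-1)}_{e-d_{i}}\big)$ for every $e$. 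Thus the core of the argument is the lemma: for a standard graded quotient $B$ and a homogeneous $f$ of degree $d'$, multiplication by $f$ is full rank in every degree if and only if $\HS_{B/fB}=\floor{(1-T^{d'})\HS_{B}}$.

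To prove that lemma I would first note that, writing $c_{e}=b_{e}-b_{e-d'}$, the operation $\max(0,\cdot)$ coincides with the truncation $\floor{\cdot}$ on the sequence $(c_{e})$ exactly when $c_{e}$ never becomes positive again after its first non-positive value. This is where the hypothesis $W=\mathbf{1}$ enters decisively: because $B$ is standard graded we have $B_{e}=A_{1}\cdot B_{e-1}$, so if multiplication by $f$ is surjective from some degree it is surjective from every higher degree. Hence the set of degrees where the map is surjective, equivalently where $c_{e}\le 0$, is up-closed, so $(c_{e})$ changes sign only once and $\max(0,c_{e})$ is literally the truncated sequence; this gives the forward implication. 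The converse is softer: if $\HS_{B/fB}=\floor{(1-T^{d'})\HS_{B}}$, then in each degree the actual dimension $b_{e}-\operatorname{rank}(s)$ equals $\max(0,c_{e})$, which forces the rank to be maximal, i.e.\ full rank.

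With the lemma in hand, two ingredients finish the proof. The first is a purely formal \emph{idempotence} of truncation, $\floor{(1-T^{d})\floor{P}}=\floor{(1-T^{d})P}$ for any series $P$: the two coefficient sequences agree strictly below the first non-positive coefficient $e_{0}$ of $P$, while both are non-positive at $e_{0}$, so truncating each at its first non-positive coefficient yields the same series. Iterating this identity from $\HS_{A}=1/(1-T)^{n}$ shows that the one-step recursion $P\mapsto\floor{(1-T^{d_{i}})P}$ computes exactly $\floor{S_{D_{i},W}}$. The second ingredient is the induction itself: by the lemma, semi-regularity is equivalent, step by step, to $\HS_{B_{i}}=\floor{(1-T^{d_{i}})\HS_{B_{i-1}}}$; feeding in the inductive hypothesis $\HS_{B_{i-1}}=\floor{S_{D_{i-1},W}}$ together with the idempotence identity gives $\HS_{B_{i}}=\floor{S_{D_{i},W}}$, and the reverse direction runs identically, the base case being immediate since $\HS_{A}$ has only positive coefficients. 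The main obstacle, and the only place genuine algebra is used, is the forward direction of the one-step lemma, namely the propagation of surjectivity; this is exactly the feature that fails for a general system of weights, which is why the analogous weighted statement later requires reverse chain-divisibility.
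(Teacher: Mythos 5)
Your proof is correct. A point of comparison worth making explicit: the paper does not actually prove this proposition --- it is quoted as classical with a reference to Pardue, and the only implication established in the text is the weighted analogue of (2)~$\Rightarrow$~(1) in the proposition that follows. Your handling of that direction (the right-exact sequence giving $b^{(i)}_{e}=b^{(i-1)}_{e}-\operatorname{rank}(s_{i,e-d_{i}})$, the sandwich $\max(0,c_{e})\leq b_{e}-\operatorname{rank}(s)=\floor{\cdot}_{e}\leq\max(0,c_{e})$ forcing maximal rank, and the truncation-idempotence identity $\floor{(1-T^{d})\floor{P}}=\floor{(1-T^{d})P}$ feeding the induction) is essentially the same computation the paper performs with the exact sequence $0\to K_{m,d}\to R^{\ast}_{d}\to R^{\ast}_{d+d_{m}}\to R_{d+d_{m}}\to 0$. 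The substance you add is the forward direction (1)~$\Rightarrow$~(2), and your key observation is the right one: in a standard graded quotient $B$ one has $B_{e}=A_{1}\cdot B_{e-1}$, so surjectivity of multiplication by $f$ propagates to all higher degrees, the set of degrees with $c_{e}\leq 0$ is up-closed, and the pointwise $\max(0,c_{e})$ therefore coincides with the truncation at the first non-positive coefficient. You also correctly diagnose why this is the fragile step: for a general system of weights there is no weight-$1$ generator to push surjectivity upward, which is precisely why the paper's weighted substitute (Theorem~\ref{thm:HS-semireg}) instead assumes reverse chain-divisibility and routes the argument through the combinatorial description of the shape of $\floor{S_{D,W}}$ in Theorem~\ref{thm:shape-HS-sr}. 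Your route is shorter and self-contained in the homogeneous case; the paper's heavier machinery is what survives the generalization.
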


For weighted homogeneous systems, the converse implication ($2
\implies 1$) is still
true:
\begin{prop}
  Let $F$ be a $W$-homogeneous system such that, for any $1 \leq i \leq n$, the system $F_{i}$ has a semi-regular Hilbert series.
  Then $F$ is semi-regular.
\end{prop}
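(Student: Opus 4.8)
The plan is to reduce the full-rank condition on all the maps $s_{i,d}$ to a purely combinatorial identity between formal power series, and to prove that identity by a case analysis on where the two series get truncated. First I would fix $i$ and set $B = \KK[\mathbf{X}]/\langle F_{i-1}\rangle$ (with the convention $F_{0}=\emptyset$, so $B = \KK[\mathbf{X}]$). Multiplication by $f_{i}$ gives the graded exact sequence
\[ 0 \to K \to B(-d_{i}) \xrightarrow{\cdot f_{i}} B \to \KK[\mathbf{X}]/\langle F_{i}\rangle \to 0, \]
where $K$ is the kernel. Writing $\kappa(T) = \sum_{d}\dim_{\KK}(\Ker s_{i,d})\,T^{d}$ for the Hilbert series of the kernel (placed in the source degree), the alternating sum of Hilbert series yields $\HS_{A/\langle F_{i}\rangle}(T) = (1-T^{d_{i}})\,\HS_{B}(T) + T^{d_{i}}\kappa(T)$. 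Since $\kappa$ has nonnegative coefficients, and since $s_{i,d}$ is full rank exactly when $\dim\Ker s_{i,d}$ is as small as the source and target dimensions allow, one checks that \emph{all} the maps $s_{i,\bullet}$ are full rank if and only if, for every $e$,
\[ [\HS_{A/\langle F_{i}\rangle}]_{e} = \max\!\big(0,\ [\HS_{B}]_{e} - [\HS_{B}]_{e-d_{i}}\big) = \max\!\big(0,\ [(1-T^{d_{i}})\HS_{B}]_{e}\big). \]
This equivalence uses only the Hilbert functions of $A/\langle F_{i-1}\rangle$ and $A/\langle F_{i}\rangle$, so no induction is really needed: it suffices to verify it separately for each $i$.

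By assumption $\HS_{B} = \floor{S_{D_{i-1},W}}$ and $\HS_{A/\langle F_{i}\rangle} = \floor{S_{D_{i},W}}$, and these two truncations are related by $S_{D_{i},W} = (1-T^{d_{i}})\,S_{D_{i-1},W}$. Thus the whole proposition reduces to the formal-series identity
\[ \floor{(1-T^{d_{i}})\,S} \;=\; \max\!\big(0,\ (1-T^{d_{i}})\,\floor{S}\big) \quad\text{coefficientwise}, \]
where $S = S_{D_{i-1},W} = \sum_{e}s_{e}T^{e}$. I would prove it by letting $N$ be the first index with $s_{N}\leq 0$ (so $s_{e}>0$ for $e<N$ and $\floor{S}$ keeps exactly these coefficients) and $N'$ the first index with $s_{e}-s_{e-d_{i}}\leq 0$. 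A short computation gives $N'\leq N$: indeed $s_{N}-s_{N-d_{i}} \leq s_{N}\leq 0$ since $s_{N-d_{i}}\geq 0$. Then I would compare the two sides degree by degree: for $e<N'$ both sides equal the positive coefficient $s_{e}-s_{e-d_{i}}$; for $e\geq N$ the left side vanishes and the right side is $-[\floor{S}]_{e-d_{i}}\leq 0$; and in the remaining band $N'\leq e<N$ the left side vanishes while the right side equals $s_{e}-s_{e-d_{i}}$, which is $\leq 0$ precisely because $e\geq N'$. Each comparison uses only the strict positivity of the coefficients of $S$ before $N$ and the nonnegativity of truncated coefficients.

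The main obstacle is this last band $N'\leq e<N$: it is exactly where the naive multiply-then-truncate and truncate-then-multiply operations could disagree. The argument hinges on the fact that $e\geq N'$ forces $s_{e}\leq s_{e-d_{i}}$, together with the observation that $e<N$ keeps the index $e-d_{i}$ inside the positive (untruncated) range, so that $[\floor{S}]_{e-d_{i}}=s_{e-d_{i}}$. It is worth stressing that, unlike the forward implication, this direction needs no monotonicity of the graded algebra $B$ (such as ``once the Hilbert function reaches zero it stays zero''), which is precisely the property that fails for general weighted gradings; this is why the converse survives in the weighted setting while the direct implication does not. Finally, I would remark that the argument applies verbatim to every index $i$ for which both $F_{i-1}$ and $F_{i}$ have a semi-regular Hilbert series, which is exactly what the hypothesis provides.
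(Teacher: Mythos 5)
Your setup is sound and is essentially the paper's own argument in different clothing: the four-term exact sequence and the observation that full rank of every $s_{i,\bullet}$ is equivalent to $[\HS_{A/\langle F_i\rangle}]_e=\max\bigl(0,[\HS_B]_e-[\HS_B]_{e-d_i}\bigr)$ are exactly what the paper uses (phrased there as ``either $a_{d+d_m}=0$ or $\dim K_{m,d}=0$''), and your remark that no genuine induction is needed is correct. The gap is in your middle band $N'\leq e<N$. You justify $s_e-s_{e-d_i}\leq 0$ there ``precisely because $e\geq N'$'', but you defined $N'$ as the \emph{first} index where the difference is nonpositive; nothing prevents the difference from becoming positive again before $N$. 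As a statement about an arbitrary series, the identity $\lfloor(1-T^{d_i})S\rfloor=\max\bigl(0,(1-T^{d_i})\lfloor S\rfloor\bigr)$ is false: take $d_i=1$ and $S=1+2T+T^2+3T^3$, so that $N'=2$, $N=4$, the left side is $1+T$, yet the right side has coefficient $2$ in degree $3$. The ``once the differences go nonpositive they stay nonpositive up to $N$'' behaviour is a shape property of $S_{D,W}$ of exactly the kind Theorem~\ref{thm:shape-HS-sr} establishes, and that theorem requires reverse chain-divisible weights, a hypothesis this proposition does not make. So the purely combinatorial route cannot be closed as written.

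The repair is cheap and is what the paper actually does. You only need the inequality $[\lfloor(1-T^{d_i})S\rfloor]_e\leq\max\bigl(0,[(1-T^{d_i})\lfloor S\rfloor]_e\bigr)$, because the reverse inequality holds unconditionally: your own exact sequence gives $[\HS_{A/\langle F_i\rangle}]_e=[\HS_B]_e-[\HS_B]_{e-d_i}+\dim\Ker s_{i,e-d_i}\geq\max\bigl(0,[\HS_B]_e-[\HS_B]_{e-d_i}\bigr)$, with equality exactly when the map is full rank. In the band the left side is $0$ and the right side is $\geq 0$, so the needed inequality is trivial there; algebraically, $(A/\langle F_i\rangle)_e=0$ already forces $s_{i,e-d_i}$ to be surjective, hence full rank, with no information required about the sign of $s_e-s_{e-d_i}$. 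This is precisely the paper's case split: when the target coefficient is positive (your range $e<N'$) one deduces injectivity by comparing the actual and formal recursions, and when it vanishes (all of $e\geq N'$, band included) surjectivity is automatic. With that one substitution your proof goes through, and your final claim that this direction needs no monotonicity of the graded pieces becomes literally true rather than quietly violated in the band.
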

\begin{proof}
  We prove this by induction on the number $m$ of polynomials.  The
  initial case is $m=n$, and it is a direct consequence of the
  characterization of a regular sequence.

Assume $m>n$.
Write $R^{\ast} = \K[X_{1},\dots,X_{n}]/\langle f_{1},\dots,f_{m-1} \rangle$, and for any $d \in \N$, consider the multiplication map
  \begin{equation}
    \label{eq:31}
    s_{m,d} = R^{\ast}_{d} \to^{\cdot f_{m}} R^{\ast}_{d+d_{m}}
  \end{equation}
  Let $K_{m,d} = \ker(s_{m,d})$.  Write $S(T)$ the Hilbert series of
  $F$, $a_{d}$ its coefficient at degree $d$, $\delta$ its degree,
  $H(T) = \left( \prod_{i=1}^{m}(1-T^{d_{i}})\right) /\left(
   \prod_{i=1}^{n}(1-T^{w_{i}}) \right)$,
  $b_{d}$ its coefficient at degree $d$, and $S^{\ast}(T)$,
  $a^{\ast}_{d}$, $\delta^{\ast}$, $H^{\ast}(T)$ and $b^{\ast}_{d}$
  their counterparts with $m-1$ polynomials.  We know, from the exact
  sequence
  \begin{equation}
    \label{eq:32}
    0 \to K_{m,d} \to R_{d}^{\ast} \to^{s_{m,d}} R^{\ast}_{d+d_{m}} \to R_{d+d_{m}} \to 0
  \end{equation}
  that the following identity holds
  \begin{equation}
    \label{eq:33}
    a_{d+d_{m}} = a^{\ast}_{d+d_{m}} - a^{\ast}_{d} + \dim(K_{m,d}).
  \end{equation}
  We want to prove that either $a_{d+d_{m}}=0$ or $\dim(K_{m,d})=0$.
  Assume that $a_{d+d_{m}} > 0$, that means that
  $d + d_{m}\leq \delta$ and $a_{d+d_{m}}=b_{d+d_{m}}$, so:
  \begin{equation}
    \label{eq:34}
    \begin{aligned}[t]
      a_{d+d_{m}} &= a_{d+d_{m}} - a^{\ast}_{d} + \dim(K_{m,d}) \\
      &= b_{d+d_{m}} \\
      &= b^{\ast}_{d+d_{m}} - b^{\ast}_{d} \text{ by definition of $H(T)$} \\
      &= a^{\ast}_{d+d_{m}} - a^{\ast}_{d} \text{ since
        $\delta^{\ast} \geq \delta$.}
    \end{aligned}
  \end{equation}
  Thus we have $K_{m,d}=0$.
\end{proof}


\subsection{Hilbert series of a semi-regular sequence}
\label{sec:Cons-Hilb-seri}

In this section, we prove that for reverse chain-divisible systems of
weights, semi-regular sequences have a semi-regular Hilbert series.
First, we characterize the shape of semi-regular Hilbert series, by
extending Theorem~\ref{thm:HS-regseq} to the overdetermined case.

\begin{theorem}
  \label{thm:shape-HS-sr}
  Let $m \geq n \geq 0$ be two integers.  Let
  $W = (w_{1},\dots,w_{n})$ be a reverse chain-divisible system of
  weights, and let $D=(d_{1},\dots,d_{m})$ be a system of $W$-degrees
  such that $d_{1},\dots,d_{n}$ are all divisible by $w_{1}$.  Write
  \begin{equation}
    \label{eq:14}
    S_{D,W}(T) = \frac{\prod_{i=1}^{m}(1-T^{d_{i}})}{\prod_{i=1}^{n}(1-T^{w_{i}})} = \sum_{d=0}^{\infty}a_{d}T^{d}.
  \end{equation}
  Then there exist $W$-degrees $\sigma$, $\delta$ such that
  \begin{gather}
    \label{eq:93}\tag{$\sigma1$}
    \forall\, d \in \{1,\dots,\sigma\},\; a_{d} \geq a_{d-1} \\
    \label{eq:94}\tag{$\sigma2$}
    a_{\sigma} > a_{\sigma-1} \\
    \label{eq:95}\tag{$\sigma3$}
    \forall\, d \in \{\sigma+1,\dots,\delta\},\; a_{d} \leq a_{d-1} \\
    \label{eq:96}\tag{$\delta1$}
    a_{\delta} > 0, \, a_{\delta+1} \leq 0.
  \end{gather}
  Furthermore, if $m > n$, let $D^{\ast} = (d_{1},\dots,d_{m-1})$ and
  define $\delta^{\ast}$ as above for the series $S_{D^{\ast},W}$.
  Then the following statements hold:
  \begin{equation}
    \label{eq:92}\tag{$\delta2$}
    \begin{cases}
      \forall\, d \in \{\delta+1,\dots,\delta^{\ast}\}, \,a_{d} \leq 0 & \text{if } n=0 \\
      \forall\, d \in \{\delta+1,\dots,\delta^{\ast} + d_{m}\}, \,a_{d} \leq 0 & \text{if } n>0. \\
    \end{cases}
  \end{equation}
  If $n>0$, let $W^{\ast}=(w_{1},\dots,w_{n-1})$, and let $\delta'$ be
  the degree of $\lfloor S_{D,W^{\ast}}(T) \rfloor$.  If $n=0$, let
  $\delta' = 0$.  Then the following equality holds
  \begin{equation}
    \label{eq:97}\tag{$\sigma4$}
    \sigma = \delta'. 
  \end{equation}
\end{theorem}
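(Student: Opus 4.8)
The plan is to prove the theorem by induction, peeling off one weight at a time through the identity
\[
  S_{D,W}(T) = \frac{1}{1-T^{w_{n}}}\, S_{D,W^{*}}(T),
  \qquad W^{*} = (w_{1},\dots,w_{n-1}),
\]
so that the coefficients of $S_{D,W}$ are the running sums, in steps of $w_{n}$, of those of $S_{D,W^{*}}$: writing $S_{D,W^{*}} = \sum_{d} c_{d}T^{d}$ one has $a_{d} = \sum_{k\ge 0} c_{d-kw_{n}}$, equivalently $a_{d} = a_{d-w_{n}} + c_{d}$. Since $W$ is reverse chain-divisible so is $W^{*}$, and $d_{1},\dots,d_{n-1}$ remain divisible by $w_{1}$, while $m \ge n-1$; hence the induction hypothesis applies to the one-fewer-weight series $S_{D,W^{*}}$. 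The base case $n=0$ is immediate, since then $S_{D,W} = \prod_{i=1}^{m}(1-T^{d_{i}})$, the achievable degrees collapse to $\{0\}$, and one takes $\sigma=\delta'=0$.

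Granting the shape of $S_{D,W^{*}}$ by induction, the shape of $S_{D,W}$ would be read off from the elementary fact that a running sum is nondecreasing exactly while its summand is non-negative and nonincreasing while its summand is non-positive. Concretely, the inductive statement \emph{($\delta1$)} for $S_{D,W^{*}}$ gives $c_{d}>0$ for achievable $d\le\delta'$ and $c_{d}\le 0$ afterwards, i.e.\ a single sign change of the integrand at $\delta'$; therefore $a_{d}=a_{d-w_{n}}+c_{d}$ is nondecreasing up to $\delta'$ and nonincreasing after it. This simultaneously produces \emph{($\sigma1$)}--\emph{($\sigma3$)} with $\sigma=\delta'$, which is exactly \emph{($\sigma4$)}, and locates $\delta$ via \emph{($\delta1$)} as the last achievable degree where the running sum is still positive. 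The support bookkeeping is what makes this clean: as $W^{*}$ is reverse chain-divisible, the achievable $W^{*}$-degrees are the multiples of $w_{n-1}$, and because $w_{n}\divides w_{n-1}$ the step $w_{n}$ is compatible with that progression, so the zero coefficients forced by the grading never spuriously trigger the truncation.

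The technical heart, and the step I expect to be the main obstacle, is propagating \emph{($\delta2$)}: that once $a_{d}$ turns non-positive it stays non-positive on the whole range $\{\delta+1,\dots,\delta^{*}+d_{m}\}$ (resp.\ $\{\delta+1,\dots,\delta^{*}\}$ for $n=0$), where $\delta^{*}$ refers to the series with one fewer \emph{generator}. From $a_{d}=a_{d-w_{n}}+c_{d}$, non-positivity of the running sum persists precisely as long as the integrand $c_{d}$ stays non-positive, and the length of that interval is governed by the inductive \emph{($\delta2$)} for $S_{D,W^{*}}$. The delicate point is matching the two endpoints: relating $\delta^{*}$ (for $S_{D^{*},W}$) to the quantity delivered by the induction on the weights. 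For this I would bring in the complementary generator-peeling identity $S_{D,W}(T) = (1-T^{d_{m}})\,S_{D^{*},W}(T)$, giving $a_{d}=a_{d}^{*}-a_{d-d_{m}}^{*}$, together with the numerical inequalities of Lemma~\ref{lem:lemMS} (notably $0\le\mu<d_{n}$ and $\sigma^{*}+\mu^{*}\le\sigma$), which are what pin down the exact width of the non-positive plateau.

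Thus the cleanest organisation is a double induction: on $n$, via the integration identity, to obtain the shape and \emph{($\sigma4$)}; and an inner induction on $m$, via the generator-peeling identity, to obtain \emph{($\delta2$)} with the sharp right endpoint $\delta^{*}+d_{m}$. Reverse chain-divisibility is used throughout to keep the support equal to a single arithmetic progression, so that every comparison of coefficients is made between genuinely comparable (achievable) degrees; without it, as the counterexamples after Theorem~\ref{thm:HS-regseq} already show, the sign pattern and the truncation degenerate and the argument breaks down.
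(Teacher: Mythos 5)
Your overall architecture is the paper's: a double induction on $n$ and $m$, with the weight-peeling identity $S_{D,W}=S_{D,W^{*}}/(1-T^{w_{n}})$ (equivalently, since $w_{n}=1$, $(1-T)S=S_{D,W^{*}}$) driving the $\sigma$-statements, the generator-peeling identity $S_{D,W}=(1-T^{d_{m}})S_{D^{*},W}$ driving the $\delta$-statements, the rescaling by $w_{n-1}$ to keep the support an honest arithmetic progression, and Theorem~\ref{thm:HS-regseq} supplying the base case $m=n$. So the comparison reduces to whether your individual steps go through.

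One step does not follow as you state it. To get $(\sigma1)$--$(\sigma3)$ you assert that the integrand $c_{d}$ (the coefficients of $S_{D,W^{*}}$) has a \emph{single} sign change at $\delta'$, i.e.\ is non-positive for all $d>\delta'$. The induction hypothesis does not give you this: applied to the rescaled series it controls the sign of $c_{d}$ only on a bounded window beyond $\delta'$ (up to $\delta'^{*}$, or $\delta'^{*}+d_{m}$ when $n-1>0$, where $\delta'^{*}$ is the truncation degree of $S_{D^{*},W^{*}}$); beyond that window nothing is known, and these untruncated series do in general change sign again further out. You therefore must check that the controlled window reaches $\delta$, and the safe version of the window (valid also when $n-1=0$) stops at $\sigma^{*}=\delta'^{*}$, which can be strictly less than $\delta$. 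The paper closes this in two pieces: on $\{\delta'+1,\dots,\sigma^{*}\}$ it uses $(\delta2)$ for the rescaled derivative series together with $(\sigma4)$ for $S^{*}$; on $\{\sigma^{*}+1,\dots,\delta\}$ it switches to the generator-peeling identity $a'_{d}=a'^{*}_{d}-a'^{*}_{d-d_{m}}$ and uses the inequality $\delta-d_{m}<\sigma^{*}$ (extracted from the shape of $S^{*}$) plus the unimodality of $\Delta S^{*}$. You flagged exactly this kind of endpoint-matching as the hard part of $(\delta2)$, but in fact $(\delta1)$ and $(\delta2)$ fall out directly from $a_{d}=a^{*}_{d}-a^{*}_{d-d_{m}}$ and the inductive shape of $S^{*}$; the place where the extra two-range argument is genuinely needed is $(\sigma3)$, and your plan as written would stall there.
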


\begin{proof}
  We prove the theorem by induction on $n$, and for any given $n$,
  by induction over $m$.  The base cases are:
  \begin{itemize}
    \item $n=0$, $m \geq 0$: then
    $S_{D,W}(T) = 1 - a_{k}T^{k} + O(T^{k+1})$ with $a_{k} > 0$, and
    we can conclude, taking $\delta=0$ and $\sigma=0$.
    \item $n=m > 0$: then this is a consequence of
    Theorem~\ref{thm:HS-regseq} (shape of the Hilbert series of a
    complete intersection).
  \end{itemize}
    
  Assume that $m > n > 0$.  Let $D^{\ast}=(d_{1},\dots,d_{m-1})$,
  $W^{\ast}=(w_{1},\dots,w_{n-1})$, and write:
  \begin{gather}
    \label{eq:36}
    S(T) \coloneq S_{D,W}(T) = \sum_{d=0}^{\infty}a_{d}T^{d}; \\
    \label{eq:39}
    S^{\ast}(T) \coloneq S_{D^{\ast},W}(T) =
    \sum_{d=0}^{\infty}a^{\ast}_{d}T^{d}.
  \end{gather}
  The derivatives of these series are
  \begin{gather}
    \label{eq:45}
    \Delta S(T) = S_{D,W^{\ast}}(T) = \sum_{d=0}^{\infty}a'_{d}T^{d}; \\
    \label{eq:47}
    \Delta S^{\ast}(T) = S_{D^{\ast},W^{\ast}}(T) =
    \sum_{d=0}^{\infty}a'^{\ast}_{d}T^{d}.
  \end{gather}
  Furthermore, let $w=w_{n-1}$,
  $\bar{W^{\ast}} = (w_{1}/w, \dots, w_{n-1}/w)$ and
  $\bar{D^{\ast}} = (d_{1}/w, \dots, d_{n-1}/w)$, and consider the
  series
  \begin{gather}
    \label{eq:100}
    \bar{\Delta S}(T) = S_{\bar{D},\bar{W^{\ast}}}(T) = \sum_{d=0}^{\infty}\bar{a'}{d}T^{d} ; \\
    \label{eq:98}
    \bar{\Delta S^{\ast}}(T) = S_{\bar{D^{\ast}},\bar{W^{\ast}}}(T)
    \sum_{d=0}^{\infty}\bar{a'^{\ast}}{d}T^{d}.
  \end{gather}
  In particular,
  \begin{equation}
    \label{eq:99}
    \Delta S(T) = \bar{\Delta S}(T^{w}) \text{ and } \Delta S^{\ast}(T) = \bar{\Delta S^{\ast}}(T^{w}).
  \end{equation}

  All the series $S^{\ast}$, $\bar{\Delta S}$ and
  $\bar{\Delta S^{\ast}}$ satisfy the induction hypothesis.  The
  $W$-degrees for which the coefficients of $S^{\ast}$ satisfy
  properties~\eqref{eq:93}-\eqref{eq:97}
  and~\eqref{eq:96}-\eqref{eq:92} are denoted by $\sigma^{\ast}$ and
  $\delta^{\ast}$.  We write $\bar{\sigma'}$, $\bar{\delta'}$,
  $\bar{\sigma'^{\ast}}$, $\bar{\delta'^{\ast}}$ the respective values
  of the $W$-degrees for which these properties apply to
  $\bar{\Delta S}$ and $\bar{\Delta S^{\ast}}$.

  From $S(T) = (1-T^{d_{m}})S^{\ast}(T)$, we deduce the recurrence
  relation
  \begin{equation}
    \label{eq:68}
    a_{d} = a^{\ast}_{d} - a^{\ast}_{d-d_{m}}.
  \end{equation}
  Since $S^{\ast}$ satisfies the induction hypothesis, we know that
  there exists a degree $\delta$ such that
  \begin{equation}
    \label{eq:sttmt-delta}
    \begin{cases}
      \forall\, d \in \{0, \dots, \delta\} & a^{\ast}_{d} > a^{\ast}_{d-d_{m}} \\
      \forall\, d \in \{\delta+1, \dots, \delta^{\ast}+d_{m}\} &
      a^{\ast}_{d} \leq a^{\ast}_{d-d_{m}}.
    \end{cases}
  \end{equation}
  This proves statements~\eqref{eq:96} and~\eqref{eq:92}.  As a side
  result, since $a^{\ast}_{\delta} > a^{\ast}_{\delta-d_{m}}$, we also
  deduce that
  \begin{equation}
    \label{eq:delta-minus-dm-leq-sigmastar}
    \delta - d_{m} < \sigma^{\ast}.
  \end{equation}

  Let $\sigma = \delta' $, we prove that it satisfies
  equations~\eqref{eq:93},~\eqref{eq:94} and~\eqref{eq:95}.  We need
  to evaluate the sign of $a_{d} - a_{d-1}$, depending on $d$.  The
  generating series of $a_{d}-a_{d-1}$ is:
  \begin{equation}
    \label{eq:126}
    (1-T)S(T)
    = (1-T) \cdot \frac{\prod_{i=1}^{m}(1-T^{d_{i}})}{\prod_{i=1}^{n}(1-T^{w_{i}})}
    = \Delta S(T) \text{ since $w_{n} =1$.}
  \end{equation}
  In other words, $a_{d} \geq a_{d-1}$ if and only if $a'_{d} \geq 0$,
  which proves statements~\eqref{eq:93} and~\eqref{eq:94}, by
  definition of $\delta'$:
  \begin{gather}
    \label{eq:sttmt-sigma1}
    \forall\, d \in \{0,\dots,\sigma\},\;  a_{d} - a_{d-1} = a'_{d} \geq 0 \\
    \label{eq:13}
    a_{\sigma} - a_{\sigma-1} = a'_{\sigma} = a'_{\delta'} > 0.
  \end{gather}
  To finish the proof, we need to prove that for any
  $d \in \{\delta'+1,\dots,\delta\}$, $a'_{d}\leq 0$.

  From the induction hypothesis (statement~\eqref{eq:97}) applied to
  $S^{\ast}$, we know that $\delta'^{\ast}=\sigma^{\ast}$.  Moreover,
  statement~\eqref{eq:92} from the induction hypothesis applied to
  $\bar{\Delta S}$ yields that:
  \begin{equation}
    \label{eq:11}
    \forall\, \bar{d} \in \{\bar{\delta'}+1, \dots, \bar{\delta'^{\ast}}\},\;
    \bar{a'_{\bar{d}}} \leq 0.
  \end{equation}
  As a consequence, since
  $\sigma^{\ast} = \delta'^{\ast} = w\bar{\delta'^{\ast}}$:
  \begin{equation}
    \label{eq:sttmt-sigma2}
    \forall\, d \in \{\delta'+1,\dots,\sigma^{\ast}\},\;
    a'_{d} =
    \begin{cases}
      \bar{a'_{\bar{d}}} \leq 0 & \text{if $d = w\bar{d}$;} \\
      0 & \text{otherwise.}
    \end{cases}
    \tag{\romannumeral 1}
  \end{equation}
  Now assume that $\sigma^{\ast} < d \leq \delta$.  We can write
  $a'_{d}$ as
  \begin{align}
    \label{eq:73}
    a'_{d} = a_{d} - a_{d-1} &= a^{\ast}_{d} - a^{\ast}_{d-d_{m}} - a^{\ast}_{d-1} + a^{\ast}_{d-d_{m}-1} \\
    \label{eq:74}
    &= (a^{\ast}_{d} - a^{\ast}_{d-1}) - (a^{\ast}_{d-d_{m}} -
    a^{\ast}_{d-d_{m}-1}) = a'^{\ast}_{d} - a'^{\ast}_{d-d_{m}}.
  \end{align}

  Since $a_{d} \leq a^{\ast}_{d}$ for any $d$, we necessarily have
  $\delta \leq \delta^{\ast}$, hence
  $\sigma^{\ast} < d \leq \delta^{\ast}$.  So by induction hypothesis
  (statement~\eqref{eq:95}), we know that
  $a^{\ast}_{d} - a^{\ast}_{d-1} \leq 0$.  Additionally, 
  equation~\eqref{eq:delta-minus-dm-leq-sigmastar} and induction
  hypothesis (statement~\eqref{eq:93}) together yield that
  $a^{\ast}_{d-d_{m}} - a^{\ast}_{d-d_{m}-1} \geq 0$, so we conclude
  that
  \begin{equation}
    \label{eq:sttmt-sigma3}\tag{\romannumeral 2}
    \forall\, d \in \{\sigma^{\ast}+1, \dots, \delta\}, \;a'_{d} \leq 0.
  \end{equation}
  And so, sticking the ranges of statements~\eqref{eq:sttmt-sigma2}
  and~\eqref{eq:sttmt-sigma3} together, we prove
  statement~\eqref{eq:95} which completes the proof.
\end{proof}

Using this description of semi-regular Hilbert series, we now prove
that for reverse chain-divisible systems of weights, semi-regular
sequences have a semi-regular Hilbert series.  As an illustration,
Figure~\ref{fig:shape-HS-semireg} shows the coefficient of a
semi-regular Hilbert series.  The black dots correspond to the actual
coefficients, and the gray dots are the coefficients which were
truncated away.

\begin{figure}
  \centering
  \begin{tikzpicture}[x=0.7cm,y=0.7cm]
  \draw [color=black!10,step=1] (-0.1,-1.5) grid (10.8,2.8);

  \begin{scope}[color=black!30]
    \drawSeries{{9 / -1 , 10 / -1}}{pr}{8}{0}{10};
    \draw (pr11) -- (10.8,-1);
  \end{scope}

  \begin{scope}[color=black]
    \drawSeries{{%
        0 / 1 , 1 / 1 , 2 / 1 , 3 / 2 , 4 / 2 ,
        5 / 2 , 6 / 1 , 7 / 1 , 8 / 1 ,
        9 / 0 , 10 / 0 }}{p}{0}{0}{0};
    \draw    (p10)  -- (10.8,0);
  \end{scope}

  \draw [axis] (0,0) -- (11,0) node [anchor=north] {$d$};
  \draw [axis] (0,-1.5) -- (0,3) node [anchor=east] {$a_{d}$};;

  \node [anchor=north] (s)   at (3,0) {$\sigma = 3$};
  \node [anchor=north] (d)   at (8,0) {$\delta = 8$};

  \draw [dashed] (s) -- (p3);
  \draw [dashed] (d) -- (p8);

\end{tikzpicture}


  \caption{Shape of the Hilbert series of a semi-regular
    $W$-homogeneous sequence with $W=(3,3,1)$ and $D=(12,9,6,6,3)$}
  \label{fig:shape-HS-semireg}
\end{figure}
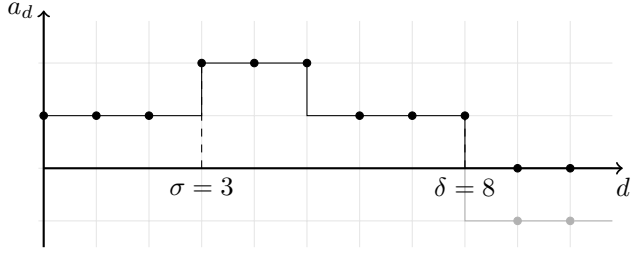

\begin{theorem}
  \label{thm:HS-semireg}
  Let $m \geq n \geq 0$ be two integers, $W=(w_{1},\dots,w_{n})$ be a
  reverse chain-divisible system of weights and
  $D=(d_{1},\dots,d_{m})$ be a system of $W$-degrees such that
  $d_{1},\dots,d_{n}$ are all divisible by $w_{1}$.  Let
  $F=(f_{1},\dots,f_{m})$ be a system of $W$-homogeneous polynomials,
  with respective $W$-degree $D$.  If $F$ is a semi-regular sequence,
  then $F$ has a semi-regular Hilbert series.
\end{theorem}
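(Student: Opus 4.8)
The plan is to argue by induction on the number $m$ of polynomials, taking $m=n$ as the base case. For $m=n$ a semi-regular sequence is regular (each full-rank map $s_{i,d}$ is forced to be injective, since every quotient $\K[\mathbf{X}]/\langle F_{i-1}\rangle$ has positive Krull dimension and hence cannot admit a surjective multiplication-by-$f_i$ with a kernel), so by Definition~\ref{def:regular} its Hilbert series equals $S_{D,W}(T)$; by Theorem~\ref{thm:HS-regseq} this is a polynomial whose coefficients are positive up to $\delta$ and vanish afterwards, hence it coincides with its own truncation $\floor{S_{D,W}(T)}$.

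For the inductive step, assume $m>n$ and that the statement holds for $m-1$ polynomials. The subsequence $F_{m-1}$ is again semi-regular, so by the induction hypothesis its Hilbert function $a^*_{d} \coloneq \dim_{\K}\big(\K[\mathbf{X}]/\langle F_{m-1}\rangle\big)_{d}$ equals the coefficient at degree $d$ of $\floor{S_{D^*,W}(T)}$, where $D^*=(d_1,\dots,d_{m-1})$. Writing $R^*=\K[\mathbf{X}]/\langle F_{m-1}\rangle$, $R=\K[\mathbf{X}]/\langle F\rangle$, $a_d=\dim_\K R_d$, and $K_{m,d}$ for the kernel of the multiplication map $s_{m,d}$, the exact sequence
\[
0 \to K_{m,d} \to R^*_d \to^{s_{m,d}} R^*_{d+d_m} \to R_{d+d_m} \to 0
\]
gives $a_{d+d_m} = a^*_{d+d_m} - a^*_d + \dim K_{m,d}$. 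Semi-regularity forces $s_{m,d}$ to have full rank, so $\dim K_{m,d} = \max(0,\, a^*_d - a^*_{d+d_m})$, and after reindexing I obtain the closed form
\[
a_e = \max\big(0,\; a^*_e - a^*_{e-d_m}\big),
\]
valid for all $e \ge 0$ with the convention that $a^*_d = 0$ for $d<0$.

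It then remains to match this quantity with the coefficient of $\floor{S_{D,W}(T)}$. Let $b_e$ (resp.\ $b^*_e$) denote the coefficient of $S_{D,W}(T)=(1-T^{d_m})S_{D^*,W}(T)$ (resp.\ of $S_{D^*,W}(T)$), and let $\delta$, $\delta^*$ be the degree parameters supplied by Theorem~\ref{thm:shape-HS-sr}, so that $a^*_e=b^*_e$ for $e\le\delta^*$ and $a^*_e=0$ for $e>\delta^*$; moreover $\delta\le\delta^*$ since $a_e\le a^*_e$ for all $e$. I would then split on $e$. For $e>\delta^*$ one has $a^*_e=0$, hence $a_e=0$, which matches the truncation because $e>\delta^*\ge\delta$. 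For $e\le\delta^*$ the induction hypothesis turns $a^*_e-a^*_{e-d_m}$ into $b^*_e-b^*_{e-d_m}=b_e$, so $a_e=\max(0,b_e)$; the sign of $b_e$ is read off from Theorem~\ref{thm:shape-HS-sr}, namely the unimodality statements together with $(\delta1)$ give $b_e>0$ for $e\le\delta$, whence $a_e=b_e$ agrees with $\floor{S_{D,W}(T)}$, while for $\delta<e\le\delta^*$ statement $(\delta2)$ gives $b_e\le0$, whence $a_e=0$ again matches the truncation.

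The substance of the argument, beyond this bookkeeping, is concentrated in the passage through the range $\delta<e\le\delta^*$: one must know that the coefficients of the \emph{untruncated} series $S_{D,W}$ have already turned non-positive there, even though the Hilbert function is clamped to $0$. In the homogeneous case this follows from plain monotonicity of the relevant coefficients, but in the weighted, reverse chain-divisible setting it is precisely the non-trivial content of statement $(\delta2)$ of Theorem~\ref{thm:shape-HS-sr}; this inequality (which rests on the chain-divisibility hypothesis) is the main obstacle, while everything else reduces to the exact sequence above and the induction hypothesis.
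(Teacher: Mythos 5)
Your proposal is correct and follows essentially the same route as the paper: induction on $m$ with the regular case as base, the four-term exact sequence giving $a_e=\max(0,a^{\ast}_e-a^{\ast}_{e-d_m})$, and Theorem~\ref{thm:shape-HS-sr} (in particular $(\delta1)$ and $(\delta2)$) to identify the result with $\lfloor S_{D,W}\rfloor$. If anything, your explicit treatment of the range $\delta<e\leq\delta^{\ast}$ via $(\delta2)$ is slightly more careful than the paper's write-up of that case.
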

\begin{proof}
  We proceed by induction on $m$.  If $m=n$, then the result is a
  consequence of the characterization of regular sequences.
  
Assume that $m > n$.
We consider the series $S(T)=S_{D,W}(T)$ with generic coefficient $a_{d}$, \hbox{$S^{\ast}(T)=S_{D^{\ast},W}(T)$} with generic coefficient $a^{\ast}_{d}$, $H(T)$ the Hilbert series of $F$ with generic coefficient $b_{d}$, and $H^{\ast}(T)$ the Hilbert series of $F^{\ast} \coloneq (f_{1},\dots,f_{m-1})$ with generic coefficient $b^{\ast}_{d}$.
By induction hypothesis, $H^{\ast}(T) = \lfloor S^{\ast}(T) \rfloor$.
And since $F$ is semi-regular, we have the exact sequence
  \begin{equation}
    \label{eq:69}
    0 \to K_{m,d} \to R_{d}^{\ast} \to^{s_{m,d}} R^{\ast}_{d+d_{m}} \to R_{d+d_{m}} \to 0
  \end{equation}
  where $R = \KK[X_{1},\dots,X_{n}]/\langle F \rangle$ and
  $R^{\ast} = \KK[X_{1},\dots,X_{n}]/\langle F^{\ast} \rangle$.  As a
  consequence, for any $d \geq 0$, the coefficient $b_{d}$ satisfies
  the recurrence relation:
  \begin{equation}
    \label{eq:70}
    b_{d+d_{m}} = b^{\ast}_{d+d_{m}} - b^{\ast}_{d} + \dim(K_{m,d})
  \end{equation}
  where either $K_{m,d} = 0$ or $b_{d+d_{m}}=0$.  Since $s_{m,d}$ is
  defined from a space of dimension $b^{\ast}_{d}$ to a space of
  dimension $b^{\ast}_{d+d_{m}}$, this can be rephrased as
  \begin{equation}
    \label{eq:72}
    b_{d} = \max\left( b^{\ast}_{d}-b^{\ast}_{d-d_{m}},0 \right).
  \end{equation}
  From Theorem~\ref{thm:shape-HS-sr} applied to $S(T)$, there exists
  $\delta$ such that
  \begin{equation}
    \label{eq:77}
    \forall d \in \{0, \dots, \delta\},\, a_{d}=a^{\ast}_{d}-a^{\ast}_{d-d_{m}} > 0.
  \end{equation}
  Furthermore, the induction hypothesis shows that there exists a
  degree $\delta^{\ast}$ such that
  \begin{gather}
    \label{eq:78}
    \forall d \in \{0, \dots, \delta^{\ast}\},\, a^{\ast}_{d}=b^{\ast}_{d} > 0 \\
    \label{eq:79}
    \forall d > \delta^{\ast}, b^{\ast}_{d} = 0,
  \end{gather}
  and that $\delta^{\ast}$ is defined as in
  Theorem~\ref{thm:shape-HS-sr}.  In particular, it implies that
  $\delta^{\ast} \geq \delta$.
  
  We shall prove that the Hilbert series $H$ of $F$ is equal to $S$,
  truncated at degree $\delta$.  Let $d \geq 0$:
  \begin{itemize}
    \item if $0 \leq d \leq \delta \leq \delta^{\ast}$:
    \begin{align}
      b_{d} & = b^{\ast}_{d} - b^{\ast}_{d_{m}} \text{ since $d \leq \delta$}\\
      &= a^{\ast}_{d} - a^{\ast}_{d_{m}} \text{ since $d \leq \delta^{\ast}$} \\
      &= a_{d}
    \end{align}
    \item if $\delta < d$:
    \begin{align}
      b_{d} &= \max\left( b^{\ast}_{d}-b^{\ast}_{d-d_{m}},0 \right) \\
      &=0 \text{ since $b^{\ast}_{d} = 0$ and
        $b^{\ast}_{d-d_{m}} \geq 0$}
    \end{align}
  \end{itemize}
  And since $a_{\delta+1} \leq 0$, this proves that
  \begin{equation}
    \label{eq:83}
    H(T) = \lfloor S(T) \rfloor.\qedhere
  \end{equation}
\end{proof}

Another consequence of Theorem~\ref{thm:shape-HS-sr} is an explicit
value for the degree $\delta$ of the Hilbert series of an ideal
defined by a semi-regular sequence with $m=n+1$ polynomials in $n$
variables.  In the homogeneous case, it is known %
that this degree is bounded by
\begin{equation}
  \label{eq:103}
  \delta = \min\left( \sum_{i=1}^{n} d_{i} -n,
   \left\lfloor
    \frac{\sum_{i=1}^{n+1} d_{i} -n}{2}
   \right\rfloor \right).
\end{equation}

\begin{prop}
  \label{prop:dreg-semireg-n+1}
  Let $n$ be a positive integer, and $m=n+1$.  Let
  $W=(w_{1},\dots,w_{n})$ be a system of weights, and
  $F=(f_{1},\dots,f_{m})$ a system of $W$-homogeneous polynomials, and
  assume that the hypotheses of Theorem~\ref{thm:HS-semireg} are
  satisfied. For all $i \in \{1,\dots,m\}$, let
  $d_{i} \coloneq \deg_{W}(f_{i})$. Then the degree $\delta$ of the
  Hilbert series of $\langle F \rangle$ is given by:
  \begin{equation}
    \label{eq:104}
    \delta = \min\left( \sum_{i=1}^{n} d_{i} - \sum_{i=1}^{n}w_{i},
     \left\lfloor
      \frac{\sum_{i=1}^{n+1} d_{i} - \sum_{i=1}^{n}w_{i}}{2}
     \right\rfloor \right).
  \end{equation}
\end{prop}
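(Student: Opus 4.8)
The plan is to reduce the statement to a coefficientwise analysis of the rational series $S_{D,W}$, exploiting the two structural theorems already proved. Since $F$ is semi-regular and the hypotheses of Theorem~\ref{thm:HS-semireg} hold, that theorem gives $\HS_{A/\langle F\rangle}(T)=\lfloor S_{D,W}(T)\rfloor$, so $\delta$ can be read off from the first sign change of the coefficients of
\begin{equation}
  S_{D,W}(T)=\frac{\prod_{i=1}^{n+1}(1-T^{d_{i}})}{\prod_{i=1}^{n}(1-T^{w_{i}})}=\sum_{d}a_{d}T^{d}.
\end{equation}
First I would single out the last polynomial: set $e=d_{n+1}$ and
\begin{equation}
  S^{\ast}(T)=\frac{\prod_{i=1}^{n}(1-T^{d_{i}})}{\prod_{i=1}^{n}(1-T^{w_{i}})}=\sum_{d}a^{\ast}_{d}T^{d},
\end{equation}
so that $S_{D,W}=(1-T^{e})S^{\ast}$ and therefore $a_{d}=a^{\ast}_{d}-a^{\ast}_{d-e}$. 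The subsequence $(f_{1},\dots,f_{n})$ is a regular sequence, so by Theorem~\ref{thm:HS-regseq} the series $S^{\ast}$ is a self-reciprocal polynomial of degree $\delta^{\ast}=\sum_{i=1}^{n}(d_{i}-w_{i})$, with $a^{\ast}_{d}=a^{\ast}_{\delta^{\ast}-d}$ and coefficients that are non-decreasing up to the middle and non-increasing afterwards.

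The key observation is that the self-reciprocity of $S^{\ast}$ turns into an antisymmetry of $S_{D,W}$: substituting $a^{\ast}_{d}=a^{\ast}_{\delta^{\ast}-d}$ into $a_{d}=a^{\ast}_{d}-a^{\ast}_{d-e}$ yields
\begin{equation}
  a_{d}=-\,a_{\delta^{\ast}+e-d},
\end{equation}
so the coefficients are antisymmetric about the centre $(\delta^{\ast}+e)/2$. Together with the single sign change guaranteed by Theorem~\ref{thm:shape-HS-sr}, the location of $\delta$ is dictated by that centre, and two regimes appear. If $e>\delta^{\ast}$, then $a^{\ast}_{d-e}=0$ for all $d\le\delta^{\ast}$, so $a_{d}=a^{\ast}_{d}>0$ there and $a_{d}\le 0$ beyond, giving $\delta=\delta^{\ast}=\sum_{i=1}^{n}(d_{i}-w_{i})$, the first term of the minimum in~\eqref{eq:104}. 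If instead $e<\delta^{\ast}$, the unimodality of $S^{\ast}$ keeps $a_{d}=a^{\ast}_{d}-a^{\ast}_{d-e}$ positive while $d$ stays left of the antisymmetry centre, and the antisymmetry forces the sign change precisely at that centre, giving $\delta=\lfloor(\delta^{\ast}+e)/2\rfloor=\lfloor(\sum_{i=1}^{n+1}d_{i}-\sum_{i=1}^{n}w_{i})/2\rfloor$, the second term. Putting the two regimes together produces the minimum in~\eqref{eq:104}.

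The delicate point, and the step I expect to require the most care, is the behaviour exactly at the antisymmetry centre and its interaction with the weighted ``staircase'' of $S^{\ast}$. Theorem~\ref{thm:HS-regseq} only provides weak monotonicity for $S^{\ast}$ — its coefficients climb in steps of width $w_{n-1}$ rather than strictly — so before invoking the centre argument I must rule out a premature non-positive coefficient $a_{d}$ and decide, according to the parity of $\delta^{\ast}+e$ and the divisibility of the $d_{i}$ by $w_{1}$ (hence by $w_{n-1}$), whether the central coefficient is strictly positive or vanishes. This is exactly the role played by the inequalities of Lemma~\ref{lem:lemMS} in the complete-intersection case, and I would reuse them here. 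Once the central coefficient is located, the conclusions $a_{d}>0$ for $d\le\delta$ and $a_{\delta+1}\le 0$ follow from the single-sign-change shape already furnished by Theorem~\ref{thm:shape-HS-sr}, and the closed form~\eqref{eq:104} drops out.
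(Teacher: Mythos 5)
Your route is genuinely different from the paper's. The paper never touches the factorization $S=(1-T^{d_{n+1}})S^{\ast}$ directly: it adjoins a dummy variable of weight $1$, so that $S_{D,W^{+}}$ with $W^{+}=(w_{1},\dots,w_{n},1)$ becomes a complete-intersection series to which Theorem~\ref{thm:HS-regseq} applies, reads off its parameter $\sigma^{+}$, and transfers it back via statement~\eqref{eq:97} of Theorem~\ref{thm:shape-HS-sr}, which identifies $\sigma^{+}$ with the degree of $\lfloor S_{D,W}\rfloor$. Your decomposition instead runs the Gorenstein-symmetry argument directly: the self-reciprocity of $S^{\ast}$ gives the antisymmetry $a_{d}=-a_{\delta^{\ast}+d_{n+1}-d}$, and the sign change is pinned between the socle degree $\delta^{\ast}$ and the centre of antisymmetry. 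That observation is correct, and it is closer in spirit to the Reid--Roberts--Roitman argument the paper uses for Proposition~\ref{prop:froberg-n+1}; your first regime ($d_{n+1}>\delta^{\ast}$) is complete as written.

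The step you defer, however, is not a technicality that Lemma~\ref{lem:lemMS} will discharge; it is where your argument and the displayed formula part ways. Write $c=\delta^{\ast}+d_{n+1}=\sum_{i=1}^{n+1}d_{i}-\sum_{i=1}^{n}w_{i}$. When $c$ is even, your own antisymmetry forces $a_{c/2}=-a_{c/2}$, hence $a_{c/2}=0$, so the truncation occurs at or before degree $c/2$ and the most the second regime can yield is $\delta\leq\lfloor(c-1)/2\rfloor$, not $\lfloor c/2\rfloor$ (for $W=(1,1)$ and $D=(2,2,2)$ one gets $\lfloor S\rfloor=1+2T$, of degree $1$, whereas $\lfloor c/2\rfloor=2$). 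The quantity $\lfloor(c-1)/2\rfloor$ is exactly the $\lfloor\delta^{+}/2\rfloor$ of the paper's proof, since the augmented system has $\delta^{+}=c-1$ on account of the extra weight-$1$ variable; it coincides with the second branch of~\eqref{eq:104} only when $c$ is odd. Independently of this parity issue, to get equality rather than an upper bound you must still prove that $a_{d}>0$ strictly for every $d$ up to the claimed $\delta$, and Theorem~\ref{thm:HS-regseq} only supplies weak monotonicity of $a^{\ast}$ with plateaux of width $w_{n-1}$; that strictness analysis (the counterpart of statements~\eqref{eq:93}--\eqref{eq:94} in the proof of Theorem~\ref{thm:shape-HS-sr}) is the real content of the second regime and is entirely absent from your sketch. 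The boundary case $d_{n+1}=\delta^{\ast}$ also falls in neither of your two regimes, though it is harmless.
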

\begin{proof}
  Consider the system of weights $W^{+}=(w_{1},\dots,w_{n},1)$, and
  the series $S_{D,W^{+}}$ as defined in
  Theorem~\ref{thm:shape-HS-sr}.  It satisfies the hypotheses of
  Theorem~\ref{thm:HS-regseq}, which implies that its coefficients are
  increasing up to degree
  \begin{equation}
    \label{eq:106}
    \sigma^{+} = \min\left( \sum_{i=1}^{n} d_{i} - \sum_{i=1}^{n}w_{i},
     \left\lfloor
      \frac{\sum_{i=1}^{n+1} d_{i} - \sum_{i=1}^{n}w_{i}}{2}
     \right\rfloor \right).
  \end{equation}
  Theorem~\ref{thm:shape-HS-sr} (statement~\eqref{eq:97}) states that
  the degree $\delta$ of the Hilbert series of $\langle F \rangle$
  satisfies
  \begin{equation}
    \label{eq:107}
    \delta = \sigma^{+},
  \end{equation}
  hence the result.
\end{proof}

\subsection{Asymptotic analysis of the degree of regularity of
  weighted homogeneous semi-regular sequences}
\label{sec:Asympt-analys-degr}

In this section, we show how the results from~\cite{BFSY05} and
\cite[Chap.~4]{Bar04} about the degree of regularity of semi-regular
homogeneous sequences can be adapted to the weighted case.

\begin{theorem}
  Let $k$, and $n$ be non-negative integers and let $m \coloneq n+k$.  Let
  $w_{0}$ and $d_{0}$ be non-negative integers such that $w_{0} \divides d_{0}$.
  Consider the system of $n$ weights $W=(w_{0},\dots,w_{0},1)$ and the system
  of $m$ $W$-degrees $D=(d_{0},\dots,d_{0})$.  Let
  $F=(f_{1},\dots,f_{m}) \subset A=\KK[X_{1},\dots,X_{n}]$ be a
  semi-regular sequence of weighted homogeneous polynomials with
  $W$-degree $D$.  Then the asymptotic
  developement of the $W$-degree of regularity $\dreg$ of $F$ as $n$
  tends to infinity is given by
  \begin{equation}
    \label{eq:dreg-asymptotic-W}
    \dreg = n \left( \frac{d_{0}-w_{0}}{2} \right)
        - \alpha_{k}\sqrt{n \left(\frac{d_{0}^{2}-w_{0}^{2}}{6}\right) }
        + \Ocomp{n^{1/4}}.
  \end{equation}
\end{theorem}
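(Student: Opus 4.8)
The plan is to reduce the statement to a question about the coefficients of an explicit rational series, and then to transport the saddle-point / local limit analysis of \cite{BFSY05} and \cite[Chap.~4]{Bar04} with the weighted parameters substituted. First, note that $W=(w_{0},\dots,w_{0},1)$ is reverse chain-divisible and that each $d_{0}$ is divisible by $w_{1}=w_{0}$, so the hypotheses of Theorem~\ref{thm:HS-semireg} are satisfied. Since $F$ is semi-regular, that theorem shows $F$ has a semi-regular Hilbert series, so (up to an additive constant absorbed in the error term) the highest degree $\dreg$ reached by \F5 is exactly the smallest $d$ for which the coefficient of
\[
S_{D,W}(T)=\frac{(1-T^{d_{0}})^{n+k}}{(1-T^{w_{0}})^{n-1}(1-T)}
\]
is non-positive. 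It therefore suffices to locate this first sign change.

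Second, I would give $S_{D,W}$ a probabilistic reading. Writing $e=d_{0}/w_{0}$ and
\[
S_{D,W}(T)=B(T)\,(1-T^{d_{0}})^{k},\qquad
B(T)=\Big(\tfrac{1-T^{d_{0}}}{1-T^{w_{0}}}\Big)^{n-1}\tfrac{1-T^{d_{0}}}{1-T},
\]
the polynomial $B$ has non-negative coefficients of total mass $B(1)=e^{n-1}d_{0}$, and after normalization it is the probability generating function of $Z=w_{0}\sum_{i=1}^{n-1}U_{i}+V$, where the $U_{i}$ are i.i.d.\ uniform on $\{0,\dots,e-1\}$ and $V$ is independent, uniform on $\{0,\dots,d_{0}-1\}$. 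A direct computation, using $\operatorname{Var}(w_{0}U_{i})=w_{0}^{2}\tfrac{e^{2}-1}{12}=\tfrac{d_{0}^{2}-w_{0}^{2}}{12}$, gives
\[
\mu:=\mathbb{E}[Z]=n\,\frac{d_{0}-w_{0}}{2}+\Ocomp{1},\qquad
\sigma^{2}:=\operatorname{Var}(Z)=n\,\frac{d_{0}^{2}-w_{0}^{2}}{12}+\Ocomp{1}.
\]
Because $V$ runs through consecutive integers, the support of $Z$ has span $1$, so $Z$ obeys a local central limit theorem with no lattice correction; this is exactly where the weight $w_{n}=1$ is used.

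Third is the core estimate, imported from \cite{BFSY05,Bar04}. In the Fourier representation of $a_{N}=[T^{N}]S_{D,W}$ one approximates the characteristic function of $Z$ near the origin by a Gaussian of variance $\sigma^{2}$ and the overdetermination factor $(1-T^{d_{0}})^{k}$ by $(-i\,d_{0}\theta)^{k}$, which yields
\[
a_{N}\ \sim\ e^{n-1}d_{0}^{\,k+1}\,\frac{(-1)^{k}}{\sigma^{k}}\,
\mathrm{He}_{k}\!\Big(\frac{N-\mu}{\sigma}\Big)\,\phi_{\sigma}(N-\mu),
\]
with $\phi_{\sigma}>0$ the Gaussian density and $\mathrm{He}_{k}$ the $k$-th Hermite polynomial. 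Hence $\operatorname{sign}(a_{N})=\operatorname{sign}\big((-1)^{k}\mathrm{He}_{k}((N-\mu)/\sigma)\big)$, which is positive at $N=0$ (where the argument is $\sim-\sqrt{n}$). As $N$ grows, the argument increases along the negative axis and first becomes non-positive at the most negative root of $\mathrm{He}_{k}$, which by symmetry of the Hermite roots is $-\alpha_{k}$. Therefore the first non-positive coefficient sits at $N=\mu-\alpha_{k}\sigma$ up to lower order, i.e.
\[
\dreg=n\,\frac{d_{0}-w_{0}}{2}-\alpha_{k}\sqrt{n\,\frac{d_{0}^{2}-w_{0}^{2}}{12}}+\Ocomp{n^{1/4}},
\]
which is the claimed formula once $\alpha_{k}$ is read as the largest root of the physicists' Hermite polynomial (orthogonal for the weight $e^{-x^{2}}$): its roots are $1/\sqrt{2}$ times those of the probabilists' polynomials, turning the factor $1/12$ into the stated $1/6$.

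The main obstacle is the rigorous error control giving the $\Ocomp{n^{1/4}}$ term: one must justify the local limit approximation uniformly across the bulk, bound the Fourier integral away from $\theta=0$, absorb the $\Ocomp{1}$ shift introduced by replacing the finite difference of step $d_{0}$ by a $k$-th derivative, and — most delicately — convert the asymptotics of $a_{N}$ into a precise location of the sign change. This is precisely the analysis already performed in \cite{BFSY05} and \cite[Chap.~4]{Bar04}; the only genuinely new point is that their hypotheses survive for the weighted base distribution $Z$. They do, because $Z$ is a sum of uniformly bounded independent integer variables, all but one identically distributed, of span $1$, so the weights merely replace the per-variable homogeneous mean $\tfrac{d-1}{2}$ and variance $\tfrac{d^{2}-1}{12}$ by $\tfrac{d_{0}-w_{0}}{2}$ and $\tfrac{d_{0}^{2}-w_{0}^{2}}{12}$, while the overdetermination factor $(1-T^{d_{0}})^{k}$, and hence the Hermite polynomial and its largest root $\alpha_{k}$, are left unchanged.
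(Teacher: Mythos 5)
Your proposal is correct and arrives at the right formula, but it takes a genuinely different route from the paper. You re-run the saddle-point/local-limit analysis of \cite{BFSY05} and \cite[Chap.~4]{Bar04} directly on the weighted series, by reading the base polynomial $B(T)$ as the generating function of a sum $Z=w_{0}\sum U_{i}+V$ of independent (not identically distributed) uniforms and substituting the new mean and variance. The paper instead never touches the analytic machinery: it writes $H(T)=H^{\ast}(T^{w_{0}})\cdot(1+T+\dots+T^{d_{0}-1})$ with $H^{\ast}(T)=(1-T^{d_{0}/w_{0}})^{m-1}/(1-T)^{n-1}$, recognizes $\lfloor H^{\ast}\rfloor$ as the Hilbert series of a \emph{homogeneous} semi-regular sequence of $m-1=(n-1)+k$ polynomials of degree $d_{0}/w_{0}$ in $n-1$ variables, derives the sandwich $w_{0}\delta^{\ast}<\dreg\leq w_{0}\delta^{\ast}+d_{0}$ from statement~\eqref{eq:92} of Theorem~\ref{thm:shape-HS-sr}, and then quotes the homogeneous asymptotic for $\delta^{\ast}$ verbatim before multiplying by $w_{0}$. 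Your route has the merit of making the probabilistic meaning of the substitutions $\frac{d_{0}-1}{2}\mapsto\frac{d_{0}-w_{0}}{2}$ and $\frac{d_{0}^{2}-1}{12}\mapsto\frac{d_{0}^{2}-w_{0}^{2}}{12}$ transparent, and your bookkeeping (span $1$ via the factor $V$, the Hermite-normalization conversion from $1/12$ to $1/6$, the sign analysis locating the first sign change at the most negative root $-\alpha_{k}$) is accurate. The one thin spot is the closing claim that the hypotheses of the homogeneous analysis ``survive'' automatically: the characteristic function of your $Z$ has secondary near-peaks at $\theta=2\pi\ell/w_{0}$, where the $(n-1)$-fold factor $|E[e^{i\theta w_{0}U}]|^{n-1}$ equals $1$ and only the lone factor coming from $(1-T^{d_{0}})/(1-T)$ vanishes; one must check that these arcs contribute $\Ocomp{n^{-(k+2)/2}}$, i.e.\ are dominated by the central contribution of order $n^{-(k+1)/2}$. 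This phenomenon has no counterpart in the homogeneous case and is precisely what the paper's substitution $T\mapsto T^{w_{0}}$, combined with the $O(1)$ sandwich, renders invisible — which is why the paper's reduction is shorter and safer, while yours would need this extra estimate spelled out to be complete.
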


\begin{remark}
  In the non-weighted case, this asymptotic developement is
  \begin{equation}
    \label{eq:dreg-asymptotic-1}
    \dreg = n \left(\frac{d_{0}-1}{2}\right)
            - \alpha_{k}\sqrt{n\left(\frac{d_{0}^{2} - 1}{6}\right)}
            + \Ocomp{n^{1/4}}.
  \end{equation}
  Overall, the bound is improved by $\Ocomp{nw_{0}} = \Ocomp{\sum w_{i}}$.
\end{remark}

\begin{proof}
  Let $I \coloneq \langle F \rangle$, the Hilbert series of $A/I$ is
  given by
  \begin{equation}
    \label{eq:105}
    \HS_{A/I}(T) = \left\lfloor \frac{{(1-T^{d_{0}})}^{m}}%
     {{(1-T^{w_{0}})}^{n-1}(1-T)}\right\rfloor. 
  \end{equation}
  Write
  \begin{gather}
    \label{eq:111}
    H(T) = \frac{{(1-T^{d_{0}})}^{m}}{{(1-T^{w_{0}})}^{n-1}(1-T)}
    = \sum_{d=0}^{\delta} a_{d}T^{d}; \\
    \label{eq:108}
    H^{\ast}(T) = \frac{{(1-T^{d_{0}/w_{0}})}^{m-1}}{{(1-T)}^{n-1}} =
    \sum_{d=0}^{\delta} a^{\ast}_{d}T^{d},
  \end{gather}
  these series are related through
  \begin{equation}
    \label{eq:109}
    H(T)
    = H^{\ast}(T^{w_{0}})\cdot \frac{1-T^{d_{0}}}{1-T}
    = H^{\ast}(T^{w_{0}})\cdot \left( 1 + T + \dots + T^{d_{0}-1} \right).
  \end{equation}
  For the coefficients, it means that, for any $d$ in $\N$:
  \begin{equation}
    \label{eq:112}
    a_{d} = a^{\ast}_{\floor{d/w_{0}}} + \dots + a^{\ast}_{\ceil{(d-d_{0}+1)/w_{0}}}
  \end{equation}

  The series $H^{\ast}$, if truncated before its first non-positive
  coefficient, is the Hilbert series of a semi-regular
  $\WOne$-homogeneous sequence of $m-1$ polynomials in $n-1$
  variables, with degree $d_{0}/w_{0}$.  Let $\delta^{\ast}$ be the degree
  of this truncated series, so that $\delta^{\ast}+1$ is an upper
  bound for the degree of regularity of such a sequence.

  Statement~\eqref{eq:92} of Theorem~\ref{thm:shape-HS-sr} states
  that:
  \begin{equation}
    \label{eq:114}
    \forall\, d \in \{\delta^{\ast}+1,\dots,\delta^{\ast}+d_{0}/w_{0}\}, \,a^{\ast}_{d}\leq 0.
  \end{equation}
  Let $\delta_{0} \coloneq w_{0}\delta^{\ast}+d_{0}$, we have
  \begin{equation}
    \label{eq:116}
    \delta^{\ast} < \frac{\delta_{0}-d_{0}+1}{w_{0}} \leq \delta^{\ast} +1
  \end{equation}
  and
  \begin{equation}
    \label{eq:117}
    \frac{\delta_{0}}{w_{0}} = \delta^{\ast} + \frac{d_{0}}{w_{0}}, 
  \end{equation}
  and as a consequence
  \begin{equation}
    \label{eq:115}
    a_{\delta^{0}} = a^{\ast}_{\floor{\delta_{0}/w_{0}}} + \dots + a^{\ast}_{\ceil{(\delta_{0}-d_{0}+1)/w_{0}}} 
    \leq 0.
  \end{equation}
  In other words, the degree of regularity $\dreg$ of $F$ is bounded
  by
  \begin{equation}
    \label{eq:119}
    w_{0}\delta^{\ast} < \dreg \leq \delta_{0} = w_{0}\delta^{\ast}+d_{0}.
  \end{equation}

  The degree $\delta^{\ast}$ is the degree of the Hilbert series of a
  homogeneous semi-regular sequence, and as such, it follows the
  asymptotic estimates proved in~\cite[Chap.~4]{Bar04}.  For example,
  in our setting where $k$ is an integer and $m=n+k$, the asymptotic
  developement of $\delta^{\ast}$ when $n$ tends to infinity is given
  by%
  \begin{equation}
    \label{eq:120}
    \delta^{\ast} +1
    = n \,\frac{d_{0}/w_{0}-1}{2} - \alpha_{k}\sqrt{n\, \frac{{\left(d_{0}/w_{0}\right)}^{2} - 1}{6}} + \Ocomp{n^{1/4}}
    \end{equation}
  where $\alpha_{k}$ is the largest root of the $k$'th Hermite's
  polynomial.  \footnote{In~\cite[Chap.~4]{Bar04}, the remainder
    $\Ocomp{n^{1/4}}$ was written as $\ocomp{\sqrt{n}}$.  However, it
    appears that in the proof, this $\ocomp{\sqrt{n}}$ is a rewriting
    of $\sqrt{n} \cdot \Ocomp{\sqrt{\Delta z}}$, where
    $\Delta z = \Ocomp{1/\sqrt{n}}$.\ifdraft{{\color{gray}[Footnote
        pour la version finale]}}{}}

  As a consequence,
  \begin{align}
    \label{eq:121}
    \dreg
    &= w_{0}\delta^{\ast} + \Ocomp{1} \\
    \label{eq:122}
    &= w_{0}\left( n\, \frac{d_{0}/w_{0}-1}{2} - \alpha_{k}\sqrt{n\,
       \frac{{\left(d_{0}/w_{0}\right)}^{2} - 1}{6}} +\Ocomp{n^{1/4}}
    \right) + \Ocomp{1} \\
    \label{eq:123}
    &= n \,\frac{d_{0}-w_{0}}{2} - \alpha_{k}\sqrt{n\, \frac{d_{0}^{2} -
        w_{0}^{2}}{6}} + \Ocomp{n^{1/4}}. \qedhere
  \end{align}
\end{proof}

\subsection{Fröberg's conjecture}
\label{sec:Frobergs-conjecture}

Fröberg's conjecture states that homogeneous semi-regular sequences are generic among sequences of fixed degree.
The fact that semi-regularity is a Zariski-open condition is a known fact (the proof is the same as for regularity), so the conjecture states that for any system of degrees, there exists a semi-regular homogeneous sequence with these degrees.

This conjecture extends naturally to the weighted case.  In this case,
semi-regularity is still a Zariski-open condition.

We extend here one known result from the homogeneous case (see for
example~\cite{ReidRobertsRoitman1991}), stating that Fröberg's
conjecture is true in characteristic $0$ if $m=n+1$.  We follow the
proof given in~\cite{ReidRobertsRoitman1991}.

\begin{prop}
  \label{prop:froberg-n+1}
  Let $m = n+1$, $W=(w_{1},\dots,w_{n})$ a reverse chain-divisible
  system of weights, $D=(d_{1},\dots,d_{n})$ a strongly $W$-compatible
  system of degrees, and $d_{n+1}$ an integer divisible by $w_{1}$.
  Write
  $f_{n+1} = {(X_{1} + X_{2}^{w_{1}/w_{2}} + \dots +
    X_{n}^{w_{1}})}^{d_{n+1}/w_{1}}$,
  then the sequence
  \hbox{$F=(X_{1}^{d_{1}/w_{1}},\dots,X_{n}^{d_{n}/w_{n}},f_{n+1})$} is
  semi-regular.
\end{prop}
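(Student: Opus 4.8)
The plan is to follow the characteristic-zero argument of Reid--Roberts--Roitman, reducing the whole statement to a \emph{Strong Lefschetz} property for a monomial complete intersection. First I would dispose of the first $n$ multiplication maps: the sequence $F_{n}=(X_{1}^{d_{1}/w_{1}},\dots,X_{n}^{d_{n}/w_{n}})$ consists of pure powers of distinct variables (the exponents are integers since $D$ is strongly $W$-compatible), so it is a regular sequence and every $s_{i,d}$ with $i\leq n$ is injective wherever its source is nonzero. Thus semi-regularity of $F$ is equivalent to the single assertion that, writing $B\coloneq \KK[X_{1},\dots,X_{n}]/\langle F_{n}\rangle$ and $\ell\coloneq X_{1}+X_{2}^{w_{1}/w_{2}}+\dots+X_{n}^{w_{1}/w_{n}}$, multiplication by $f_{n+1}=\ell^{d_{n+1}/w_{1}}$ has maximal rank on each $W$-graded piece of $B$. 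Since $B$ is an Artinian complete intersection it is Gorenstein with self-reciprocal $W$-Hilbert function (Theorem~\ref{thm:HS-regseq}), so maximal rank of this map is precisely equivalent to its cokernel carrying the truncated series $\lfloor S_{D,W}\rfloor$.

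The second step is to linearise $\ell$. Setting $Y_{i}\coloneq X_{i}^{w_{1}/w_{i}}$, reverse chain-divisibility makes all exponents $w_{1}/w_{i}$ integers, and the running hypothesis that each $d_{i}$ is divisible by $w_{1}$ lets me form the standard-graded monomial complete intersection $B'\coloneq \KK[Y_{1},\dots,Y_{n}]/\langle Y_{1}^{d_{1}/w_{1}},\dots,Y_{n}^{d_{n}/w_{1}}\rangle$, which embeds in $B$ via $Y_{i}\mapsto X_{i}^{w_{1}/w_{i}}$. Under this embedding $\ell=Y_{1}+\dots+Y_{n}$ becomes an honest linear form, and, crucially, $B$ becomes a \emph{free} $B'$-module with basis the monomials $X^{s}$, $0\leq s_{i}<w_{1}/w_{i}$, because each factor $\KK[X_{i}]/(X_{i}^{d_{i}/w_{i}})$ is free over $\KK[X_{i}^{w_{1}/w_{i}}]/(X_{i}^{d_{i}/w_{i}})$ with basis $1,X_{i},\dots,X_{i}^{w_{1}/w_{i}-1}$. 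Since $f_{n+1}=\ell^{d_{n+1}/w_{1}}$ lies in $B'$, multiplication by it is $B'$-linear, hence block-diagonal for this decomposition, each block being a degree shift of multiplication by $(Y_{1}+\dots+Y_{n})^{d_{n+1}/w_{1}}$ on $B'$.

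Now I would invoke the classical input: in characteristic $0$ the monomial complete intersection $B'$ has the Strong Lefschetz Property with Lefschetz element $Y_{1}+\dots+Y_{n}$ (Stanley, Watanabe, via the Hard Lefschetz theorem on $\prod_{i}\mathbb{P}^{d_{i}/w_{1}-1}$). Every block therefore has maximal rank, and that rank is governed by the symmetric unimodal Hilbert function of $B'$: writing $\delta'\coloneq \sum_{i}(d_{i}/w_{1}-1)$ for its socle degree, the block whose source sits in $Y$-degree $a$ is injective when $2a+d_{n+1}/w_{1}\leq \delta'$ and surjective otherwise.

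The final step, which I expect to be the main obstacle, is to pass from ``each block has maximal rank'' to ``the assembled map has maximal rank in each $W$-degree.'' This is not automatic: a fixed $W$-degree $j$ of $B$ receives contributions from several residues $s$, whose source $Y$-degrees $a_{s}(j)=(j-\sum_{i}w_{i}s_{i})/w_{1}$ vary with $s$, and a direct sum of maximal-rank maps fails to be maximal rank exactly when some blocks are strictly injective and others strictly surjective. The heart of the proof is thus a combinatorial lemma showing that, for each $j$, all contributing source degrees $a_{s}(j)$ lie on the same side of the Lefschetz transition $2a+d_{n+1}/w_{1}=\delta'$. This is precisely where reverse chain-divisibility must be used—to control which residue classes $\sum_{i}w_{i}s_{i}\bmod w_{1}$ occur at a given $W$-degree and to bound the spread of the $a_{s}(j)$—together with the Gorenstein symmetry of $B$ to match the assembled cokernel dimensions against $\lfloor S_{D,W}\rfloor$. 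Carrying out this alignment, and thereby confirming that $K_{n+1,d}=0$ whenever $a_{d+d_{n+1}}>0$, completes the verification that $F$ is semi-regular.
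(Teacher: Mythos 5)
Your reduction is sound as far as it goes, and it takes a genuinely different route from the paper's: the paper follows Reid--Roberts--Roitman by proving a lower bound on the $W$-degree of any annihilator of $f_{n+1}$ in $A=\KK[\mathbf{X}]/\langle f_{1},\dots,f_{n}\rangle$ (via a differentiation-and-induction argument) and then appealing to the Gorenstein symmetry of $A$, whereas you linearise through $Y_{i}=X_{i}^{w_{1}/w_{i}}$, exhibit $A$ as a free module over the standard-graded monomial complete intersection $B'$, and reduce to the Strong Lefschetz Property of $B'$. The free-module decomposition, the block-diagonality of multiplication by $f_{n+1}$, and the description of each block via Stanley--Watanabe are all correct.

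The gap is exactly the step you flag as ``the main obstacle'', and it is not merely unproven: the alignment lemma you need is false under the stated hypotheses. Take $n=3$, $W=(4,2,1)$, $D=(8,8,8)$, $d_{4}=8$, so that $A=\KK[X_{1},X_{2},X_{3}]/(X_{1}^{2},X_{2}^{4},X_{3}^{8})$, $B'=\KK[Y_{1},Y_{2},Y_{3}]/(Y_{1}^{2},Y_{2}^{2},Y_{3}^{2})$ with Hilbert function $1,3,3,1$, and $f_{4}=(X_{1}+X_{2}^{2}+X_{3}^{4})^{2}$. The $W$-degree-$4$ piece of $A$ decomposes as $B'_{1}\oplus X_{2}X_{3}^{2}\cdot B'_{0}$; with $k=d_{4}/w_{1}=2$ and socle degree $\delta'=3$, the two source degrees $1$ and $0$ sit on opposite strict sides of the transition $2a+k=\delta'$, so the block ranks sum to $1+1=2$ while $\dim A_{4}=\dim A_{12}=4$. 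Concretely, $f_{4}\equiv 2(X_{1}X_{2}^{2}+X_{1}X_{3}^{4}+X_{2}^{2}X_{3}^{4})$ in $A$ and $(X_{1}-X_{2}^{2})\cdot f_{4}\equiv 0$, so $s_{4,4}$ is neither injective nor surjective and the sequence is not semi-regular. Your strategy therefore cannot be completed as proposed without strengthening the hypotheses (e.g.\ forcing all contributing residues into a single class modulo $w_{1}$). Note also that $X_{1}-X_{2}^{2}$ has $W$-degree $4<(\delta-d_{4}+1)/2=5$, so the same example contradicts the annihilator-degree lemma on which the paper's own proof rests: the difficulty you isolated is a genuine obstruction to the statement itself, not an artefact of your decomposition.
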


\begin{lemma}
  Let $f$ be a polynomial such that
  \begin{equation}
    \label{eq:15}
    f\cdot f_{n+1} = 0 \textin A = \KK[X_{1},\dots,X_{n}]/(X_{1}^{d_{1}/w_{1}},\dots,X_{n}^{d_{n}/w_{n}}).
  \end{equation}
  Let $\delta = \sum_{i=1}^{n}(d_{i}-w_{i})$, then we have
  \begin{equation}
    \label{eq:23}
    \deg_{W}(f) \geq \frac{\delta - d_{n+1} +1}{2}.
  \end{equation}
\end{lemma}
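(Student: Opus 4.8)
The plan is to convert the statement into an injectivity (maximal-rank) property of multiplication by $f_{n+1}$ and to control that property through the Gorenstein symmetry of $A$ together with a Lefschetz-type argument in characteristic $0$. Since the relation $f\cdot f_{n+1}=0$ is $W$-homogeneous, it suffices to treat a $W$-homogeneous $f$, say of $W$-degree $e\coloneq\deg_{W}(f)$. The claimed inequality rewrites as $2e+d_{n+1}\le\delta\Rightarrow f=0$, so the assertion is exactly that the multiplication map $A_{e}\to A_{e+d_{n+1}}$, $g\mapsto g\,f_{n+1}$, is injective whenever $2e+d_{n+1}\le\delta$. Now $A=\KK[X_{1},\dots,X_{n}]/(X_{1}^{d_{1}/w_{1}},\dots,X_{n}^{d_{n}/w_{n}})$ is the quotient by a $W$-homogeneous regular sequence, hence an Artinian complete intersection; by Theorem~\ref{thm:HS-regseq} together with the Gorenstein property recalled in the remark following it, its Hilbert series is self-reciprocal of top degree $\delta$, so $A$ is Gorenstein with one-dimensional socle $A_{\delta}$ and a perfect pairing $A_{p}\times A_{\delta-p}\to A_{\delta}\cong\KK$. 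Under this pairing, $g\mapsto g\,f_{n+1}$ on $A_{e}$ is the transpose of the same multiplication $A_{\delta-e-d_{n+1}}\to A_{\delta-e}$; hence injectivity in $W$-degree $e$ is equivalent to surjectivity in the complementary $W$-degree $\delta-e-d_{n+1}$, and the hypothesis $2e+d_{n+1}\le\delta$ puts us in the ``lower half'', where injectivity is the expected behaviour.

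The second step is to linearise $f_{n+1}$. I would set $Y_{i}\coloneq X_{i}^{w_{1}/w_{i}}$ (the exponents are integers because $W$ is reverse chain-divisible) and $N\coloneq d_{n+1}/w_{1}$ (an integer since $w_{1}\mid d_{n+1}$), so that $f_{n+1}=\ell^{N}$ with $\ell\coloneq Y_{1}+\dots+Y_{n}$. The subalgebra $C\coloneq\KK[Y_{1},\dots,Y_{n}]/(Y_{1}^{d_{1}/w_{1}},\dots,Y_{n}^{d_{n}/w_{n}})\subseteq A$ is a \emph{standard-graded} monomial complete intersection (its grading is $1/w_{1}$ times the $W$-grading), in which $\ell$ is an honest linear form, and $A$ is a free $C$-module with the finite basis $\{\mathbf{X}^{r}:0\le r_{i}<w_{1}/w_{i}\}$. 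Because multiplication by $f_{n+1}=\ell^{N}$ is $C$-linear, it respects this decomposition and splits into a direct sum of copies of the multiplication map $\ell^{N}\colon C_{k}\to C_{k+N}$, each sitting in an appropriately shifted $W$-degree.

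The third step is the maximal-rank input. In characteristic $0$ the standard-graded monomial complete intersection $C$ has the Strong Lefschetz Property with $\ell=\sum Y_{i}$ as a Lefschetz element; this is the result of~\cite{ReidRobertsRoitman1991} whose proof we are following. Each block $\ell^{N}\colon C_{k}\to C_{k+N}$ therefore has maximal rank, and in particular is injective as long as it lies in the increasing part of the symmetric unimodal Hilbert function of $C$. Reassembling the blocks then yields injectivity of $g\mapsto g\,f_{n+1}$ on $A_{e}$, which is the desired conclusion, and the characteristic-$0$ hypothesis enters precisely and only through this Strong Lefschetz step.

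The main obstacle I anticipate is the degree bookkeeping in this reassembly. The socle degree $\delta=\sum(d_{i}-w_{i})$ of $A$ and the socle degree of $C$, which equals $\sum d_{i}-nw_{1}$ in $W$-units, do not coincide, and the individual blocks are centred at distinct, shifted symmetry points rather than at $\delta/2$. A direct block-by-block application of the Strong Lefschetz Property only secures injectivity in the range $2e+d_{n+1}\le\sum d_{i}-nw_{1}$, whereas the statement demands the wider range $2e+d_{n+1}\le\delta=\sum d_{i}-\sum w_{i}$. Bridging this gap — that is, transporting the \emph{global} Gorenstein symmetry of $A$ (of socle degree $\delta$) through the $C$-module decomposition so that the single hypothesis $2e+d_{n+1}\le\delta$ simultaneously forces every block into its injective range — is the delicate heart of the argument, and it is here that the interplay between the reverse chain-divisibility of $W$ and the divisibility of each $d_{i}$ by $w_{1}$ must be exploited in full.
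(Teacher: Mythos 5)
You are right to be suspicious of the final step, and in fact the gap you flag cannot be closed, because the lemma as stated is false. Your block decomposition computes the correct answer: writing $Y_{i}=X_{i}^{w_{1}/w_{i}}$ and $\ell=Y_{1}+\dots+Y_{n}$, multiplication by $f_{n+1}=\ell^{N}$ preserves the decomposition of $A$ into shifted copies of $C$, and the Strong Lefschetz Property of $C$ shows that the smallest $W$-degree of a kernel element is governed by the socle degree $\sum d_{i}-nw_{1}$ of $C$, not by $\delta=\sum d_{i}-\sum w_{i}$. These differ as soon as the weights are not all equal, and the discrepancy is realized. Concretely, take $n=3$, $W=(2,1,1)$, $D=(4,4,4)$, $d_{4}=4$, so that $A=\KK[X_{1},X_{2},X_{3}]/(X_{1}^{2},X_{2}^{4},X_{3}^{4})$, $\delta=8$ and $f_{4}=(X_{1}+X_{2}^{2}+X_{3}^{2})^{2}$. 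Then $f=X_{1}-X_{2}^{2}$ satisfies $f\cdot f_{4}=0$ in $A$ (after the mixed terms cancel, every surviving monomial of the product is divisible by $X_{1}^{2}$ or $X_{2}^{4}$), yet $\deg_{W}(f)=2<5/2=(\delta-d_{4}+1)/2$. This is exactly the kernel element $Y_{1}-Y_{2}$ of $\ell^{2}\colon C_{1}\to C_{3}$ that your blocks predict, and it also shows that the sequence of Proposition~\ref{prop:froberg-n+1} is not semi-regular here, since $\dim A_{2}=\dim A_{6}=4$ while the multiplication map between them has a nontrivial kernel.

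The paper's own proof takes a completely different route: an induction on $\deg_{W}(f)$ obtained by differentiating the relation $S^{d_{n+1}/w_{1}}F=G\,X_{1}^{d_{1}/w_{1}}$ with respect to $X_{1}$ and reducing modulo $X_{1}^{d_{1}/w_{1}-1}$, following Reid--Roberts--Roitman; no Lefschetz property or Gorenstein duality appears there. The flaw in that argument sits in the base case, which asserts that the support of $f_{n+1}$ is the set of all monomials of $W$-degree $d_{n+1}$; this is only true when all weights are equal. In general the support consists only of monomials in the $Y_{i}$, and such a power can vanish in $A$ even when $A_{d_{n+1}}\neq 0$: in the example above, $(X_{2}^{2}+X_{3}^{2})^{3}=0$ in $\KK[X_{2},X_{3}]/(X_{2}^{4},X_{3}^{4})$ although $X_{2}^{3}X_{3}^{3}$ survives in degree $6$. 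So your approach, pushed to its conclusion, proves the correct weighted statement with $\sum d_{i}-nw_{1}$ in place of $\delta$, and simultaneously refutes the stated one; the right move is to record the counterexample rather than to try to bridge the gap. Two secondary points: your subalgebra $C$ should be defined with relations $Y_{i}^{d_{i}/w_{1}}$, and it is only a monomial complete intersection in the $Y_{i}$ when $w_{1}$ divides every $d_{i}$, which is stronger than the strong $W$-compatibility ($w_{i}\mid d_{i}$) assumed in the proposition, so even the corrected statement requires that extra hypothesis for your method to apply.
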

\begin{proof}
  If the $W$-degree of $f$ is $0$, that means that ${(X_{1} + X_{2}^{w_{1}/w_{2}} + \dots + X_{n}^{w_{1}})}^{d_{n+1}/w_{1}}=0$ in $A$.
  Assume that $\deg_{W}(f) < (\delta-d_{n+1}+1)/2$, that means that \hbox{$\delta-d_{n+1}+1 \geq 1$}, so $\delta \geq d_{n+1}$.
  Consider the expansion of $f_{n+1}$, all coefficients are nonzero since the base field has characteristic $0$.
  Its support is the set of monomials of degree $d_{n+1}$.
  Since $d_{n+1} \leq \delta$, $\dim{\left( \KK[\mathbf{X}]/\langle f_{1},\dots,f_{n}\rangle \right)}_{d_{n+1}} > 0$, which means that there exists at least one monomial with $W$-degree $d_{n+1}$ which does not lie in the initial ideal of $\langle f_{1},\dots,f_{n}\rangle$.
  As a consequence, $f$ is non-zero in the quotient.

  Now assume that $\deg_{W}(f) > 0$.  Write
  $B =
  \KK[X_{2},\dots,X_{n}]/(X_{2}^{d_{2}/w_{2}},\dots,X_{n}^{d_{n}/w_{n}})$,
  $X=X_{1}$, $R=B[X]$, $d=d_{1}/w_{1}$, such that $A=R/X^{d}$.  Let
  $S = (X + X_{2}^{w_{1}/w_{2}} + \dots + X_{n}^{w_{1}})$, and let $F$
  be a weighted homogeneous polynomial in $R$ with image $f$ in $A$.
  By assumption, there exists $G \in R$ such that
  $S^{d_{n+1}/w_{1}}\cdot F=G\cdot X^{d}$.  Derive this equality along
  $X$ to obtain:
  \begin{equation}
    \label{eq:24}
    mS'S^{d_{n+1}/w_{1} -1}F + S^{(d_{n+1})/w_{1}}F = dG'X^{d-1} + G'X^{d}
  \end{equation}
  or, modulo $X^{d-1}$
  \begin{align}
    \label{eq:25}
    S^{d_{n+1}-1}(mF + SF') \equiv 0 \mod X^{d-1}
    & \implies S^{d_{n+1}/w_{1}}(mF + SF') \equiv 0 \mod X^{d-1} \\
    \label{eq:27}
    & \implies S^{d_{n+1}/w_{1}+1}F' \equiv mFS^{d_{n+1}/w_{1}} \equiv
    0 \mod X^{d-1}
  \end{align}
  Since $X=X_{1}$ has weight $w_{1}$, $F'$ is $W$-homogeneous with
  $W$-degree $\deg_{W}(f)-w_{1}$, and we can use the induction
  hypothesis on $F' \mod X \in A$ and $\deg(F) = d_{n+1}+w_{1}$ to
  deduce:
  \begin{align}
    \label{eq:28}
    \deg_{W}(f) &= \deg_{W}(F) = \deg_{W}(F') +1 \\
    \label{eq:29}
    & \geq \frac{(\delta-1)-(d_{n+1}+1)+1}{2} +1 \\
    \label{eq:30}
    & \geq \frac{\delta - d_{n+1} +1}{2}.
  \end{align}
\end{proof}

\begin{proof}[Proof of the proposition]
  The proof given in~\cite[before prop.~7]{ReidRobertsRoitman1991}
  still holds in the weighted case.
\end{proof}

\section{Taking into account a weighted homogeneous structure
  when computing Gröbner bases}
\label{sec:Cons-comp}

\subsection{Weighted homogeneous systems}
\label{sec:Quasi-homog-syst}

Let $n,m$ be two integers, let $W=(w_{1},\dots,w_{n})$ be a system of
weights, and let
$F=(f_{1},\dots,f_{m})$ in $\KK[X_{1},\dots,X_{n}]$ be a system of
weighted homogeneous polynomials.

In order to solve the system $F$, we need to compute a Gröbner basis
for some monomial order, usually an elimination order or the
lexicographical order.  The usual strategy for that purpose is to
perform the computation in two steps, first computing a Gröbner basis
for some ``easy'' order, using a fast direct algorithm (Buchberger,
\F4 or \F5), and then computing a Gröbner basis for the wanted order
with either a direct algorithm or a change of order algorithm (Gröbner walk in positive dimension,
FGLM in zero dimension).

The first step of the computation involves choosing a monomial order
making the computations easier.  In the homogeneous case, the usual
choice is the \grevlex order, together with a strategy for selecting
critical pairs for reduction by lowest degree first.  In order to take
advantage from the weighted homogeneous structure of the system $F$,
we may choose the $W$-\grevlex order instead, with a selection
strategy by lowest $W$-degree first.

For algorithms proceeding purely with critical pairs, such as
Buchberger, \F4 or \F5, but unlike Matrix-\F5 for example, this
computation can be performed without changing the algorithm or its
implementation, by transforming the system beforehand:

\begin{prop}
  \label{lemme:passage_par_homW}
  Let $F=(f_{1},\dots,f_{m})$ be a family of polynomials in
  $\K[X_{1},\dots,X_{n}]$, assumed to be weighted homogeneous for a
  system of weights $W = (w_{1},\dots,w_{n})$.  Let $\order{1}$ be a
  monomial order, $G$ be the reduced Gröbner basis of $\hom{W}(F)$ for
  this order, and $\order{2}$ be the pullback of $\order{1}$ through
  $\hom{W}\,$.  Then
  \begin{enumerate}
    \item all elements of $G$ are in the image of $\hom{W}\,$;
    \item the family $G' \coloneq \hom[-1]{W}(G)$ is a reduced Gröbner
    basis of the system $F$ for the order $\order{2}\,$.
  \end{enumerate}
\end{prop}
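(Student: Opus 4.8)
The plan is to treat $\hom{W}$ as the device that transports the entire Gröbner theory of $\langle \hom{W}(F)\rangle$ back to $\langle F\rangle$. First I would record the algebraic features of $\hom{W}$ that drive everything: it is an injective $\K$-algebra homomorphism whose image is the subalgebra
\begin{equation}
  R' \coloneq \im(\hom{W}) = \K[t_{1}^{w_{1}},\dots,t_{n}^{w_{n}}],
\end{equation}
that is, the span of the monomials $t^{\beta}$ with $w_{i} \divides \beta_{i}$ for all $i$; it sends monomials to monomials multiplicatively and preserves their coefficients; and, because each $w_{i}$ is positive, it reflects divisibility, in the sense that $\hom{W}(u) \divides \hom{W}(v)$ forces $u \divides v$. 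By the very definition of $\order{2}$ as the pullback of $\order{1}$, these facts yield the compatibility identity $\LM_{\order{1}}(\hom{W}(f)) = \hom{W}(\LM_{\order{2}}(f))$ for every nonzero $f \in \K[\mathbf{X}]$, which is the bridge used throughout.

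For item (1), writing $J \coloneq \langle \hom{W}(F)\rangle$, I would show $G \subseteq R'$. Although $J$ itself leaves $R'$ (it contains, e.g., $t_{1}\hom{W}(f_{1})$), the key is that $R'$ is stable under every operation of Buchberger's algorithm: the leading monomial of an element of $R'$ is one of its monomials, hence lies in $R'$; the lcm of two monomials of $R'$ again has all exponents divisible by the $w_{i}$, hence lies in $R'$; consequently each cofactor $\mathrm{lcm}/\LT$ lies in $R'$, so every $S$-polynomial and every reduction step applied to elements of $R'$ returns an element of $R'$. Starting from $\hom{W}(F) \subseteq R'$ and running Buchberger followed by interreduction therefore never leaves $R'$, and since the reduced Gröbner basis is unique it must consist of elements of $R'$. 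Conceptually, this is the statement that $J$ is homogeneous for the grading of $\K[\mathbf{t}]$ by $\Gamma \coloneq \prod_{i}\ZZ/w_{i}\ZZ$, with the generators sitting in the degree-$0$ piece $R'$. This is the step I expect to be the main obstacle, precisely because $J \not\subseteq R'$ makes the containment $G \subseteq R'$ non-obvious at first sight.

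For item (2), set $g' \coloneq \hom[-1]{W}(g)$ for $g \in G$; this is well defined by injectivity of $\hom{W}$ together with item (1), and $G'$ is the collection of these $g'$. I would first check $G' \subseteq I \coloneq \langle F\rangle$ via the identity $\hom{W}(I) = J \cap R'$: the inclusion $\subseteq$ is clear, and for $\supseteq$ one takes $\sum_{i} a_{i}\hom{W}(f_{i}) \in J \cap R'$ and extracts its $\Gamma$-degree-$0$ part, replacing each $a_{i}$ by its degree-$0$ component $(a_{i})_{0} \in R'$; since $g \in G \subseteq J \cap R' = \hom{W}(I)$, its preimage $g'$ lies in $I$. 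To see $G'$ is a Gröbner basis, take any nonzero $f \in I$; then $\hom{W}(f) \in J$, so $\LM_{\order{1}}(\hom{W}(f))$ is divisible by some $\LM_{\order{1}}(g)$ with $g \in G$. Rewriting both sides through the compatibility identity gives $\hom{W}(\LM_{\order{2}}(g')) \divides \hom{W}(\LM_{\order{2}}(f))$, and divisibility-reflection yields $\LM_{\order{2}}(g') \divides \LM_{\order{2}}(f)$. Thus $\langle \LT_{\order{2}}(I)\rangle = \langle \LT_{\order{2}}(G')\rangle$, so $G'$ is a Gröbner basis of $I$ for $\order{2}$, and in particular generates $I$.

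Finally, for reducedness I would transport the two defining conditions through $\hom{W}$. Since $\hom{W}$ preserves coefficients, the leading coefficient of $g'$ equals that of $g$, so each $g'$ is monic because each $g$ is. For the interreduction condition, if some monomial of $g'$ were divisible by $\LM_{\order{2}}(h')$ for $h' \neq g'$ in $G'$, then applying $\hom{W}$ (which maps the monomials of $g'$ bijectively onto those of $g$ and preserves divisibility) would produce a monomial of $g$ divisible by $\LM_{\order{1}}(h)$ with $h \neq g$ in $G$, contradicting reducedness of $G$. Hence $G'$ is reduced, which completes the argument.
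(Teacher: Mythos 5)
Your proof is correct. It rests on the same elementary facts as the paper's argument --- that $\hom{W}$ is an injective monomial-to-monomial map which preserves products, lcm's, leading terms (for the pulled-back order) and reflects divisibility --- but it deploys them along a genuinely different route. The paper's proof is a two-sided simulation: since $\hom{W}$ commutes with $S$-polynomials, divisibility tests, and reductions, a run of Buchberger's algorithm on $F$ and a run on $\hom{W}(F)$ can be synchronized step by step, and the correspondence of the final reduced bases is read off from that synchronization. You instead work entirely on the $\K[\mathbf{t}]$ side for item (1), showing that the subalgebra $R'=\im(\hom{W})$ is stable under every Buchberger operation and invoking uniqueness of the reduced Gröbner basis to conclude $G\subseteq R'$; then for item (2) you verify the defining properties of a reduced Gröbner basis for $G'$ directly (membership in $I$ via $\hom{W}(I)=J\cap R'$ and the $\Gamma$-degree-$0$ projection, leading-term generation via divisibility reflection, and reducedness by transporting monomials). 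What your version buys is that it never has to argue that two nondeterministic runs of the algorithm make matching choices of pairs and reducers --- the only algorithmic input is the existence of one run staying inside $R'$, with uniqueness of the reduced basis doing the rest --- and it makes explicit the ``consequences on the final reduced Gröbner basis'' that the paper leaves to the reader. What the paper's version buys is brevity, and the stronger operational statement (used right after the proposition) that any critical-pair algorithm performs exactly the same reductions on $F$ for $\order{2}$ as on $\hom{W}(F)$ for $\order{1}$, which your existence-plus-uniqueness argument does not by itself deliver.
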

\begin{proof}
  The morphism $\hom{W}$ preserves $S$-polynomials, in the sense that
  \begin{equation}
    \label{eq:67}
    \Spol(\hom{W}(f),\hom{W}(g)) = \hom{W}(\Spol(f,g)).
  \end{equation}

Recall that we can compute a reduced Gröbner basis by running the Buchberger algorithm, which involves only multiplications, additions, tests of divisibility and computation of $S$-polynomials.
Since all these operations are compatible with $\hom{W}$, if we run the Buchberger algorithm on both $F$ and $\hom{W}(F)$ simultaneously, they will follow exactly the same computations up to application of $\hom{W}$.
The consequences on the final reduced Gröbner basis follow.
\end{proof}

Actually, the fact that $\hom{W}$ preserves $S$-polynomials proves that running any critical pairs algorithm on $\hom{W}(F)$ for the \grevlex order involves exactly the same reductions as running the same algorithm on $F$ for the \Wgrevlex order.

We will only give estimates for the complexity of the \F5~algorithm,
as it is usually faster than Buchberger and \F4.  The complexity of
this algorithm is usually studied through its variant Matrix-\F5.
This complexity is given by
\begin{equation}
  \label{eq:22}
  C_{\F5} = \Ocomp{M_{W,\dreg}^{\omega}}
\end{equation}
where $M_{W,d}$ is the size of the matrix we need to build at
$W$-degree $d$, $\dreg$ is the degree of regularity and $\omega$ is
the exponent in the complexity of matrix multiplication.

For a $W$-homogeneous system, the size of the matrix at $W$-degree $d$
is given by the number of monomials at $W$-degree $d$.
This number of monomials is known as the \emph{Sylvester denumerant} $d(d;w_{1},\dots,w_{n})$.
There is no formula for this denumerant, but its asymptotics are known
(see for example \citet[sec.~4.2]{alfonsin2005}):
\begin{equation}
  \label{eq:48}
  M_{W,d} \simeq \frac{1}{\prod w_{i}} M_{\WOne,d}
  = \frac{1}{\prod w_{i}} \binom{n+d-1}{d}.
\end{equation}
As for the degree of regularity of the system, depending on the
hypotheses satisfied by the system $F$ (regularity, Noether position
or semi-regularity), we can use the corresponding estimates.

All in all, the complexity of computing an ``easy'' Gröbner basis for a
weighted homogeneous system is divided by ${(\prod w_{i})}^{\omega}$
when compared to an homogeneous system with the same degree.  The degree
of regularity is also reduced, yielding an important practical gain
for the \F5~algorithm:
\begin{equation}
  \label{eq:comp-F5}
  C_{\F5}=\Ocomp{
    \frac{1}{\left(\prod w_{i}\right)}^{\omega}
    \cdot \binom{n+ \dreg-1}{\dreg}^{\omega}
  } .
\end{equation}

The gain from the reduced number of monomials applies to other
algorithms as well, provided they are run on $\hom{W}(F)$ if they are
only using critical pairs, or use the \hbox{\Wgrevlex} order otherwise.

Solving zero-dimensional weighted homogeneous systems is rarely
needed.  The reason is that generically, such a system only admits the
trivial solution $(0,\dots,0)$.  For most applications, a \Wgrevlex
Gröbner basis is enough, without the need for a change of ordering.

For positive dimension, depending on the situation, it may be
interesting to perform a two-steps computation, or to simply use one
of the direct algorithms with the desired order.  In the former case,
the usual algorithm used for the change of order is the Gröbner walk.
This algorithm is much more complex and to the best of our knowledge,
does not have good complexity estimates.  However, it involves
computing successive Gröbner bases, using algorithm \F4 or \F5 as a
blackbox.  As such, assigning weights to a polynomial system will
yield similar improvements for the computing time.

\subsection{Affine systems}
\label{sec:Affine-systems}

Affine systems can be solved with the same methods as homogeneous or
weighted homogeneous systems, by homogenizing the system with an
homogenization variable $H$. However, reducing affine systems can lead
to \emph{degree falls}, that is reductions leading to affine
polynomials of lesser $W$-degree, or equivalently, to weighted
homogeneous polynomials divisible by $H$. If the algorithm carries on
the computation on the homogenized system, then it will be led to
examine polynomials divisible by large powers of $H$. This effect can
be mitigated by detecting these reductions and reinjecting these
polynomials at the relevant $W$-degree, but overall, degree falls
usually make the computation slower.

Such a degree fall is a reduction to zero of the highest $W$-degree
components of a pair of polynomials.  However, if the highest
$W$-degree components form a regular sequence (or a sequence in
Noether position, or a sequence in simultaneous Noether position), all
results from the $W$-homogeneous case apply. For semi-regular sequences,
the \F5 Criterion can only eliminate degree falls up to the last
$W$-degree $\delta$ at which all of the multiplication applications
$s_{i,d}$ ($n < i \leq m$) are injective. At this degree, a degree
fall is unavoidable, and the algorithm is left to proceed with the
lower $W$-degree components of the system, for which no regularity
assumption was made. However, the degree of these subsequent
reductions will not go above $\delta$, and complexity estimates can be
obtained by considering the full Macaulay matrix at $W$-degree $\delta$.

Assuming the affine system is zero-dimensional, we may ultimately want to compute its solutions.
This is done by writing triangular generators of the ideal.
Using Gröbner bases, generically, it requires computing a Gröbner basis of the ideal for the lexicographical order, which can be done with a change of order from the \Wgrevlex order.
The usual algorithm for that purpose is the \FGLM algorithm.
Its complexity is given by
\begin{equation}
  \label{eq:50}
  C_{\FGLM}=\Ocomp{n\deg^{\omega}}
\end{equation}
where $\deg$ is the degree of the system.

Let $F=(f_{1},\dots,f_{n}) \subset \KK[X_{1},\dots,X_{n}]$ be a
zero-dimensional affine system.  For any system of weights
$W=(w_{1},\dots,w_{n})$, one may $W$-homogenize the system $F$, that
is compute a system
$F^{h}=(f_{1}^{h},\dots,f_{n}^{h}) \subset \KK[X_{1},\dots,X_{n},H]$
such that for any $i \in \{1,\dots,n\}$,
\begin{equation}
  \label{eq:65}
  f_{i}(X_{1},\dots,X_{n}) = f_{i}^{h}(X_{1},\dots,X_{n},1),
\end{equation}
and such that $F^{h}$ is $W^{h}$-homogeneous, with
$W^{h}=(w_{1},\dots,w_{n},1)$.

If $F$ is regular, then its homogenized $F^{h}$ is also regular.
Assume that the system of weights $W$ is chosen so that $F$ is regular
in the affine sense, i.e. its highest $W$-degree components form a
regular sequence.  Since the system of these highest $W$-degree
components is exactly $F^{h}(H \coloneq 0)$, by the
characterization~\ref{item:NP4}, $F^{h}$ is in Noether position with
respect to the variables $X_{1},\dots,X_{n}$.  As a consequence, the
degree of $\langle F^{h} \rangle$ is
\begin{equation}
  \label{eq:66}
  \deg =\frac{\prod_{i=1}^{n} d_{i}}{\prod_{i=1}^{n} w_{i}}
\end{equation}
and the complexity bounds for the change of ordering are also improved
by a factor ${\left(\prod_{i=1}^{n}w_{i}\right)}^{\omega}$:
\begin{equation}
  \label{eq:comp-FGLM}
  C_{\FGLM}=\Ocomp{ n {\left(\frac{\prod d_{i}}{\prod w_{i}}\right)}^{\omega}}.
\end{equation}



\section{Applications}
\label{sec:Applications}

In this section, we present some applications where taking into
account the weighted structure of the system yields speed-ups.  For
each system, we compare two strategies: the ``standard'' strategy
 consists of computing a Gröbner basis without considering the
weighted structure; the ``weighted'' strategy is the strategy we
described at section~\ref{sec:Cons-comp}. For all these examples, we
use a more compact notation for degrees and weights, so that for example,
$(2^{3},1)$ is equivalent to $(2,2,2,1)$.%

\subsection{Generic systems}
\label{sec:Generic-systems}

First, we present some timings obtained with generic systems, in both the complete intersection ($m=n$), the positive-dimensional ($m<n$) and the over-determined ($m>n$) cases.
In both cases, we fix a system of weights $W=(w_{1},\dots,w_{n})$ and a system of \hbox{$W$-degrees} $D=(d_{1},\dots,d_{m})$, and we pick at random $m$ polynomials $(f_{i})_{i=1\dots m}$, such that for any $i \in \{1,\dots,m\}$, $f_{i}$ has dense support in the set of monomials with $W$-degree less than or equal to $d_{i}$.

For complete intersection systems, we compute a lexicographical
Gröbner basis, using a two-steps strategy in Magma, with algorithm~\F4
as a first step (first block of lines in Table~\ref{tab:gen-DRL}) and
algorithm~\FGLM for the change of ordering (Table~\ref{tab:gen-lex}).

For over-determined systems, we compute a Gröbner basis for the
\grevlex ordering, using algorithm~\F4 from Magma (second block of
lines in Table~\ref{tab:gen-DRL}). 

For positive-dimensional systems, we compute a basis for an elimination order, using a two-steps strategy with FGb%
\footnote{The Gröbner basis algorithms from Magma seem to behave strangely with elimination orders, as seen in the detailed logs, and it coincides with significant slowdowns.
  This behavior was not observed on other implementations of the same algorithms: \F4 from FGb and Buchberger from Singular~\citep{Singular}.
  For example, for the system in the first line of table~\ref{tab:gen-elim}, without the weights, with Magma's \F4 algorithm, the first degree fall comes at step 4, and the algorithm needs more than 66 steps to compute the basis. With FGb's implementation of \F4 in Maple, the first degree fall appears at step 13, and the algorithm finishes at step 32.}%
: first we compute a \grevlex basis with algorithm~\F4 (third block of lines in Table~\ref{tab:gen-DRL}), and then we compute a basis for the wanted elimination order, again with \F4 (Table~\ref{tab:gen-elim}).
In this table, the second column describes what variables we eliminate: for example, $3$ means that we eliminate the first $3$ variables, while $1 \rightarrow 3$ means that we first eliminate the first variable, then the next $2$ variables, again resulting in a basis eliminating the first $3$ variables.

For algorithm~\F4 with the \grevlex ordering, the behavior we observe
is coherent with the previous complexity studies: we observe some
speed-ups when taking into account the weighted structure of the
system, and these speed-ups seem to increase with the
weights. However, the speed-ups cannot be expected to match rigorously
the ones predicted by the complexity bounds, because the systems are
usually not regular for the standard strategy. Experiments also
confirm that it is more effective to order the variables with highest
weight first.

For the lexicographical ordering with \FGLM, we also observe some
speed-ups when applying the weights (we will observe this behavior
again in Section~\ref{sec:Discr-logar-probl}). These differences are
not explained by the theoretical complexity bounds, since both ideals
have the same degree in each case. However, it appears that the slower
\FGLM runs are those where the \FGLM matrix is denser, and that this
difference in density seems to match quantitatively the speed-ups we
observe.

Finally, for elimination bases, the results are similar to what we
observed with the \grevlex ordering: when possible, one should take
into account the weights, and order the variables such that the
smallest weights are also the smallest variables. However, when
eliminating variables, the largest variables need to be the ones that
should be eliminated. If the variables need to be ordered such that
those with the smallest weights are first, in most cases, taking into account the
weighted structure is still profitable. However, if the smallest
weight is on the largest variable and there is only one such variable,
this is no
longer true (see for example the second line in
Table~\ref{tab:gen-elim}). Experiments suggest that these systems
naturally possess a good weighted structure for the weights
$(1,\dots,1)$: their construction ensures that every such polynomial of
total degree $d$ will have a large homogeneous component at degree
$d/2$, and the higher degree components will be small, and divisible
by large powers of $X_{1}$. On the other hand, with weights
$(1,2,\dots,2)$, the same polynomial will have a large $W$-homogeneous
component at $W$-degree $d$, overall leading to reductions at higher
degree (an example is given in Table~\ref{tab:components}).%

\begin{table}
 
  \caption{Benchmarks with Magma for generic systems}
  \label{tab:gen}
  
  \setlength{\abovetopsep}{1\parsep}
  \setlength{\belowbottomsep}{1\parsep}
  \sisetup{
    table-number-alignment=right,
    table-text-alignment=right
  }

  \begin{subtable}[t]{\textwidth}
    \centering
    \footnotesize
    
    \begin{tabular}{lSSS}
      \toprule							
      {Parameters}
      & {\pbox[r]{Without\\weights (\si{\second})}}
      & {\pbox[r]{With\\weights (\si{\second})}}
      & {Speed-up} \\
      \midrule							
      $n=8$, $W=(2^{6},1^{2})$, $D=(4^{8})$    & 8.0                               & 2.5                            & 3.2        \\
      $n=9$, $W=(2^{7},1^{2})$, $D=(4^{9})$    & 101.2                             & 12.5                           & 8.1        \\
      \midrule							
      $n=7$, $W=(2^{5},1^{2})$, $D=(8^{15})$   & 31.6                              & 7.5                            & 4.2        \\
      $n=7$, $W=(2^{5},1^{2})$, $D=(8^{14})$   & 29.0                              & 9.4                            & 3.1        \\
      $n=7$, $W=(2^{5},1^{2})$, $D=(8^{13})$   & 40.0                              & 12.0                           & 3.3        \\
      \midrule							
      $n=5$, $m=4$, $W=(2^{4},1)$, $D=(8^{4})$ & 2.6                               & 0.2                            & 13.0       \\
      $n=5$, $m=4$, $W=(1,2^{4})$, $D=(8^{4})$ & 2.5                               & 0.3                            & 8.3        \\
      $n=5$, $m=4$, $W=(1^3,2^2)$, $D=(4^4)$   & 23.6                              & 0.0                            & 2360.0     \\
      $n=5$, $m=4$, $W=(2^2,1^3)$, $D=(4^4)$   & 407.5                             & 0.0                            & 40750.0    \\
      \bottomrule							
    \end{tabular}

    \caption{Benchmarks for the \F4 algorithm for the \grevlex ordering}
    \label{tab:gen-DRL}
    
  \end{subtable}  

\end{table}

\begin{table}
  \ContinuedFloat
  \begin{subtable}[t]{\textwidth}
    \centering

    \footnotesize

    \begin{tabular}{lSSSS}
      \toprule									
      {Parameters}                          & {Degree}
      & {\pbox[r]{Without\\weights (\si{\second})}}
      & {\pbox[r]{With\\weights (\si{\second})}}
      & {Speed-up} \\
      \midrule									
      $n=8$, $W=(2^{6},1^{2})$, $D=(4^{8})$ & 1024.0   & 500.4                             & 495.0                          & 1.0        \\
      $n=9$, $W=(2^{7},1^{2})$, $D=(4^{9})$ & 2048.0   & 11995.8                           & 7462.1                         & 1.6        \\
      \bottomrule									
    \end{tabular}

    \caption{Benchmarks for the \FGLM algorithm (lexicographical ordering)}
    \label{tab:gen-lex}

  \end{subtable}
\end{table}

\begin{table}
  \ContinuedFloat

  \begin{subtable}[t]{\textwidth}
    \centering

    \footnotesize
    
    \begin{tabular}{ll
        S[table-comparator=true]
        SS[table-comparator=true,table-figures-integer=4]}
      \toprule									
      {Parameters}                             & {Elim. vars.}
      & {\pbox[r]{Without \\ weights (\si{\second})}}
      & {\pbox[r]{With\\weights (\si{\second})}}
      &	{Speed-up} \\
      \midrule									
      $n=5$, $m=4$, $W=(2^{4},1)$, $D=(8^{4})$ & $1$             & 120.3                             & 12.0                           &	10.0       \\
      $n=5$, $m=4$, $W=(1,2^{4})$, $D=(8^{4})$ & $1$             & 27.6                              & 30.4                           &	0.9        \\
      $n=5$, $m=4$, $W=(1^3,2^2)$, $D=(4^4)$   & $2$             & 146.3                             & 6.9                            &  21.2          \\
      $n=5$, $m=4$, $W=(1^3,2^2)$, $D=(4^4)$   & $1 \to 2$       & 162.0                             & 3.3                            &	49.1       \\
      $n=5$, $m=4$, $W=(2^2,1^3)$, $D=(4^4)$   & $1$             & >
      750                             & 0.1
      &  > 7500          \\
      $n=5$, $m=4$, $W=(2^2,1^3)$, $D=(4^4)$   & $1 \to 2$       & NA                                  & 0.1                            &  NA          \\
      $n=5$, $m=4$, $W=(2^2,1^3)$, $D=(4^4)$   & $1 \to 2 \to 3$ & NA                                  & 7.9                            &  NA          \\
      \bottomrule									
    \end{tabular}

    \caption{Benchmarks for the \F4 algorithm for elimination}
    \label{tab:gen-elim}

  \end{subtable}

\end{table}



\begin{table}
  \centering
  \caption{Size of the $W$-homogeneous components for a generic polynomial
    with $W_{0}$-degree $4$ for $W_{0}=(1,2,2,2)$}
  \label{tab:components}

  \begin{tabular}{crrr}
    \toprule
    $W$-degree & $W=(1,2,2,2)$ & $W=(1,1,2,2)$ & $W=(1,1,1,1)$ \\
    \midrule
    0      & 1             & 1             & 1             \\
    1      & 1             & 2             & 4             \\
    2      & 4             & 5             & 10            \\
    3      & 4             & 6             & 4             \\
    4      & 10            & 6             & 1             \\
    \bottomrule
  \end{tabular}
  
 \end{table}

We conclude this section with timings illustrating the consequences of
the estimates of the degree of regularity of a system, depending on
the order of the variables (Section~\ref{sec:Degr-regul-quasi}). For
this purpose, we generate a generic system of $W$-degree $(60^{4})$
with weights $(20,5,5,1)$. Then we compute a \Wgrevlex Gröbner basis
for the orders $X_{1} > \dots > X_{4}$ (smallest weights last) and for
the reverse order $X_{n} < \dots < X_{1}$. We give the degree of
regularity, the value predicted by the previous bound~\eqref{eq:91},
by the new bound~\eqref{eq:1} and by the conjectured
bound~\eqref{eq:130}, as well as the timings. This experiment was run
using algorithm \F5 from the FGb library, the results are in Table~\ref{tab:dreg-timings}.

\begin{table}
  \centering
  
  \caption{Impact of the order of the variables on the degree of
    regularity and the computation times (generic weighted homogeneous system with
    $W$-degree $(60^{4})$ w.r.t. $W=(20,5,5,1)$)}
  \label{tab:dreg-timings}

  \begin{tabular}{crrrrS[table-figures-integer=3,table-figures-decimal=1]}
    \toprule
    Order                     & $\dreg$                          & \pbox{Macaulay's\\ bound~\eqref{eq:91}} & 
    Bound~\eqref{eq:1}        & \pbox{Conjectured\\ bound~\eqref{eq:130}} & {\F5 time}                             \\
    \midrule
    $X_{1}>X_{2}>X_{3}>X_{4}$ & 210                              & 229
                              & 210                              & 210                            & 101.9 \\
    $X_{4}>X_{3}>X_{2}>X_{1}$ & 220                              & 229                            & 229
                              & 220                              & 255.5                                  \\
    \bottomrule
  \end{tabular}

\end{table}

\subsection{Discrete logarithm problem}
\label{sec:Discr-logar-probl}

Taking advantage of a weighted homogeneous structure has allowed the
authors of the article~\citep{FGHR13} to obtain significant speed-ups for solving
a system arising from the DLP on Edwards elliptic curves (\cite{Gaudry2009}).
They observed that the system of equations they had to solve has
symmetries, and rewrote it in terms of the invariants of the symmetry
group.  For a system in $n$ equations, these invariants are
\begin{equation}
  \begin{array}{lcl}
    E_{1}   & =  & e_{1}(X_{1}^{2},\dots,X_{n}^{2})   \\
    E_{2}   & = & e_{2}(X_{1}^{2},\dots,X_{n}^{2})   \\
    & \vdots &                              \\
    E_{n-1} & = & e_{n-1}(X_{1}^{2},\dots,X_{n}^{2}) \\
    E_{n} & = & e_{n}(X_{1},\dots,X_{n}).
  \end{array}
\end{equation}

The system they obtained is sparser, but does not have a good
homogeneous structure. In particular, the highest total degree
components of the system do not form a regular sequence, and in
practice, a Gröbner basis computation will follow many degree falls.

However, the system had a weighted homogeneous structure for the
weights $(2,\dots,2,1)$ (only $E_{n}$ has weight $1$), with respective
$W$-degree $(2^{n},\dots,2^{n})$.  The highest $W$-degree components
forming a sequence in simultaneous Noether position with respect to
the order $E_{1} > E_{2} > \dots > E_{n}$, one could compute a Gröbner
basis without any $W$-degree fall, with complexity bounded by the
estimates~\eqref{eq:comp-F5} and~\eqref{eq:comp-FGLM}.


\begin{table}[t]
  \caption{Benchmarks with FGb and Magma for DLP systems}
  \label{tab:benchmarks-DLP}

  \sisetup{
    table-number-alignment=right,
    table-text-alignment=right
  }

  \begin{subtable}[t]{\textwidth}
    \centering

    \footnotesize
    \begin{tabular}{ l S[table-figures-integer=6,
        table-figures-decimal=0] S[table-figures-integer=3,
        table-figures-decimal=1] S[table-figures-integer=4,
        table-figures-decimal=1] S[table-figures-integer=1,
        table-figures-decimal=1] S[table-figures-integer=4,
        table-figures-decimal=1] S[table-figures-integer=4,
        table-figures-decimal=1] S[table-figures-integer=1,
        table-figures-decimal=1] }
      \toprule
      {System} &
      {$\deg(I)$} &
      {\F5 w (\si{\second})} &
      {\F5 std (\si{\second})} &
      {\pbox[r]{Speed-up\\for \F5}} &
      {\pbox[r]{\FGLM\\w (\si{\second})}} &
      {\pbox[r]{\FGLM\\std (\si{\second})}} &
      {\pbox[r]{Speed-up\\for \FGLM}} \\
      \midrule
      \pbox{DLP Edwards: ${n=4}$, \\${W= (2^{3},1)}$, ${D= (8^4)}$} & 512 & 0.1 & 0.1 & 1.0 & 0.1 & 0.1 & 1.0 \\
      \addlinespace[2.5pt]\pbox{DLP Edwards: ${n=5}$, \\${W= (2^{4},1)}$, ${D= (16^5)}$} & 65536 & 935.4 & 6461.2 & 6.9 & 2164.4 & 6935.6 & 3.2 \\
      \bottomrule
    \end{tabular}

    \caption{Benchmarks with FGb}
    \label{tab:benchmarkFGb}

  \end{subtable}

  \begin{subtable}[t]{\textwidth}
    \centering
    \footnotesize
    \begin{tabular}{ l S[table-figures-integer=5,
        table-figures-decimal=0] S[table-figures-integer=4,
        table-figures-decimal=0] S[table-figures-integer=5,
        table-figures-decimal=0] S[table-figures-integer=1,
        table-figures-decimal=1] S[table-figures-integer=1,
        table-figures-decimal=0] S[table-figures-integer=2,
        table-figures-decimal=0] S[table-figures-integer=2,
        table-figures-decimal=0] }
      \toprule
      {System} &
      {$\deg(I)$} &
      {\F4 w (\si{\second})} &
      {\F4 std (\si{\second})} &
      {\pbox[r]{Speed-up\\for \F4}} &
      {\pbox[r]{\FGLM\\w (\si{\second})}} &
      {\pbox[r]{\FGLM\\std (\si{\second})}} &
      {\pbox[r]{Speed-up\\for \FGLM}} \\
      \midrule
      \pbox{DLP Edwards: ${n=4}$, \\${W= (2^{3},1)}$, ${D= (8^4)}$} & 512 & 1 & 1 & 1.0 & 1 & 27 & 27 \\
      \addlinespace[2.5pt]\pbox{DLP Edwards: ${n=5}$, \\${W= (2^{4},1)}$,
        ${D= (16^5)}$} & 65536 & 6044 & 56105 & 9.3 & {$\infty$} &
      {$\infty$} & {NA} \\
      \bottomrule
    \end{tabular}
    \caption{Benchmarks with Magma}
    \label{tab:benchmarkMagma}

  \end{subtable}

\end{table} 

\subsection{Polynomial inversion}
\label{sec:Polynomial-inversion}

The polynomial inversion problem consists of finding polynomial relations between
polynomials. More precisely, given a system of polynomial
equations
\begin{equation}
  \label{eq:8}
  \left\{ 
  \begin{array}{rcl}
    f_{1}(X_{1},\dots,X_{n}) & = & 0 \\
    f_{2}(X_{1},\dots,X_{n}) & = &0 \\
    & \vdots & \\
    f_{m}(X_{1},\dots,X_{n}) & = & 0\,,
  \end{array}
  \right.
\end{equation}
we want to compute all the relations of the form
\begin{equation}
  \label{eq:9}
  g_{i}(f_{1},\dots,f_{r}) = 0.
\end{equation}

One can compute these relations with Gröbner bases by computing an
elimination ideal: consider the ideal generated by the polynomials
\begin{equation}
  \label{eq:10}
  \begin{array}{rcl}
     T_{1} & -      & f_{1}(X_{1},\dots,X_{n}) \\
     T_{2} & -      & f_{1}(X_{1},\dots,X_{n}) \\
           & \vdots &                          \\
     T_{m} & -      & f_{m}(X_{1},\dots,X_{n})
   \end{array}
 \end{equation}
in $R \coloneq \KK[X_{1},\dots,X_{n},T_{1},\dots,T_{m}]$.  Order $R$
with an elimination order for the variables $X_{1},\dots,X_{n}$, that
is an order such that
\begin{gather}
  \label{eq:12}
  m_{X}(X_{1},\dots,X_{n})m_{T}(T_{1},\dots,T_{m}) <_{\mathrm{elim}}
  m'_{X}(X_{1},\dots,X_{n})m'_{T}(T_{1},\dots,T_{m})
                                               \\[5pt]
  \hspace{1cm}\iff
  \begin{cases}
    m_{X} <_{X} m'_{X}                         \\
    \text{or}                                  \\
    m_{X} = m'_{X} \text{ and } m_{T} <_{T} m'_{T}
  \end{cases}
\end{gather}
for some monomial orders $<_{X}$ and $<_{T}$.  The usual choice is a
block-\grevlex order.

This problem can benefit from being given a weighted structure
(see~\cite[sec.~6.1]{Tra1996}).  For any $i \in \{1,\dots,m\}$, let
$d_{i}$ be the degree of $f_{i}$.  By setting the weight of $T_{i}$ to
be $d_{i}$, the monomial $T_{i}$ becomes part of the highest
$W$-degree component of $T_{i} - f_{i}(X_{1},\dots,X_{n})$, giving
this equation a weighted homogeneous structure.

More precisely:
\begin{prop}
  Let $f_{1},\dots,f_{m}$ be a system of polynomials with respective degree $d_{1},\dots,d_{m}$ in $\KK[X_{1},\dots,X_{n}]$.
  Consider the algebra $R \coloneq \KK[X_{1},\dots,X_{n},T_{1},\dots,T_{m}]$, graded with the weights $W=(1,\dots,1,d_{1},\dots,d_{m})$, and consider the system $F=(T_{1}-f_{1}\mathbf{X},\dots,T_{m}-f_{m}(\mathbf{X}))$ in $R$.
  Then the system $F^{h}$ formed with the highest $W$-degree components of $F$ is in Noether position with respect to the variables $T_{1},\dots,T_{m}$, and in particular it forms a regular sequence.
\end{prop}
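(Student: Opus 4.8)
The plan is to first identify the highest $W$-degree component $f_{i}^{h}$ of each generator $T_{i}-f_{i}(\mathbf{X})$, and then to establish the Noether position by means of Proposition~\ref{prop:carac-NP}, which reduces the statement to a trivial regularity check obtained by setting the complementary variables to zero. Since the variables $X_{1},\dots,X_{n}$ all have weight $1$ while $T_{i}$ has weight $d_{i}$, every monomial occurring in $f_{i}(\mathbf{X})$ has $W$-degree equal to its total degree, hence at most $d_{i}$, whereas $T_{i}$ has $W$-degree exactly $d_{i}$. Denoting by $f_{i}^{(d_{i})}$ the homogeneous component of $f_{i}$ of total degree $d_{i}$ (its leading form), the top $W$-degree component is therefore
\[
  f_{i}^{h} = T_{i} - f_{i}^{(d_{i})},
\]
which is $W$-homogeneous of $W$-degree $d_{i}$. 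Note that $d_{i} \geq 1$ for every $i$, since the weights are positive integers, so that each $f_{i}$ is non-constant.

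To prove that $F^{h} = (f_{1}^{h},\dots,f_{m}^{h})$ is in Noether position with respect to $T_{1},\dots,T_{m}$, I would apply Proposition~\ref{prop:carac-NP} with the $m$ variables $T_{1},\dots,T_{m}$ in the role of the Noether-position variables and the $n$ variables $X_{1},\dots,X_{n}$ in the role of the complementary ones. By the equivalence \ref{item:NP1}~$\iff$~\ref{item:NP4}, it suffices to check that the sequence obtained from $F^{h}$ by setting the complementary variables to zero is regular. Now substituting $X_{1}=\dots=X_{n}=0$ annihilates each $f_{i}^{(d_{i})}$, since it is homogeneous of positive degree, so that $f_{i}^{h}$ specializes to $T_{i}$. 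The resulting sequence is thus $(T_{1},\dots,T_{m})$, which is plainly regular in $\KK[T_{1},\dots,T_{m}]$. Hence $F^{h}$ is in Noether position with respect to $T_{1},\dots,T_{m}$.

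Regularity then comes for free: by the implication \ref{item:NP1}~$\implies$~\ref{item:NP2} of the same proposition, the extended sequence $\Fext = (f_{1}^{h},\dots,f_{m}^{h},X_{1},\dots,X_{n})$ is regular, and since $F^{h}$ is an initial segment of $\Fext$ it is itself regular (an initial segment of a regular sequence being regular directly from the definition). I do not expect a genuine obstacle in this argument; the only points that require care are the bookkeeping of which variables play the role of the Noether-position variables versus the complementary ones when transcribing Proposition~\ref{prop:carac-NP}, and the observation — which is the very reason for assigning $T_{i}$ the weight $d_{i}$ — that this choice of weight is exactly what lifts $T_{i}$ into the leading $W$-form alongside $f_{i}^{(d_{i})}$.
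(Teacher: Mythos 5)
Your proposal is correct and follows essentially the same route as the paper: identify the top $W$-degree component of $T_{i}-f_{i}(\mathbf{X})$ as $T_{i}-f_{i}^{h}(\mathbf{X})$ (with $f_{i}^{h}$ the leading form of $f_{i}$), then invoke the characterization~\ref{item:NP4} of Noether position, under which the specialization $X_{1}=\dots=X_{n}=0$ yields the plainly regular sequence $(T_{1},\dots,T_{m})$. Your extra remarks on deducing regularity via \ref{item:NP1}~$\implies$~\ref{item:NP2} are a harmless elaboration of what the paper leaves implicit.
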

\begin{proof}
  By the choice of the weights, the system $F^{h}$ is defined by
  \begin{equation}
    \label{eq:16}
    F^{h} = (T_{1}-f^{h}_{1}(\mathbf{X}),\dots,T_{m}-f^{h}_{m}(\mathbf{X})),
  \end{equation}
  where for any $i \in \{1,\dots,m\}$, $f^{h}_{i}$ is the highest
  degree component of $f_{i}$.  As a consequence, by the
  characterization~\ref{item:NP4} of the Noether position, the system
  $F^{h}$ is indeed in Noether position with respect to the variables
  $T_{1},\dots,T_{m}$.
\end{proof}


In Tables~\ref{table:benchmarks-polynomial-inversion}, we present
timings for a few systems with this kind of problem:
\begin{itemize}
  \item group invariants (\cite{Sturmfels2008}): given a group, compute its fundamental
  invariants, and then the relations between these invariants. Since
  these examples can lead to very long computations, in some cases, we
  only compute the relations between the $k$ first invariants;
  \item monomials: given $m$ monomials of degree $d$ in
  $\KK[X_{1},\dots,X_{n}]$, compute the relations between them;
  \item matrix minors: given a $p \times q$ matrix of linear forms in
  $n$ indeterminates, compute all its minors of rank $r$ as
  polynomials in the $X_{i,j}$'s, and compute the relations between
  them.
\end{itemize}

In each case, we compute an elimination basis using a two-steps
strategy: first we compute a \grevlex basis (Table~\ref{tab:inv-DRL}),
then we compute the elimination basis
(Table~\ref{tab:inv-DRL-to-elim}). In Table~\ref{tab:inv-elim}, we
show some timings for the computation of the elimination basis
directly from the input system. All these experiments were run using
algorithm \F4 from Magma.


\begin{table}
  \caption{Benchmarks with Magma on some polynomial inversion systems}
  \label{table:benchmarks-polynomial-inversion}

  \setlength{\abovetopsep}{1\parsep}
  \setlength{\belowbottomsep}{1\parsep}
  \begin{subtable}[t]{\textwidth}
    \centering
    \footnotesize
    \begin{tabular}{
        l
        S[table-figures-integer=6,table-figures-decimal=1,
        table-space-text-post= \textsuperscript{a},
        table-comparator=true,
        table-number-alignment=right]
        S[table-figures-integer=3,table-figures-decimal=2]
        S[table-figures-integer=2,table-figures-decimal=1]
      }
      \toprule							
      {System}
      & {\pbox[r]{Without\\weights (\si{\second})}}
      & {\pbox[r]{With\\weights (\si{\second})}}
      & {Speed-up} \\
      \midrule							
      Cyclic invariants, $n=4$                                     & 4.2                               & 0.0                            & 140.0      \\
      Cyclic invariants, $n=5$, $k=12$                             & 2612.6                            & 54.7                           & 47.8       \\
      Cyclic invariants, $n=5$                                     & >75000\textsuperscript{a}         & 392.7                          & NA         \\
      Cyclic invariants, $n=6$, $k=14$                             & 32987.6                           & 2787.7                         & 11.8       \\
      Cyclic invariants, $n=6$, $k=15$                             & >280000\textsuperscript{a}        & 14535.4                        & NA         \\
      Dihedral invariants, $n=5$                                   & >70000\textsuperscript{a}         & 6.3                            & NA         \\
      \midrule							
      Generic monomials, $d=2$, $n=24$, $m=48$                     & 216.1                             & 0.2                            & 1350.6     \\
      Generic monomials, $d=2$, $n=25$, $m=50$                     & 14034.7                           & 0.1                            & 116955.8   \\
      Generic monomials, $d=2$, $n=26$, $m=52$                     & 14630.6                           & 0.2                            & 66502.7    \\
      Generic monomials, $d=2$, $n=27$, $m=54$                     & 8887.6                            & 0.2                            & 55547.5    \\
      Generic monomials, $d=3$, $n=11$, $m=22$                     & 370.9                             & 0.1                            & 6181.7     \\
      Generic monomials, $d=3$, $n=12$, $m=24$                     & 4485.0                            & 0.2                            & 26382.4    \\
      \midrule							
      Matrix minors, $n=5$, $7 \times 7$, $r=3$                    & 125.7                             & 93.3                           & 1.3        \\
      Matrix minors, $n=6$, $7 \times 7$, $r=3$                    & 1941.0                            & 1029.1                         & 1.9        \\
      Matrix minors, $n=6$, $8 \times 8$, $r=3$                    & 4115.8                            & 2295.8                         & 1.8        \\
      Matrix minors, $n=4$, $6 \times 6$, $r=5$                    & 612.6                             & 159.2                          & 3.8        \\
      Matrix minors, $n=4$, $7 \times 7$, $r=6$                    & 8043.3                            & 2126.9                         & 3.8        \\
      Matrix minors, $n=4$, $7 \times 10$, $r=7$                   & 69386.1                           & 43910.1                        & 1.6        \\
      \bottomrule							
      \footnotesize{a. Memory usage was over \SI{120}{\giga\byte}} &                                   &                                & 
    \end{tabular}

    \caption{First step (\F4 for the \grevlex order)}
    \label{tab:inv-DRL}
    
  \end{subtable}

\end{table}

\begin{table}
  \ContinuedFloat

  \begin{subtable}[t]{\textwidth}
    \centering
    \footnotesize
    \begin{tabular}{
        l
        S[table-figures-integer=5,table-figures-decimal=1]
        S[table-figures-integer=5,table-figures-decimal=1]
        S[table-figures-integer=2,table-figures-decimal=1]
      }
      \toprule							
      {System}
      & {\pbox[r]{Without\\weights (\si{\second})}}
      & {\pbox[r]{With\\weights (\si{\second})}}
      &	{Speed-up} \\
      \midrule							
      Cyclic invariants, $n=4$                   & 7.0                               & 0.1                            &	70.0       \\
      Cyclic invariants, $n=5$, $k=12$           & 1683.2                            & 70.7                           &	23.8       \\
      Cyclic invariants, $n=5$                   & NA                                & 382.5                          &	NA         \\
      Cyclic invariants, $n=6$, $k=14$           & 9236.4                            & 1456.0                         &	6.3        \\
      Cyclic invariants, $n=6$, $k=15$           & NA                                & 7179.7                         &	NA         \\
      Dihedral invariants, $n=5$                 & NA                                & 20.3                           &	NA         \\
      \midrule							
      Generic monomials, $d=2$, $n=24$, $m=48$   & 250.3                             & 117.4                          &	2.1        \\
      Generic monomials, $d=2$, $n=25$, $m=50$   & 13471.2                           & 15932.9                        &	0.8        \\
      Generic monomials, $d=2$, $n=26$, $m=52$   & 17599.5                           & 8054.2                         &	2.2        \\
      Generic monomials, $d=2$, $n=27$, $m=54$   & 9681.0                            & 3605.6                         &	2.7        \\
      Generic monomials, $d=3$, $n=11$, $m=22$   & 624.5                             & 199.9                          &	3.1        \\
      Generic monomials, $d=3$, $n=12$, $m=24$   & 9751.6                            & 3060.1                         &	3.2        \\
      \midrule							
      Matrix minors, $n=5$, $7 \times 7$, $r=3$  & 52.6                              & 66.6                           &	0.8        \\
      Matrix minors, $n=6$, $7 \times 7$, $r=3$  & 556.5                             & 779.1                          &	0.7        \\
      Matrix minors, $n=6$, $8 \times 8$, $r=3$  & 1257.9                            & 1714.0                         &	0.7        \\
      Matrix minors, $n=4$, $6 \times 6$, $r=5$  & 262.7                             & 328.1                          &	0.8        \\
      Matrix minors, $n=4$, $7 \times 7$, $r=6$  & 2872.2                            & 4299.8                         &	0.7        \\
      Matrix minors, $n=4$, $7 \times 10$, $r=7$ & 4728.4                            & 5485.8                         &	0.9        \\
      \bottomrule							
    \end{tabular}

    \caption{Second step (\F4 for an elimination order)}
    \label{tab:inv-DRL-to-elim}
    
  \end{subtable}
\end{table}

\begin{table}
  \ContinuedFloat
 
  \begin{subtable}[t]{\textwidth}
    \centering
   \footnotesize
    \begin{tabular}{
        l
        S[table-figures-integer=6,
        table-figures-decimal=1,
        table-comparator=true,
        table-space-text-post= \textsuperscript{b},
        table-number-alignment=right
        ]
        S[table-figures-integer=6,table-figures-decimal=1]
        S[table-figures-integer=4,table-figures-decimal=1,table-comparator=true]
      }
      \toprule							
      {System}
      & {\pbox[r]{Without\\weights (\si{\second})}}
      & {\pbox[r]{With\\weights (\si{\second})}} & {Speed-up} \\
      \midrule							
      Cyclic invariants, $n=4$                                      & 4.0                               & 0.3                            & 13.3       \\
      Cyclic invariants, $n=5$, $k=12$                              & 2705.8                            & 73.4                           & 36.9       \\
      Cyclic invariants, $n=5$                                      & >90000\textsuperscript{b} 	& 370.0                          & >243       \\
      Cyclic invariants, $n=6$, $k=14$                              & 35922.4                           & 2256.2                         & 15.9       \\
      Cyclic invariants, $n=6$, $k=15$                              & >300000\textsuperscript{b} 	& 7426.7                         & >40        \\
      Dihedral invariants, $n=5$                                    & >40000\textsuperscript{b} 	& 18.5                           & >2162      \\
      \midrule							
      Generic monomials, $d=2$, $n=24$, $m=48$                      & 216.5                             & 110.9                          & 2.0        \\
      Generic monomials, $d=2$, $n=25$, $m=50$                      & 31135.2                           & 16352.2                        & 1.9        \\
      Generic monomials, $d=2$, $n=26$, $m=52$                      & 14919.2                           & 8142.8                         & 1.8        \\
      Generic monomials, $d=2$, $n=27$, $m=54$                      & 5645.8                            & 4619.0                         & 1.2        \\
      Generic monomials, $d=3$, $n=11$, $m=22$                      & 370.1                             & 193.1                          & 1.9        \\
      Generic monomials, $d=3$, $n=12$, $m=24$                      & 4527.2                            & 2904.6                         & 1.6        \\
      \midrule							
      Matrix minors, $n=7$, $7 \times 7$, $r=3$                     & 41220.0                           & 26340.0                        & 1.6        \\
      Matrix minors, $n=7$, $8 \times 8$, $r=3$                     & 48000.0                           & 18060.0                        & 2.7        \\
      Matrix minors, $n=8$, $8 \times 8$, $r=3$                     & 711690.0                          & 390235.0                       & 1.8        \\
      Matrix minors, $n=4$, $6 \times 6$, $r=5$                     & 613.9                             & 325.4                          & 1.9        \\
      Matrix minors, $n=4$, $7 \times 7$, $r=6$                     & 8059.4                            & 3955.5                         & 2.0        \\
      Matrix minors, $n=4$, $7 \times 10$, $r=7$                    & 71067.8                           & 32721.5                        & 2.2        \\
      \bottomrule							
      \footnotesize{b. Memory usage was over \SI{120}{\giga\byte}}. &                                   &                                & 

    \end{tabular}

    \caption{Direct strategy}
    \label{tab:inv-elim}
    
  \end{subtable}
\end{table}



\clearpage 

\bibliographystyle{elsart-harv}

\begin{thebibliography}{0}
\expandafter\ifx\csname natexlab\endcsname\relax\def\natexlab#1{#1}\fi
\expandafter\ifx\csname url\endcsname\relax
  \def\url#1{\texttt{#1}}\fi
\expandafter\ifx\csname urlprefix\endcsname\relax\def\urlprefix{URL }\fi

\end{thebibliography}


\begin{thebibliography}{36}
\expandafter\ifx\csname natexlab\endcsname\relax\def\natexlab#1{#1}\fi
\expandafter\ifx\csname url\endcsname\relax
  \def\url#1{\texttt{#1}}\fi
\expandafter\ifx\csname urlprefix\endcsname\relax\def\urlprefix{URL }\fi

\bibitem[{Alfonsín(1998)}]{alfonsin1998}
Alfonsín, J. L.~R., 1998. On variations of the subset sum problem. Discrete
  Applied Mathematics 81~(1–3), 1 -- 7.
\newline\urlprefix\url{http://www.sciencedirect.com/science/article/pii/S0166218X96001059}

\bibitem[{Alfonsín(2005)}]{alfonsin2005}
Alfonsín, J. L.~R., 2005. The Diophantine Frobenius Problem. Oxford Lecture
  Series in Mathematics and Its Applications. Oxford University Press, Oxford.

\bibitem[{Bardet(2004)}]{Bar04}
Bardet, M., Dec. 2004. {\'E}tude des syst{\`e}mes alg{\'e}briques
  surd{\'e}termin{\'e}s. applications aux codes correcteurs et {\`a} la
  cryptographie. Phd thesis, Universit{\'e} Pierre et Marie Curie - Paris VI.
\newline\urlprefix\url{http://tel.archives-ouvertes.fr/tel-00449609}

\bibitem[{Bardet et~al.(2014)Bardet, Faugère, and Salvy}]{BFS14}
Bardet, M., Faugère, J.-C., Salvy, B., Sep. 2014. {On the Complexity of the F5
  Gröbner basis Algorithm}. Journal of Symbolic Computation, 1--24.
\newline\urlprefix\url{http://hal.inria.fr/hal-00915522}

\bibitem[{Bardet et~al.(2005)Bardet, Faugère, Salvy, and Yang}]{BFSY05}
Bardet, M., Faugère, J.-C., Salvy, B., Yang, B.-Y., 2005. Asymptotic behaviour
  of the degree of regularity of semi-regular polynomial systems. In:
  MEGA’05, 2005. Eighth International Symposium on Effective Methods in
  Algebraic Geometry.

\bibitem[{Becker and Weispfenning(1993)}]{becker1993grobner}
Becker, T., Weispfenning, V., 1993. Gr\"obner bases. Vol. 141 of Graduate Texts
  in Mathematics. Springer-Verlag, New York, a computational approach to
  commutative algebra, In cooperation with Heinz Kredel.
\newline\urlprefix\url{http://dx.doi.org/10.1007/978-1-4612-0913-3}

\bibitem[{Bosma et~al.(1997)Bosma, Cannon, and Playoust}]{Magma}
Bosma, W., Cannon, J., Playoust, C., 1997. The {M}agma algebra system. {I}.
  {T}he user language. Journal of Symbolic Computation 24~(3-4), 235--265,
  computational algebra and number theory (London, 1993).
\newline\urlprefix\url{http://dx.doi.org/10.1006/jsco.1996.0125}

\bibitem[{Buchberger(1976)}]{Buch76}
Buchberger, B., 1976. A theoretical basis for the reduction of polynomials to
  canonical forms. ACM SIGSAM Bulletin 10~(3), 19--29.

\bibitem[{Caboara et~al.(1996)Caboara, de~Dominicis, and Robbiano}]{CDR96}
Caboara, M., de~Dominicis, G., Robbiano, L., 1996. Multigraded hilbert
  functions and buchberger algorithm. In: Engeler, E., Caviness, B.~F.,
  Lakshman, Y.~N. (Eds.), Proceedings of the 1996 International Symposium on
  Symbolic and Algebraic Computation, {ISSAC} '96, Zurich, Switzerland, July
  24-26, 1996. {ACM}, pp. 72--78.
\newline\urlprefix\url{http://doi.acm.org/10.1145/236869.236901}

\bibitem[{Collart et~al.(1997)Collart, Kalkbrenner, and Mall}]{Collart97}
Collart, S., Kalkbrenner, M., Mall, D., 1997. Converting bases with the
  {G}r\"{o}bner walk. Journal of Symbolic Computation 24~(3-4), 465--469.
\newline\urlprefix\url{dx.doi.org/10.1006/jsco.1996.0145}

\bibitem[{Dalzotto and Sbarra(2006)}]{dalzotto2006}
Dalzotto, G., Sbarra, E., 2006. Computations in weighted polynomial rings.
  Analele Stiintifice ale Universitatii Ovidius Constanta 14(2), 31--44.
\newline\urlprefix\url{http://www.anstuocmath.ro/mathematics/pdf12/31_44_GDalzotto_ESbarra.pdf}

\bibitem[{de~Boer and Pellikaan(1999)}]{deBoerPellikaan1999}
de~Boer, M., Pellikaan, R., 1999. {Gr\"{o}bner bases for codes}. No.~4 in
  Algorithms and Computation in Mathematics. Springer, pp. 237--259.
\newline\urlprefix\url{http://www.win.tue.nl/\~{}ruudp/paper/34.pdf}

\bibitem[{Decker et~al.(2012)Decker, Greuel, Pfister, and
  Sch\"onemann}]{Singular}
Decker, W., Greuel, G.-M., Pfister, G., Sch\"onemann, H., 2012.
  {{\textsc{Singular}}} {3-1-6} --- {A} computer algebra system for polynomial
  computations. \url{http://www.singular.uni-kl.de}.

\bibitem[{Dickenstein and Emiris(2010)}]{DickensteinEmiris2010}
Dickenstein, A., Emiris, I.~Z., 2010. Solving Polynomial Equations:
  Foundations, Algorithms, and Applications, 1st Edition. Springer Publishing
  Company, Incorporated.

\bibitem[{Eisenbud(1995)}]{eisenbud95}
Eisenbud, D., 1995. Commutative algebra. Vol. 150 of Graduate Texts in
  Mathematics. Springer-Verlag, New York, with a view toward algebraic
  geometry.
\newline\urlprefix\url{http://dx.doi.org/10.1007/978-1-4612-5350-1}

\bibitem[{Faug{\`e}re(1999)}]{F99a}
Faug{\`e}re, J.-C., 1999. A new efficient algorithm for computing {G}r\"obner
  bases {$(F_4)$}. Journal of Pure and Applied Algebra 139~(1-3), 61--88,
  effective methods in algebraic geometry (Saint-Malo, 1998).
\newline\urlprefix\url{http://dx.doi.org/10.1016/S0022-4049(99)00005-5}

\bibitem[{Faug{\`e}re(2002)}]{Fau02a}
Faug{\`e}re, J.-C., 2002. A new efficient algorithm for computing {G}r\"obner
  bases without reduction to zero {$(F_5)$}. In: Proceedings of the 2002
  {I}nternational {S}ymposium on {S}ymbolic and {A}lgebraic {C}omputation. ACM,
  New York, pp. 75--83 (electronic).
\newline\urlprefix\url{http://dx.doi.org/10.1145/780506.780516}

\bibitem[{Faug{\`e}re et~al.(1993)Faug{\`e}re, Gianni, Lazard, and Mora}]{FGLM}
Faug{\`e}re, J.-C., Gianni, P., Lazard, D., Mora, T., 1993. Efficient
  computation of zero-dimensional {G}r\"obner bases by change of ordering.
  Journal of Symbolic Computation 16~(4), 329--344.
\newline\urlprefix\url{http://dx.doi.org/10.1006/jsco.1993.1051}

\bibitem[{Faug\`ere et~al.(2013)Faug\`ere, {Safey El Din}, and Verron}]{FSV13}
Faug\`ere, J.-C., {Safey El Din}, M., Verron, T., 2013. {On the complexity of
  computing {G}r\"obner bases for quasi-homogeneous systems}. In: {Proceedings
  of the 2013 International Symposium on Symbolic and Algebraic Computation}.
  ISSAC '13. ACM, New York, NY, USA.

\bibitem[{Faugère(2010)}]{F10c}
Faugère, J.-C., Sep. 2010. {FGb: A Library for Computing Gröbner Bases}. In:
  Fukuda, K., Hoeven, J., Joswig, M., Takayama, N. (Eds.), {Mathematical
  Software - ICMS 2010}. Vol. 6327 of Lecture Notes in Computer Science.
  Springer Berlin / Heidelberg, Berlin, Heidelberg, pp. 84--87.
\newline\urlprefix\url{http://www-polsys.lip6.fr/~jcf/Papers/ICMS.pdf}

\bibitem[{Faugère et~al.(2013)Faugère, Gaudry, Huot, and Renault}]{FGHR13}
Faugère, J.-C., Gaudry, P., Huot, L., Renault, G., 2013. Using symmetries in
  the index calculus for elliptic curves discrete logarithm. Journal of
  Cryptology, 1--41.
\newline\urlprefix\url{http://dx.doi.org/10.1007/s00145-013-9158-5}

\bibitem[{Fr{\"o}berg(1985)}]{Fro85}
Fr{\"o}berg, R., 1985. {An inequality for Hilbert series of graded algebras}.
  Mathematica Scandinavica 56, 117--144.
\newline\urlprefix\url{http://eudml.org/doc/166929}

\bibitem[{Gaudry(2009)}]{Gaudry2009}
Gaudry, P., 2009. Index calculus for abelian varieties of small dimension and
  the elliptic curve discrete logarithm problem. Journal of Symbolic
  Computation 44~(12), 1690 -- 1702, gröbner Bases in Cryptography, Coding
  Theory, and Algebraic Combinatorics.
\newline\urlprefix\url{http://www.sciencedirect.com/science/article/pii/S074771710800182X}

\bibitem[{Grayson and Stillman(2014)}]{Macaulay2}
Grayson, D.~R., Stillman, M.~E., 2014. Macaulay2, a software system for
  research in algebraic geometry. Available at
  \url{http://www.math.uiuc.edu/Macaulay2/}.

\bibitem[{Guerrini and Rimoldi(2009)}]{GuerriniRimoldi2009}
Guerrini, E., Rimoldi, A., 2009. Fglm-like decoding: from fitzpatrick’s
  approach to recent developments. In: Sala, M., Sakata, S., Mora, T.,
  Traverso, C., Perret, L. (Eds.), Gröbner Bases, Coding, and Cryptography.
  Springer Berlin Heidelberg, pp. 197--218.
\newline\urlprefix\url{http://dx.doi.org/10.1007/978-3-540-93806-4_12}

\bibitem[{Leonard(2009)}]{Leonard2009}
Leonard, D.~A., 2009. A weighted module view of integral closures of affine
  domains of type i. Advances in Mathematics of Communications 3~(1), 1--11.
\newline\urlprefix\url{http://aimsciences.org/journals/displayArticlesnew.jsp?paperID=3953}

\bibitem[{Lucas(1891)}]{Lucas1891}
Lucas, {\'E}., 1891. Th{\'e}orie des nombres. Vol.~1 of Th{\'e}orie des
  nombres. Gauthier-Villars et fils.

\bibitem[{Milne(2012)}]{MilneAG}
Milne, J.~S., 2012. Algebraic geometry (v5.22). Available at
  \url{www.jmilne.org/math/}.

\bibitem[{Moreno-Soc\'ias(1991)}]{MS91}
Moreno-Soc\'ias, G., 1991. Autour de la fonction de {H}ilbert-{S}amuel
  (escaliers d'idéaux polynomiaux). Ph.D. thesis.
\newline\urlprefix\url{http://cat.inist.fr/?aModele=afficheN\&amp;cpsidt=151245}

\bibitem[{Moreno-Soc{\'i}as(1996)}]{MS96}
Moreno-Soc{\'i}as, G., 1996. Revlex standard bases of generic complete
  intersections. Technical report.

\bibitem[{Moreno-Soc\'ias(2003)}]{MS03}
Moreno-Soc\'ias, G., 2003. {Degrevlex {G}r{\"o}bner bases of generic complete
  intersections}. Journal of Pure and Applied Algebra 180, 263--283.
\newline\urlprefix\url{http://www.sciencedirect.com/science/article/pii/S0022404902002979}

\bibitem[{Niven et~al.(1991)Niven, Zuckerman, and Montgomery}]{Niven1991}
Niven, I.~M., Zuckerman, H.~S., Montgomery, H.~L., 1991. An introduction to the
  theory of numbers. Wiley.

\bibitem[{Pardue(2010)}]{Pardue2010}
Pardue, K., 2010. Generic sequences of polynomials. Journal of Algebra 324~(4),
  579--590.
\newline\urlprefix\url{http://dx.doi.org/10.1016/j.jalgebra.2010.04.018}

\bibitem[{Reid et~al.(1991)Reid, Roberts, and Roitman}]{ReidRobertsRoitman1991}
Reid, L., Roberts, L.~G., Roitman, M., 1991. On complete intersections and
  their {H}ilbert functions. Canadian Mathematical Bulletin 34~(4), 525--535.
\newline\urlprefix\url{http://dx.doi.org/10.4153/CMB-1991-083-9}

\bibitem[{Sturmfels(2008)}]{Sturmfels2008}
Sturmfels, B., 2008. Algorithms in Invariant Theory (Texts and Monographs in
  Symbolic Computation), 2nd Edition. Springer Publishing Company,
  Incorporated.

\bibitem[{Traverso(1996)}]{Tra1996}
Traverso, C., 1996. {Hilbert Functions and the Buchberger Algorithm}. Journal
  of Symbolic Computation 22~(4), 355 -- 376.
\newline\urlprefix\url{http://www.sciencedirect.com/science/article/pii/S0747717196900565}

\end{thebibliography}



\end{document}
